\newif\iflong
\newif\ifshort
\definecolor{mygrey}{rgb}{0.9,0.9,0.9}
\definecolor{darkgreen}{RGB}{0,100,0}
\renewcommand{\citet}[1]{\cite{#1}}
\theoremstyle{plain}
\newtheorem{theorem}{Theorem}
\newtheorem{proposition}[theorem]{Proposition}
\newtheorem{observation}[theorem]{Observation}
\newtheorem{step}{Step}
\newtheorem{lem}[theorem]{Lemma}
\crefname{step}{Step}{Steps}
\crefname{lem}{Lemma}{Lemmas}
\crefname{observation}{Observation}{Observations}
\crefname{proposition}{Proposition}{Propositions}
\crefname{prop}{Property}{Properties}
\newcommand{\yes}{\textsc{yes}\xspace}
\renewcommand{\P}{\ensuremath{\mathcal{P}}}
\newcommand{\FPT}{FPT\xspace}
\newcommand{\W}[1]{W[\ensuremath{#1}]\xspace}
\newcommand{\poly}{\ensuremath{\operatorname{poly}}}
\DeclareMathOperator{\tw}{tw}
\DeclareMathOperator{\dist}{dist}
\DeclareMathOperator{\diam}{diam}
\newcommand{\prob}[5]{%
  \begingroup
  \par\medskip
  \noindent \textsc{#1}\nopagebreak[4]
  \par\noindent\hangindent=\parindent\textit{#2}  #3
  \par\noindent\hangindent=\parindent\textit{#4}  #5
  \par  \medskip
  \endgroup
}
\newcommand{\decprob}[3]{\prob{#1}{Input:}{#2}{Question:}{#3}}
\newcommand{\msetsc}{\textsc{Minimum Shared Edges}}
\newcommand{\MSE}{\textsc{MSE}\xspace}
\newcommand{\MSEl}{\textsc{Minimum Shared Edges}\xspace}
\newcommand{\acrmse}[1]{\textsc{MSE(#1)}}
\newcommand{\AMSEl}{\textsc{Almost Minimum Shared Edges}\xspace}
\newcommand{\AMSE}{\textsc{AMSE}\xspace}
\newcommand{\MCC}{\textsc{MCC}\xspace}
\newcommand{\MCCl}{\textsc{Multicolored Clique}\xspace}
\newcommand{\mvtsc}{\textsc{Minimum Vulnerability}}
\newcommand{\sctsc}{\textsc{Set Cover}}
\newcommand{\acrsc}{\textsc{SC}}
\newcommand{\Sids}{Sidon set\xspace}
\newcommand{\Sidss}{Sidon sets\xspace}
\title{The Parameterized Complexity of the Minimum Shared Edges Problem\footnote{Till Fluschnik, Stefan Kratsch, and Manuel Sorge gratefully acknowledge support by the DFG, projects DAMM (NI 369/13-2), PRE-
MOD (KR 4286/2-1), and DAPA (NI 369/12-2), respectively. An extended abstract appeared in: Proceedings of the 35th IARCS Annual Conference on Foundations of Software Technology and Theoretical Computer Science (FSTTCS 2015), volume 45, pages 448--462.}}
\author[1]{Till Fluschnik}
\author[2]{Stefan Kratsch}
\author[1]{Rolf Niedermeier}
\author[1]{Manuel~Sorge}
\affil[1]{\small{Institut f\"{u}r Softwaretechnik und Theoretische Informatik, TU~Berlin, Germany, \texttt{\{till.fluschnik, rolf.niedermeier, manuel.sorge\}@tu-berlin.de}}}
\affil[2]{\small{Institut f\"{u}r Informatik, Universit\"{a}t Bonn, Germany, \texttt{kratsch@cs.uni-bonn.de}}}
\begin{document}

\maketitle

\begin{abstract}
We study the NP-complete \textsc{Minimum Shared Edges (MSE)} problem. Given an undirected graph, a source and a sink vertex, and two integers $p$ and $k$, the question is whether there are $p$~paths in the graph connecting the source with the sink and sharing at most $k$ edges. Herein, an edge is shared if it appears in at least two paths. 
We show that MSE is %
W[1]-hard when parameterized by the treewidth of the input graph and the number~$k$ of shared edges combined. We show that MSE is fixed-parameter tractable with respect to~$p$, but does not admit a polynomial-size kernel (unless $\text{NP}\subseteq\text{coNP/poly}$). In the proof of the fixed-parameter tractability of MSE parameterized by~$p$, we employ the treewidth reduction technique due to Marx, O'Sullivan, and Razgon [ACM TALG 2013].

\noindent
\textbf{Key words:} Fixed-parameter tractability; W-hardness; kernelization; 
tree decompositions of graphs; treewidth reduction technique; VIP routing.
\end{abstract}


\section{Introduction}
We consider the parameterized complexity of the following basic routing problem.

\decprob{\MSEl (\MSE)}{A graph $G=(V,E)$, $s,t\in V$, $p\in\mathbb{N}$ and $k\in\mathbb{N}_0$.}{Is there a $(p, s, t)$-routing in $G$ in which at most $k$ edges are shared?}%
\noindent Herein, a \emph{$(p, s, t)$-routing} is a set of $s$-$t$ paths with cardinality~$p$, and an edge is called \emph{shared} if it is contained in at least two of the paths in the routing. If $s$ and $t$ are understood from the context, we simplify notation and speak of a $p$-routing and call the paths it contains \emph{routes}. %
\MSEl is polynomial-time solvable with~$k=0$, while it becomes NP-hard for general values of~$k$~\cite{Flu15}. 

\looseness=-1 \MSEl has two natural applications. One is to route an important person which is under threat of attack from $s$ to $t$ in a street network. In order to confound attackers, $p-1$ additional, empty convoys are routed, and guards are placed on streets that are shared by routes. \MSEl then minimizes the costs to place guards~\cite{OmranSZ13}.

A second application arises from finding a resilient way of communication between two servers $s$ and $t$ in an interconnection network, assuming that $p-1$ faulty connections may be present that block or alter the communicated information. Finding $p$ edge-disjoint paths ensures at least one piece of information arrives unscathed. When this is not possible, and if we can ensure that a link is not faulty by expending some fixed cost per link, then \MSEl is the problem of finding a resilient way of communication that minimizes the overall costs~\cite{Pel96}.

We study \MSEl{} from a parameterized complexity perspective, that is, for certain parameters $\ell$ of the inputs (of size~$r$), we identify algorithms with running time $f(\ell) \cdot \poly(r)$ or we prove that such algorithms are unlikely to exist. There are two natural parameters for \MSEl: the number~$p$ of routes and the number~$k$ of shared edges. Both of them can be reasonably assumed to be small in applications. As we will see, there is also a connection between $p$ and the treewidth~$\tw$ of~$G$.

\subparagraph*{Related Work.}
Omran et al.~\cite{OmranSZ13} introduced \MSEl{} on directed graphs and showed NP-hardness by a reduction from {\sc Set Cover}. The reduction also implies W[2]-hardness with respect to the number~$k$ of shared arcs in this directed case. Undirected \MSEl admits an XP-algorithm with respect to treewidth, more specifically, it can be solved in $O((n + m)\cdot (p+1)^{2^{\omega\cdot(\omega+1)/2}})$ time%
~\cite{YeLLZ13}. 

Assadi et al.~\cite{AssadiENYZ12} introduced a generalization of directed \MSEl, called \mvtsc{}, which additionally considers arc weights (the cost of sharing an arc), arc capacities (an upper bound on the number of routes supported by an arc) and a share-threshold for each arc (the threshold of routes, possibly other than two, after which the arc becomes shared). Directed \mvtsc{} admits an XP-algorithm with respect to the number~$p$ of routes~\cite{AssadiENYZ12}. Undirected \mvtsc{} is NP-hard even on bipartite series-parallel graphs, but admits a pseudo-polynomial-time algorithm on bounded treewidth graphs~\cite{AokiHHIKZ14}. Furthermore, \mvtsc{} is fixed-parameter tractable with respect to~$p$ on chordal graphs~\cite{AokiHHIKZ14}.
Recently, Fluschnik and Sorge~\cite{FluschnikSorge16} showed that \MSE remains NP-hard on planar graphs of maximum degree four.

There are also several results regarding approximation algorithms and lower bounds~\cite{AssadiENYZ12,OmranSZ13}; however, our focus is on exact algorithms.

\subparagraph*{Our Contributions.} 
First we show that \MSEl{} is NP-complete and W[2]-hard with respect to the number~$k$ of shared edges (\cref{sec:msenphard}).
We then prove two main results, namely, that \MSEl is fixed-parameter tractable (FPT) with respect to the number~$p$ of routes (\cref{sec:twred}) and that it is W[1]-hard with respect to the treewidth~$\tw{}$ and the number~$k$ of shared edges combined (\cref{sec:whard-tw}). 
Moreover, complementing the fixed-parameter tractability result with respect to~$p$, we show that there is no polynomial-size problem kernel with respect to~$p$ (\cref{sec:nopoly}).

The FPT result with respect to~$p$ is obtained by modifying the input graph so that the resulting graph has treewidth bounded by some (exponential) function of~$p$ using the treewidth reduction technique~\cite{MarxOR13} (see \cref{sec:twred}). Then we apply a dynamic program which also is an FPT algorithm with respect to $p$ and~$\tw$~(\cref{sec:dp}). For this purpose, we design a new dynamic program rather than using the ones from the literature~\cite{AokiHHIKZ14,YeLLZ13}. In comparison, ours yields an improved running time in the FPT algorithm with respect to~$p$, that is, the dependence is doubly exponential on~$p$ rather than triply exponential. Our result complements the known FPT algorithm for undirected \mvtsc{} on chordal graphs, parameterized by~$p$. Treewidth reduction has lately also found application in a wide varity of problems, for example, in graph coloring~\cite{ECM13}, graph partitioning~\cite{BESS15}, and arc routing~\cite{GJS14}.

As mentioned, our second main result is that \MSEl is W[1]-hard with respect to the treewidth~$\tw$ and the number~$k$ of shared edges combined. This provides a corresponding lower bound for the known polynomial-time algorithms on constant treewidth graphs for \MSEl and for the more general undirected \mvtsc{}~\cite{AokiHHIKZ14,YeLLZ13}. More precisely, the exponents in the running time depend on~$\tw$ and our result shows that removing this dependence is impossible unless FPT${} = {}$W[1]. Interestingly, the known dynamic programs on tree decompositions keep track of the number of routes over certain separators in their tables. Our hardness result shows that information of this sort is crucial, that is, it is unlikely that there are dynamic programs with table entries relying only on information bounded by the treewidth.

%


\section{Preliminaries}\label{sec:prelim}
We use standard notation from parameterized complexity~\cite{DowneyF13,FG06,Nie06,CyganFKLMPPS15} and graph theory~\cite{Diestel10,west}.

\subparagraph*{Graphs and Tree Decompositions.} Unless stated otherwise, all graphs are without parallel edges or loops. When it is not ambiguous, we use $n$ for the number of vertices of a graph and $m$ for the number of edges. 

Let $G=(V,E)$ be an undirected graph. We write $V(G)$ for the vertex set of graph~$G$ and $E(G)$ for the edge set of graph~$G$. We define the size of graph~$G$ as $|G|:=|V(G)|+|E(G)|$. \ifshort{} For a vertex set~$W\subseteq V(G)$, we denote by~$G[W]$ the subgraph of~$G$ \emph{induced} by the vertex set~$W$. For an edge set $F\subseteq E(G)$, we denote by $G[F]$ the subgraph of~$G$ \emph{induced} by the edge set $F$.\fi{}\iflong{} For a vertex set $W\subseteq V(G)$, we denote by~$G[W]$ the subgraph of $G$ with vertex set $\{v\in V(G)\mid v\in W\}$ and edge set $\{\{v,w\}\in E(G)\mid v,w\in W\}$. We say that $G[W]$ is the subgraph of~$G$ \emph{induced} by the vertex set $W$. For an edge set $F\subseteq E(G)$, we denote by $G[F]$ the subgraph of $G$ with vertex set $\{v\in V(G)\mid (e\in F)\wedge (v\in e)\}$ and edge set $\{e\in E(G)\mid e\in F\}$. We say that $G[F]$ is the subgraph of~$G$ \emph{induced} by the edge set $F$. \fi{} \iflong{}For an edge $e\in E$, we denote by $G/\{e\}$ the contraction of edge~$e$ in~$G$, and we denote by $G\backslash \{e\}$ the deletion of edge~$e$ in~$G$ (we write $G/e$ and $G\backslash e$ for short). Consequently, for a set of edges $F\subseteq E$ we write $G/F$ and $G\backslash F$ for the contraction and the deletion of the edges in~$F$, respectively.\fi{}\ifshort{}We write $G/F$ and $G\backslash F$ for the contraction and the deletion of the edges in~$F$, respectively (we write $G/e$ and $G\backslash e$ for short if $F=\{e\}$).\fi{} We write $\Delta(G)$ to denote the maximum degree of graph $G$ and $\diam(G)$ to denote the diameter of~$G$. 

A \emph{tree decomposition} of a graph $G$ is a tuple $\mathbb{T}:=(T,(B_\alpha)_{\alpha\in V(T)})$ of a tree $T$ and family $(B_\alpha)_{\alpha\in V(T)}$ of sets $B_\alpha\subseteq V(G)$, called \emph{bags}, such that~$V(G)=\bigcup_{\alpha\in V(T)} B_\alpha$,%
\begin{compactenum}[(i)]
\item for every edge $e\in E(G)$ there exists an~$\alpha\in V(T)$ such that $e\subseteq B_\alpha$ and
\item for each $v\in V(G)$, the graph induced by the node set $\{\alpha\in V(T)\mid v\in B_\alpha\}$ is a tree.
\end{compactenum}
The \emph{width} $\omega$ of a tree decomposition $\mathbb{T}$ of a graph $G$ is defined as $\omega(\mathbb{T}):=\max \{|B_\alpha|-1\mid \alpha\in V(T)\}$. The \emph{treewidth}~$\tw(G)$ of a graph~$G$ is the minimum width over all tree decompositions of~$G$. 
A tree decomposition $\mathbb{T}=(T,(B_\alpha)_{\alpha\in V(T)})$ is a \emph{nice tree decomposition with introduce edge nodes} if the following conditions hold.
\begin{compactenum}[(i)]
\item The tree~$T$ is rooted and binary.
\item For all edges in $E(G)$ there is exactly one \emph{introduce edge node} in $\mathbb{T}$, where an introduce edge node is a node~$\alpha$ in the tree
  decomposition~$\mathbb{T}$ of~$G$ labeled with an edge~$\{v,w\}\in
  E(G)$ with~$v,w\in B_\alpha$ that has exactly one child
  node~$\alpha'$; furthermore~$B_\alpha=B_{\alpha'}$.
\item Each node
  $\alpha\in V(T)$ is of one of the following types:
  \begin{compactitem}
  \item introduce edge node;
  \item \emph{leaf node}: $\alpha$ is a leaf of $T$ and $B_\alpha=\emptyset$;
  \item \emph{introduce vertex node}: $\alpha$ is an inner node of $T$ with
    exactly one child node $\beta\in V(T)$; furthermore~$B_\beta\subseteq
    B_\alpha$ and~$|B_\alpha\backslash B_\beta|=1$;
  \item \emph{forget node}: $\alpha$ is an inner node of $T$ with exactly one
    child node $\beta\in V(T)$; furthermore~$B_\alpha\subseteq B_\beta$
    and~$|B_\beta\backslash B_\alpha|=1$;
  \item \emph{join node}: $\alpha$ is an inner node of $T$ with exactly two
    child nodes $\beta,\gamma\in V(T)$; furthermore~$B_\alpha = B_\beta =
    B_\gamma$.
  \end{compactitem}
\end{compactenum}
A given tree decomposition can be modified in linear time to fulfill the above constraints; moreover, the number of nodes in such a tree decomposition of width~$\omega$ is $O(\omega\cdot n)$~\cite{Kloks94,CyganNPPRW11}.%

\iflong{}\subparagraph*{Flows, Cuts, and Paths.}\fi{}
\ifshort{}\subparagraph*{Cuts and Paths.}\fi{}

Let $G$ be an undirected, connected graph. A \emph{cut} $C\subseteq E$ is a set of edges such that the graph~$G\backslash C$ is not connected. 
Let $s,t\in V(G)$~be two vertices in~$G$. An \emph{$s$-$t$~cut}~$C$ is a cut such that the vertices~$s$ and~$t$ are not connected in~$G\backslash C$. A \emph{minimum~$s$-$t$~cut} is an $s$-$t$~cut~$C$ such that $|C|=\min |C'|$, where the minimum is taken over all $s$-$t$~cuts~$C'$ in~$G$. An $s$-$t$~cut~$C$ in $G$ is \emph{minimal} if for all edges~$e\in C$ it holds that $C\backslash\{e\}$ is not an $s$-$t$~cut in~$G$. 

A \emph{path} is a connected graph with exactly two vertices of degree one and no vertex of degree at least three. We call the vertices with degree one the \emph{endpoints} of the path. The \emph{length} of a path is defined as the number of edges in the path. For two distinct vertices $s,t\in V(G)$, we refer to the path with endpoints $s$ and~$t$ (as subgraph of~$G$) as $s$-$t$~path in~$G$. \iflong{}An $s$-$t$~path in~$G$ is a \emph{shortest $s$-$t$~path} in~$G$ if there is no $s$-$t$~path in~$G$ of smaller length. We denote by~$\dist_G(s,t)$ the length of a shortest $s$-$t$~path in $G$.\fi{}

\iflong{}
A graph~$G$ has \emph{edge capacities} if there is a function $c:E(G)\to \mathbb{R}_{\geq 0}$ that maps each edge in~$G$ to a number in~$\mathbb{R}_{\geq 0}$, where $\mathbb{R}_{\geq 0}$ denotes the non-negative real numbers. For an edge~$e\in E(G)$, we say that $c(e)$ is the \emph{capacity} of edge~$e$ in $G$. We say that graph~$G$ has \emph{unit edge capacities} if $c(e)=1$ for all $e\in E(G)$. In this work, if we consider a graph with edge capacities, then we always consider a graph with unit edge capacities. 

Let $D$ be an directed graph with edge capacities $c:E(D)\to\mathbb{R}_{\geq 0}$ and let $s,t\in V(D)$ be two vertices in~$D$. An \emph{$s$-$t$~flow} in $D$ is a function $f:E(D)\to \{0,1\}$ such that 
\begin{compactenum}[(i)]
\item $f(e)\leq c(e)$ for all $e\in E(D)$,
\item $\sum_{w\in V(D):(v,w)\in E(D)} f((v,w))= \sum_{w\in V(D):(w,v)\in E(D)} f((w,v))$ for all $v\in V(D)\backslash\{s,t\}$, and
\item $\sum_{w\in V(D):(w,t)\in E(D)} f((w,t))-\sum_{w\in V(D):(t,w)\in E(D)} f((t,w)) \geq 0$. 
\end{compactenum}
The \emph{value} of an $s$-$t$~flow~$f$ in~$D$ is defined as $$|f|:=\sum_{w\in V(D):(w,t)\in E(D)} f((w,t))-\sum_{w\in V(D):(t,w)\in E(D)} f((t,w)).$$ An $s$-$t$~flow~$f$ is a \emph{maximum $s$-$t$~flow} in~$D$ if there is no $s$-$t$~flow~$f'$ in~$D$ with $|f'|>|f|$. 

For an undirected graph~$G$ we call the directed graph~$D_G$ the directed version of graph~$G$ if $V(D_G)=V(G)$ and $E(D_G)= \{(u,v),(v,u)\mid \{u,v\}\in E(G)\}$. If $G$ has edge capacities $c:E(G)\to \mathbb{R}_{\geq 0}$, then $D_G$ has edge capacities $c':E(D_G)\to \mathbb{R}_{\geq 0}$ with $c'((u,v)):=c'((v,u)):=c(\{u,v\})$ for all edges $\{u,v\}\in E(G)$. We say that a function $f:E(G)\to\{0,1\}$ is an $s$-$t$~flow with value $|f|:=\sum_{w\in V(G):\{w,t\}\in E(G)} f(\{w,t\})$ in an undirected graph~$G$ with edge capacities $c:E(G)\to \mathbb{R}_{\geq 0}$ and $s,t\in V(G)$, if there is an $s$-$t$~flow~$f'$ in~$D_G$ such that $|f'|=|f|$, and for all edges~$\{u,v\}\in E(G)$ it holds that $f'((u,v))=0$ and $f'((v,u))=f(\{u,v\})$, or $f'((v,u))=0$ and $f'((u,v))=f(\{u,v\})$. An $s$-$t$~flow~$f_1$ is a maximum $s$-$t$~flow in~$G$ if there is no $s$-$t$~flow~$f_2$ in~$G$ with $|f_2|>|f_1|$. We remark that our definition of $s$-$t$~flows is close to the definition given by Goldberg and Rao~\citet{GoldbergR99}. For more information on flows, in particular on integral flows, the max-flow min-cut theorem, and Menger's theorem, we refer to the work of Kleinberg and Tardos~\citet{DBLP:books/daglib/0015106}.
\fi{}

\subparagraph*{Parameterized Complexity.}
A \emph{parameterized problem} is a set of instances $({\cal I}, \ell)$, where ${\cal I} \in \Sigma^*$ for a finite alphabet~$\Sigma$, and $\ell \in \mathbb{N}$ is the \emph{parameter}.
A parameterized problem~$Q$ is \emph{fixed-parameter tractable}, shortly \FPT, if there exists an algorithm that on input
$({\cal I}, \ell)$ decides whether $({\cal I}, \ell)$ is a yes-instance of~$Q$ in $f(\ell)\cdot|{\cal I}|^{O(1)}$ time,
 where $f$ is a computable function independent of~$|{\cal I}|$. 

 \W{t}, $t\geq1$, are classes that (amongst others) contain parameterized problems which presumably do not admit \FPT algorithms. Hardness for \W{t} can be shown by reducing from a \W{t}-hard problem, using a \emph{parameterized} reduction, that is, a many-to-one reduction that runs in \FPT time and maps any instance $(\mathcal{I}, \ell)$ to another instance $(\mathcal{I}', \ell')$ such that $\ell' \leq f(\ell)$ for some computable function~$f$. 

 A parameterized problem $Q$ is \emph{kernelizable}
 if there exists a polynomial-time self-reduction that maps an instance $({\cal I},\ell)$ of
 $Q$ to another instance $({\cal I}',\ell')$ of $Q$ such that: (1) $|{\cal I}'| \leq \lambda(\ell)$ for
 some computable function $\lambda$, (2) $\ell' \leq \lambda(\ell)$, and (3) $({\cal I},\ell)$ is a yes-instance
 of $Q$ if and only if $({\cal I}',\ell')$ is a yes-instance of $Q$. The instance
 $({\cal I}',\ell')$ is called the \emph{problem kernel} of~$({\cal I},\ell)$ and $\lambda$ is called its \emph{size}.

 %


\section{NP-Completeness and W[2]-Hardness With Respect to the Number of Shared Edges}\label{sec:msenphard}

In this section, we show that \msetsc{} is NP-complete and W[2]-hard with respect to the number $k$ of shared edges. 
To this end, we give a parameterized, polynomial reduction from the \textsc{Set Cover} problem. 

\begin{theorem}\label{theorem!kw2hard}
\msetsc{} is NP-complete and W[2]-hard with respect to the number $k$ of shared edges.
\end{theorem}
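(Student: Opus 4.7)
The plan is to prove both statements simultaneously via a polynomial parameterized reduction from \textsc{Set Cover}, which is NP-hard and W[2]-hard with respect to the solution size. Given an instance $(U = \{u_1, \ldots, u_n\}, \mathcal{S} = \{S_1, \ldots, S_m\}, k)$, I construct an \MSE instance $(G, s, t, p, k')$ with the same parameter $k' = k$ and $p = m + n$, so both hardness results follow at once from the correctness of the reduction. In $G$, for every set $S_j$ I introduce a private trunk $s - \alpha_j - a_j - b_j - \beta_j - t$ whose middle edge $e_j := \{a_j, b_j\}$ is the \emph{set edge}. For every element $u_i$ I introduce vertices $x_i, y_i$ together with the edges $\{s, x_i\}$ and $\{y_i, t\}$, and for every pair with $u_i \in S_j$ I add the \emph{membership edges} $\{x_i, a_j\}$ and $\{b_j, y_i\}$. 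Observe that $\{e_1, \ldots, e_m\}$ is an $s$-$t$ cut in $G$, so every $s$-$t$ path traverses at least one set edge.

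For the forward direction, given a set cover $\mathcal{C} \subseteq \mathcal{S}$ of size at most $k$, I route $m$ \emph{set paths} $P_j = s - \alpha_j - a_j - b_j - \beta_j - t$ and, for each element $u_i$, an \emph{element path} $Q_i = s - x_i - a_{j(i)} - b_{j(i)} - y_i - t$ through some $S_{j(i)} \in \mathcal{C}$ with $u_i \in S_{j(i)}$. Every non-set edge appears in exactly one of these $m + n$ routes, and $e_j$ is shared if and only if some $Q_i$ uses it (since $P_j$ always does), so the shared edges are indexed by $\{j(i) : i \in [n]\} \subseteq \{j : S_j \in \mathcal{C}\}$ and their number is at most $|\mathcal{C}| \leq k$. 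For the backward direction, I argue that any routing of $m + n$ paths sharing at most $k$ edges may be assumed, without loss of generality, to be in \emph{canonical form}: the $m$ paths starting at the $\alpha_j$'s are precisely the set paths $P_j$, and the $n$ paths starting at the $x_i$'s are element paths $Q_i$ using a single set edge $e_{j(i)}$ with $u_i \in S_{j(i)}$. In such a routing, the sets $\{S_{j(i)} : i \in [n]\}$ cover $U$ by construction, and each such set edge is shared between $Q_i$ and $P_{j(i)}$, so this cover has size at most the number of shared edges.

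Membership in NP is immediate (a routing can be verified and its shared edges counted in polynomial time), and the reduction is polynomial-time and preserves the parameter exactly, yielding both NP-completeness and W[2]-hardness with respect to $k$. The main obstacle will be justifying the canonicality assumption in the backward direction: any non-canonical routing contains either a start or end edge shared by two paths (which adds to the shared count while leaving another start or end edge unused), or a path that uses more than one set edge via a zigzag through the $s$-side or $t$-side. In each case one can reroute the offending path through an unused start edge or shortcut it to a single cut-crossing without increasing the number of shared edges, exploiting that $G$ is essentially bipartite between the $s$-side vertices $\{s, \alpha_j, a_j, x_i\}$ and the $t$-side vertices $\{t, \beta_j, b_j, y_i\}$ with the set edges as the only bridges, so any additional crossing or shared endpoint edge is strictly wasteful.
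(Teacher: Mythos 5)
Your overall strategy (a parameterized, polynomial reduction from \textsc{Set Cover} with $k'=k$) matches the paper's, but your concrete construction is incorrect: the backward direction fails, and not merely because the ``canonical form'' claim is hard to justify --- there are optimal routings that are genuinely non-canonical and cheaper than any canonical one. The root cause is that in your graph every edge is a single edge, so sharing a trunk edge, a membership edge, or an edge incident to $s$ or $t$ costs exactly as much as sharing a set edge. This lets many routes pile onto one ``port'' of a single trunk and ignore most element vertices entirely. Concretely, take $U=\{u_1,\dots,u_{2r}\}$, $S_1=\{u_1,\dots,u_r\}$, and $S_{1+j}=\{u_{r+j}\}$ for $j=1,\dots,r$; the minimum cover has size $r+1$. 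In your graph ($p=m+n=3r+1$) route the $r$ trunks $P_2,\dots,P_{r+1}$ (no sharing), and through the first trunk route the $2r+1$ distinct paths $s\text{-}\alpha_1\text{-}a_1\text{-}b_1\text{-}\beta_1\text{-}t$, $\;s\text{-}x_u\text{-}a_1\text{-}b_1\text{-}\beta_1\text{-}t$ for $u=1,\dots,r$, and $s\text{-}\alpha_1\text{-}a_1\text{-}b_1\text{-}y_i\text{-}t$ for $i=1,\dots,r$. Only the five edges $\{s,\alpha_1\},\{\alpha_1,a_1\},e_1,\{b_1,\beta_1\},\{\beta_1,t\}$ are shared, so for $r\geq 5$ the \MSE{} instance is a yes-instance with $k=5$ while the \textsc{Set Cover} instance with $\ell=5$ is a no-instance. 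So no rerouting argument can rescue the reduction; the construction itself is broken.

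The paper's construction is designed precisely to exclude this kind of trade-off: every edge that must never be shared is replaced by an $(\ell+1)$-chain, so sharing any of its edges already exceeds the budget $k=\ell$, and the only genuine single edges are the edges $\{w_j,t\}$ from set-vertices to $t$. On top of that, the degrees are engineered to force the routing completely --- $s$ is incident with exactly $p$ chains (so each carries exactly one route), each element vertex $v_i$ is incident with exactly $2\deg(i)$ chains (so exactly $\deg(i)$ routes pass through it), each incidence vertex $v_{i,j}$ has only three incident chains (so it carries exactly one route), and the auxiliary sink $t_i$ offers only $\deg(i)-1$ escape chains, so at least one route per element is forced through some $w_j$ and must share $\{w_j,t\}$. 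Your construction is missing both ingredients: a mechanism that makes non-designated edges unshareable, and a capacity/degree argument that forces every element to be ``witnessed'' by a shared designated edge. If you want to keep your topology, the minimal fix is to subdivide every non-set edge into a $(k+1)$-chain and then redo the degree counting; at that point you will essentially have rederived the paper's gadget.
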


In the proof of~\Cref{theorem!kw2hard} we provide a reduction from the following problem.

\decprob{Set Cover (SC)}{A set $X$, a set of sets $\mathcal{C}\subseteq 2^X$, and an integer $\ell \in \mathbb{N}_0$.}{Are there sets $C_1,\ldots,C_{\ell'}\in \mathcal{C}$ with $\ell'\leq \ell$ such that $X = \bigcup_{i=1}^{\ell'} C_i$?}
\acrsc{} parameterized by~$\ell$ is well-known to be W[2]-complete~\cite{DowneyF13}. 
Omran et al.~\citet{OmranSZ13} showed that \msetsc{} on directed graphs is NP-hard using a reduction from \sctsc{}. 
Since their reduction is also a parameterized reduction from \acrsc{} parameterized by $\ell$ to \msetsc{} on directed graphs parameterized by the number~$k$ of shared edges, they showed implicitly that \msetsc{} on directed graphs is W[2]-hard with respect to~$k$.
Here, we present a parameterized reduction from \acrsc{} parameterized by~$\ell$ to \msetsc{} (on \emph{undirected} graphs) parameterized by~$k$. In the following, we call a path of length $m\in\mathbb{N}$ an $m$-\emph{chain}.%

\begin{proof}[Proof of \Cref{theorem!kw2hard}]
Let $(X,\mathcal{C},\ell)$ be an instance of \acrsc. Let $\deg(x)$ be the number of sets in~$\mathcal{C}$ containing element $x\in X$, that is, $\deg(x):=|\{C\in \mathcal{C}\mid x\in C\}|$ for every $x\in X$. We construct an instance~$(G,s,t,p,k)$ of \msetsc{} with $p=|\mathcal{C}|+\sum_{x\in X}\deg(x)$ and $k=\ell$ as follows. 

\emph{Construction.} The construction is illustrated in \Cref{fig:reduction}.
 Initially, let $G$ be an empty graph, that is $V(G)=E(G)=\emptyset$. Next, we add the vertices $s$ and $t$ to the vertex set $V(G)$ of graph $G$. Then, we add to $V(G)$ the following vertex sets:
\begin{itemize}
\item $V_X=\{v_i\mid i\in X\}$, the set of vertices corresponding to the elements of $X$,
\item $V_C=\{w_j\mid C_j\in \mathcal{C}\}$, the set of vertices corresponding to the sets in $\mathcal{C}$,
\item $V_D = \{v_{i,j}\mid (i\in X) \wedge (C_j\in \mathcal{C}) \wedge (i\in C_j)\}$, the set of vertices corresponding to the relation of the elements in~$X$ with the sets in~$\mathcal{C}$, i.e.\ a vertex $v_{i,j}$ is in $V_D$ if there is an element $i\in X$ and a set $C_j\in \mathcal{C}$ such that $i\in C_j$, and
\item $V_T=\{t_i \mid i\in X\}$.
\end{itemize}
We connect each $v_{i,j}\in V_D$ via an $(\ell+1)$-chain with~$v_i\in V_X$, with~$t_i\in V_T$ and with~$w_j\in V_C$.
Next, we connect vertex~$s$ with every vertex~$w \in V_C$ via an $(\ell+1)$-chain and with each $v_i\in V_X$ via $\deg(i)$ $(\ell+1)$-chains.
Finally, we connect vertex~$t$ with each $w \in V_C$ via a single edge each and with each $t_i\in V_T$ via $\deg(i)-1$ many $(\ell+1)$-chains.
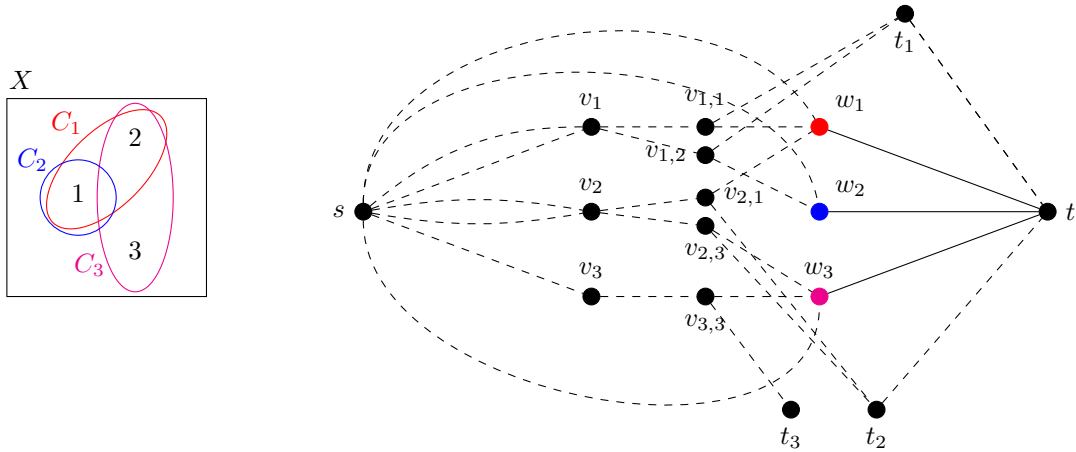
\begin{figure}[!t]
	\centering
\begin{tikzpicture}[x=0.75cm, y=0.75cm]

\node (s) at (0,0) [shape=circle,fill, scale=0.67, label=left:$s$, draw]{};
\node (t) at (12,0) [shape=circle, fill, scale=0.67, label=right:$t$, draw]{};

\node (v_i1) at (4,1.5)  [shape=circle,fill, scale=0.67, label=above:$v_1$,  draw]{};
\node (v_i2) at (4,0)  [shape=circle,fill, scale=0.67, label=above:$v_2$, draw]{};
\node (v_i3) at (4,-1.5)  [shape=circle,fill, scale=0.67, label=above:$v_3$, draw]{};

\node (C_i1) at (8,1.5)  [shape=circle,fill, color=red, scale=0.67, label=45:$w_1$, draw]{};
\node (C_i2) at (8,0)  [shape=circle, fill, color=blue, scale=0.67, label=45:$w_2$, draw]{};
\node (C_i3) at (8,-1.5)  [shape=circle, fill, color=magenta, scale=0.67, label=above:$w_3$, draw]{};

\node (v_{i1,j1}) at (6,1.5)  [shape=circle, fill, scale=0.67, label=above:$v_{1,1}$, draw]{};
\node (v_{i1,j2}) at (6,1)  [shape=circle,  fill, scale=0.67, label=180:$v_{1,2}$, draw]{};

\node (v_{i2,j1}) at (6,0.25)  [shape=circle, fill, scale=0.67, label=right:$v_{2,1}$, draw]{};
\node (v_{i2,j3}) at (6,-0.25)  [shape=circle,  fill, scale=0.67, label=below:$v_{2,3}$, draw]{};

\node (v_{i3,j3}) at (6,-1.5)  [shape=circle,  fill, scale=0.67, label=below:$v_{3,3}$, draw]{};

\node (t_i1) at (9.5,3.5)  [shape=circle, fill, scale=0.67, label=below:$t_1$, draw]{};
\node (t_i2) at (9,-3.5)  [shape=circle,  fill, scale=0.67, label=below:$t_2$, draw]{};
\node (t_i3) at (7.5,-3.5)  [shape=circle,  fill, scale=0.67, label=below:$t_3$, draw]{};

\coordinate [label=1] (1) at (-5,0);
\coordinate [label=2] (2) at (-4,1);
\coordinate [label=3] (3) at (-4,-1);

\draw[color=red] (-4.5,0.75) ellipse [x radius=1cm, y radius=0.5cm, rotate=45]{};
\coordinate[label={[color=red]$C_1$}] (C1) at (-5.2,1.2);
\draw[color=magenta] (-4,0.25) ellipse [x radius=0.5cm, y radius=1.25cm]{};
\coordinate[label={[color=magenta]$C_3$}] (C3) at (-4.8,-1.3);
\draw[color=blue] (-5,0.25) ellipse [x radius=0.5cm, y radius=0.5cm]{};
\coordinate[label={[color=blue]$C_2$}] (C2) at (-5.8,0.55);
\draw (-6.25,-1.5) rectangle (-2.75,2) [label=above:$X$, draw]{};
\coordinate[label={$X$}] (X) at (-6,2);

\draw[-, dashed] (s) to [out=40,in=180] (v_i1);
\draw[-, dashed] (s) to [out=20,in=200] (v_i1);

\draw[-, dashed] (s) to [out=10,in=170] (v_i2);
\draw[-, dashed] (s) to [out=-10,in=190] (v_i2);

\draw[-, dashed] (s) to (v_i3);

\draw[-, dashed] (s) to [out=90,in=120] (C_i1);
\draw[-, dashed] (s) to [out=90,in=100] (C_i2);
\draw[-, dashed] (s) to [out=270,in=270] (C_i3);

\draw[-, dashed] (v_i1) to (v_{i1,j1});
\draw[-, dashed] (v_i1) to (v_{i1,j2});

\draw[-, dashed] (v_i2) to (v_{i2,j1});
\draw[-, dashed] (v_i2) to (v_{i2,j3});

\draw[-, dashed] (v_i3) to (v_{i3,j3});

\draw[-, dashed] (v_{i1,j1}) to (C_i1);
\draw[-, dashed] (v_{i1,j2}) to (C_i2);
\draw[-, dashed] (v_{i1,j1}) to  (t_i1);
\draw[-, dashed] (v_{i1,j2}) to (t_i1);

\draw[-, dashed] (v_{i2,j1}) to (C_i1);
\draw[-, dashed] (v_{i2,j3}) to (C_i3);
\draw[-, dashed] (v_{i2,j1}) to (t_i2);
\draw[-, dashed] (v_{i2,j3}) to (t_i2);

\draw[-, dashed] (v_{i3,j3}) to (C_i3);
\draw[-, dashed] (v_{i3,j3}) to (t_i3);

\draw[-, dashed] (t_i1) to (t);
\draw[-, dashed] (t_i2) to (t);
\draw[-, dashed] (t_i1) to (t);

\draw[-] (C_i1) to (t);
\draw[-] (C_i2) to (t);
\draw[-] (C_i3) to (t);

\end{tikzpicture}
\caption{Illustration of the construction of the graph $G$ (right-hand side) in the reduction from an instance of \sctsc{} on the left-hand side to an instance of \msetsc{}. Dashed lines represent $(\ell+1)$-chains, where $\ell$ is the parameter in the instance of \sctsc{}.}
\label{fig:reduction}
\end{figure}

\emph{Correctness}. We now prove that $(X, \mathcal{C}, \ell)$ is a yes-instance for \acrsc{} if and only of $(G, s, t, p, k)$ is a yes-instance for \msetsc{}.

``$\Leftarrow$'': Suppose that we have~$p$~$s$-$t$~routes in~$G$ that share at most $k$~edges. We show that we can construct a set cover~$\mathcal{C}'\subseteq \mathcal{C}$ for~$X$ with~$|\mathcal{C}'|\leq \ell$. First, we provide some observations.

Since every $(\ell+1)$-chain contains $\ell+1$ edges, every $(\ell+1)$-chain in~$G$ appears in at most one $s$-$t$~route. Since there are $p$~$s$-$t$~routes and there are $p$~$(\ell+1)$-chains incident with vertex~$s$, every $(\ell+1)$-chain incident with vertex~$s$ appears in exactly one $s$-$t$~route. Therefore, each $v_i\in V_X$ appears in at least $\deg(i)$ $s$-$t$~routes and each $w_j\in V_C$ appears in at least one $s$-$t$~route. Moreover, since each~$v_i\in V_X$ is incident with $2\cdot\deg(i)$~$(\ell+1)$-chains, each~$v_i\in V_X$ appears in exactly $\deg(i)$~$s$-$t$~routes.
Each $v_{i,j}\in V_D$ has exactly degree three and is incident with three $(\ell+1)$-chains. Therefore, every $v_{i,j}\in V_D$ appears in at most one $s$-$t$~route. Moreover, since each $v_i\in V_X$ appears in $\deg(i)$~$s$-$t$~routes, and there are $\deg(i)$ vertices in~$V_D$ each connected with~$v_i$ via an~$(\ell+1)$-chain, each~$v_{i,j}\in V_D$ appears in exactly one $s$-$t$~route.
Let $V' := \{w\in V_C|\text{ $\{w,t\}$ is a shared edge}\}$, that is, $V'$~is the set of vertices in~$V_C$ that are incident with the shared edges of the $p$~$s$-$t$~routes. We claim that, if~$w_j\in V_C$ appears in an $s$\nobreakdash-$t$~route~$P$ containing a vertex in~$V_X$, then $w_j\in V'$. Let $v_i\in V_X$ be the vertex that appears in route~$P$. Suppose $V'$ does not contain vertex~$w_j$ and, thus, edge~$\{w_j,t\}$ is not shared. Since the $(\ell+1)$-chain connecting vertex~$s$ with vertex~$w_j$ appears in exactly one $s$\nobreakdash-$t$~route different from~$P$, vertex~$w_j$ appears in at least two $s$-$t$~routes. Since every vertex in~$V_C$ is incident with vertex~$t$ and vertices in~$V_D$ via $(\ell+1)$-chains, there is a vertex~$v_{i',j'}\in V_D$ different from vertex~$v_{i,j}$, such that one of the $s$-$t$~routes containing vertex~$w_j$ contains vertex~$v_{i',j'}$. Let $P'$ be the route containing the vertices~$w_j$ and~$v_{i',j'}$. We know that there is an $s$-$t$~route containing vertex~$v_{i',j'}$ and vertex~$v_{i'}$ different from~$P'$, since there are $p$~$s$-$t$~routes. Thus, vertex~$v_{i',j'}$ appears in at least two $s$-$t$~routes, contradicting the fact that each vertex in $V_D$ appears in exactly one $s$-$t$~route. We conclude that set~$V'$ contains vertex~$w_j$.
We claim that the subset $\mathcal{C}'\subseteq \mathcal{C}$ corresponding to vertices in $V'$, that is $\mathcal{C}':=\{C_j\in \mathcal{C}\mid w_j\in V'\}$, is a set cover of $X$ of size at most $\ell$. Each $t_i\in V_T$ is connected with vertex~$t$ via $\deg(i)-1$ $(\ell+1)$-chains, and connected with $\deg(i)$ vertices in $V_D$. Therefore, for each $i\in X$, there exists at least one $j\in[|\mathcal{C}|]$, such that $v_i$, $v_{i,j}$, and $w_j$ appear in an $s$-$t$~route. As shown before, it follows that~$w_j\in V'$. Thus, for each element~$i\in X$ there exists a set~$C_j\in \mathcal{C}'$ such that $i\in C_j$, and hence, $\mathcal{C'}$ is a set cover of~$X$ of size at most~$\ell$.

``$\Rightarrow$'': Suppose that we have a set~$\mathcal{C}'\subseteq \mathcal{C}$ with $|\mathcal{C}'|\leq \ell$, such that $\mathcal{C}'$~is a set cover of~$X$. We show that we can construct $p$~$s$-$t$~routes in~$G$ that share at most~$k=\ell$ edges.

First, we construct $|\mathcal{C}|$~$s$-$t$ routes in the following way. For each vertex $w\in V_C$, we construct the $s$-$t$~route containing only the $(\ell+1)$-chain connecting $s$ and $w$ and the edge $\{w,t\}$. It follows that each of the $|\mathcal{C}|$~edges connecting a vertex in~$V_C$ with vertex~$t$ appears in exactly one $s$-$t$~route. 

Next, we construct $|X|$ $s$-$t$~routes in the following way. We remark that since $\mathcal{C}'$ is a set cover of $X$, for each $i\in X$ there exists a $C_j\in\mathcal{C}'$ such that $i\in C_j$. For each $v_i\in V_X$, we construct an $s$-$t$~route containing only the $(\ell+1)$-chains connecting $s$ with $v_i$, $v_i$ with $v_{i,j}$, $v_{i,j}$ with $w_j$, and the edge~$\{w_j,t\}$, where vertex~$w_j\in V_C$ corresponds to a $C_j\in \mathcal{C'}$ with $i\in C_j$. Since $|\mathcal{C}'|\leq \ell$, there are at most $\ell$~edges connecting the vertices in~$V_C$ with~$t$ that are shared by the $s$-$t$~routes constructed so far.

Finally, we construct $\sum_{x\in X}\deg(x) - |X|$ $s$-$t$~routes in the following way. Note that for each $v_i\in V_X$, there are $\deg(i)-1$ $(\ell+1)$-chains connecting $s$ and $v_i$ not covered by an $s$-$t$~route and there are $\deg(i)-1$ vertices in $V_D$ connected with $v_i$ via an $(\ell+1)$-chain not covered by an $s$-$t$~route. Moreover, no vertex in $V_T$ is covered by an $s$-$t$~route, and thus, $t_i\in V_T$ is not covered by an $s$-$t$~route. Recall that $t_i\in V_T$ is connected with vertex~$t$ by $\deg(i)-1$ $(\ell+1)$-chains. Thus, for each $v_i\in V_X$, we can lead $\deg(i)-1$ $s$-$t$~routes from $s$ over $v_i$, vertices in~$V_D$ and~$t_i$ to~$t$ without sharing any edge.

In total, we constructed 
\[ |\mathcal{C}|+|X|+\sum_{x\in X}\deg(x) - |X| = |\mathcal{C}|+\sum_{x\in X}\deg(x) = p \]
$s$-$t$~routes sharing at most $k=\ell$ edges. 

Note that the reduction is a polynomial reduction and a parameterized reduction since $k=\ell$. 
Moreover, given $p$ $s$-$t$~routes in a graph~$G$ with~$s,t\in V(G)$, one can check in polynomial time whether the routes share at most~$k$~edges.
Hence, \msetsc{} is NP-complete and W[2]-hard with respect to the number~$k$ of shared edges.
\end{proof}
%


\section{An Algorithm for Small Treewidth and Small Number of Routes}\label{sec:dp}

In this section we show the following theorem.
\begin{theorem}\label{thm:twdp}
Let~$G$ be a graph with $s,t\in V(G)$ given together with a tree decomposition of width $\omega$. %
Let~$p\in \mathbb{N}$ be an integer. Then the minimum number of shared edges in a $(p, s, t)$-routing %
can be computed in~${O(p\cdot (\omega+4)^{3\cdot p\cdot (\omega+3)+4}\cdot n)}$~time.
\end{theorem}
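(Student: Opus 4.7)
\emph{Proof plan.} The plan is to prove \Cref{thm:twdp} by a bottom-up dynamic program on a nice tree decomposition of~$G$ with introduce-edge nodes, obtained from the given width-$\omega$ decomposition in linear time and containing $O(\omega n)$ nodes~\cite{CyganNPPRW11}. For bookkeeping convenience I would conceptually include $s$ and $t$ in every bag, so that the effective bag size is at most $\omega + 3$. For each node~$\alpha$, write $G_\alpha$ for the subgraph of~$G$ consisting of the vertices appearing in bags at~$\alpha$ or a descendant, restricted to the edges introduced at~$\alpha$ or a descendant. A \emph{partial $(p,s,t)$-routing} in~$G_\alpha$ is a tuple $(\mathcal{F}_1,\dots,\mathcal{F}_p)$ in which each $\mathcal{F}_i$ is a set of vertex-disjoint subpaths (\emph{fragments}) of~$G_\alpha$ whose ``open'' endpoints all lie in $B_\alpha \cup \{s,t\}$ and that can be completed, through edges introduced above~$\alpha$, into a single $s$-$t$ path, called route~$i$.

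The DP table at~$\alpha$ would be indexed by \emph{signatures}. For each route $i \in [p]$ and each $v \in B_\alpha \cup \{s,t\}$, a signature records (i)~the number, in $\{0,1,2\}$, of route-$i$ edges incident to~$v$ already in~$G_\alpha$, and (ii)~for each open endpoint of a fragment of route~$i$ at~$v$, a pointer to the other endpoint of that fragment---either a vertex in $B_\alpha \cup \{s,t\}$ or a ``future'' marker meaning the matching endpoint lies above~$\alpha$. At most $3p(\omega+3)$ such slots are recorded, and each takes one of at most $\omega + 4$ values, which gives the target bound of $(\omega+4)^{3p(\omega+3)+O(1)}$ on the number of signatures. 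The table entry stores the minimum number of shared edges used by any partial routing realizing that signature.

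The node-type transitions would be as follows. At a leaf, only the all-zero signature exists. An introduce-vertex node for~$v$ extends each inherited signature by giving~$v$ all-zero statuses. An introduce-edge node for $\{u,v\}$ branches over the subset $R \subseteq [p]$ of routes using this edge; for each $i \in R$ we bump the route-$i$ degrees of~$u$ and~$v$ and open, extend, or close fragment endpoints, updating pointers accordingly; the shared-edge counter is increased by one if $|R| \ge 2$. A forget node for $v \notin \{s,t\}$ keeps only signatures in which~$v$ has no open endpoint for any route and drops all entries for~$v$; the case $v \in \{s,t\}$ requires instead that each route has a single open endpoint paired at~$v$. A join node merges the two child signatures by, for each shared bag vertex and each route, gluing compatible pairs of open endpoints from the two children, retargeting pointers, and summing the shared-edge counters, while rejecting infeasible combinations (forming a cycle or exceeding route-degree~$2$). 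At the root the unique feasible signature has each route consisting of a single fragment from~$s$ to~$t$, and the minimum value stored there is the answer. Multiplying the signature count by the per-node transition cost and by the $O(\omega n)$ node count yields the claimed running time, with the extra factor of~$p$ coming from the introduce-edge branching.

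\emph{Main obstacle.} The join-node transition is the most delicate step. At each shared bag vertex and each route, open fragments from the two children may be glued in several ways; I would have to enumerate these partial matchings, correctly retarget pointers so that the merged signature still describes vertex-disjoint subpaths, and reject combinations that would close fragments into a cycle or otherwise produce a non-simple $s$-$t$ route. Arranging the join so that it runs in time polynomial in the number of signatures, keeping the overall dependence on~$\omega$ and~$p$ within $(\omega+4)^{3p(\omega+3)+4}$, is the principal technical difficulty.
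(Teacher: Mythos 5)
Your plan follows the classical path-fragment/endpoint-matching style of tree-decomposition DP, whereas the paper takes a genuinely different (and deliberately coarser) route: it relaxes each route to a \emph{forest} and records, per route, only a partition of the bag into segments (one per tree meeting the bag) plus a zero-segment of untouched vertices. This avoids all degree bookkeeping and endpoint pointers, yields only $(\omega+4)^{p(\omega+3)}$ signatures per node, and pushes the entire difficulty into the join node, where four explicit compatibility conditions on pairs of child partitions rule out cycle creation; the final answer is read off at the root by requiring a segment containing both $s$ and $t$ in every route. Your fragment-based states would also work in principle, but the paper's relaxation buys a smaller state space and a cleaner correctness argument (unions of trees versus gluing partial matchings).

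That said, there are two concrete gaps. First, the join node --- which you yourself flag as the main obstacle --- is exactly where all the substance of the proof lies, and you do not carry it out: one must specify precisely when two child states combine into a parent state without closing a cycle (the paper needs conditions forbidding two child components sharing two bag vertices, and forbidding chains of components returning to the same vertex), and prove both directions of the recurrence. Second, the running-time accounting does not close as stated. With $3p(\omega+3)$ recorded slots your signature set has size about $(\omega+4)^{3p(\omega+3)}$, i.e.\ the \emph{cube} of the paper's; a join node must examine pairs (or triples) of child signatures, which then costs $(\omega+4)^{6p(\omega+3)}$ or worse and overshoots the claimed exponent $3p(\omega+3)+4$. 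Likewise, branching over all subsets $R\subseteq[p]$ at an introduce-edge node adds a $2^p$ factor that is not absorbed by $(\omega+4)^{4}$ for large $p$; the paper instead enumerates parent/child signature pairs and checks compatibility, so the subset of routes using the edge is implicit. To salvage your approach you would need a tighter state encoding (one ``partner or status'' symbol per vertex per route, giving roughly $(\omega+O(1))^{p(\omega+3)}$ states) before the join-node blowup matches the stated bound.
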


The proof is based on a dynamic program that computes a table for each node of the (arbitrarily rooted) tree decomposition in a bottom-up fashion. For our application, it is convenient to use a nice tree decomposition with introduce edge nodes such that each bag contains the sink and the source node.
For each node $\alpha$ in the tree decomposition $\mathbb{T}$ of $G$, we define $V_\alpha$ as the set of vertices and $E_\alpha$ as the set of edges that are introduced in the subtree rooted at node $\alpha$. In other words, a vertex~$v\in V(G)$ is in~$V_\alpha$ if and only if there exists at least one introduce vertex node in the subtree rooted at node~$\alpha$ that introduced vertex~$v$. As a special case, since the vertices~$s$ and~$t$ are contained in every bag, we consider~$s$ and~$t$ as introduced by each leaf node. An edge~$e\in E(G)$ is in~$E_\alpha$ if and only if there exists an introduce edge node in the subtree rooted at node~$\alpha$ that introduced edge~$e$. Recall that there is a unique introduce edge node for every edge of graph~$G$. We define $G_\alpha:=(V_\alpha, E_\alpha)$ as the graph for node~$\alpha$. For every leaf node~$\alpha$ in~$\mathbb{T}$, we set $V_\alpha=\{s,t\}$ and~$E_\alpha=\emptyset$. 

\iflong{}
\begin{figure}[t]
\centering

\begin{tikzpicture}[x=0.875cm, y=0.875cm]

\def\x{7.5};
\def\xe{5};
\def\y{3.5};

\def\ysc{0.75};
\def\xsc{0.75};

\node (s) at (0-1*\x,-1*\ysc)[shape=circle, label=180:{$s$}, draw]{};
\node (a) at (1*\xsc-1*\x,0*\ysc)[shape=circle, label=90:{$a$}, draw]{};
\node (b) at (3*\xsc-1*\x,0*\ysc)[shape=circle, label=90:{$b$}, draw]{};
\node (c) at (1*\xsc-1*\x,-2*\ysc)[shape=circle, label=270:{$c$}, draw]{};
\node (d) at (2*\xsc-1*\x,-2*\ysc)[shape=circle, label=270:{$d$}, draw]{};
\node (e) at (3*\xsc-1*\x,-2*\ysc)[shape=circle, label=270:{$e$}, draw]{};
\node (t) at (4*\xsc-1*\x,-1*\ysc)[shape=circle, label=0:{$t$}, draw]{};

\draw (s) -- (a);
\draw (s) -- (c);
\draw (a) -- (c);
\draw (a) -- (b);
\draw (c) -- (b);
\draw (c) -- (d);
\draw (d) -- (e);
\draw (e) -- (t);
\draw (b) -- (t);

\node (G) at (0-0.3-1*\x,0)[]{$G=G_\tau$};

\node (s) at (0+0.6*\x,-1*\ysc-0.35*\y)[shape=circle, label=180:{$s$}, draw]{};
\node (a) at (1*\xsc+0.6*\x,0*\ysc-0.35*\y)[shape=circle, label=90:{$a$}, draw]{};
\node (b) at (3*\xsc+0.6*\x,0*\ysc-0.35*\y)[shape=circle, label=90:{$b$}, draw]{};
\node (c) at (1*\xsc+0.6*\x,-2*\ysc-0.35*\y)[shape=circle, label=270:{$c$}, draw]{};
\node (d) at (2*\xsc+0.6*\x,-2*\ysc-0.35*\y)[shape=circle, label=270:{$d$}, draw]{};
\node (e) at (3*\xsc+0.6*\x,-2*\ysc-0.35*\y)[shape=circle, label=270:{$e$}, draw]{};
\node (t) at (4*\xsc+0.6*\x,-1*\ysc-0.35*\y)[shape=circle, label=0:{$t$}, draw]{};

\draw (s) -- (a);
\draw (s) -- (c);
\draw (a) -- (c);
\draw (a) -- (b);
\draw[color=lightgray] (c) -- (b);
\draw[color=lightgray] (c) -- (d);
\draw (d) -- (e);
\draw (e) -- (t);
\draw (b) -- (t);

\node (G) at (0-0.2+0.6*\x,0-0.3*\y)[]{$G_\alpha$};

\node (s) at (0-1*\x,-1*\ysc-1.1*\y)[shape=circle, label=180:{$s$}, draw]{};
\node (a) at (1*\xsc-1*\x,0-1.1*\y)[shape=circle, label=90:{$a$}, draw]{};
\node (b) at (3*\xsc-1*\x,0-1.1*\y)[shape=circle, label=90:{$b$}, draw]{};
\node (c) at (1*\xsc-1*\x,-2*\ysc-1.1*\y)[shape=circle, label=270:{$c$}, draw]{};
\node (d) at (2*\xsc-1*\x,-2*\ysc-1.1*\y)[shape=circle, label=270:{$d$}, color=lightgray, draw]{};
\node (e) at (3*\xsc-1*\x,-2*\ysc-1.1*\y)[shape=circle, label=270:{$e$}, color=lightgray, draw]{};
\node (t) at (4*\xsc-1*\x,-1*\ysc-1.1*\y)[shape=circle, label=0:{$t$}, draw]{};

\draw (s) -- (a);
\draw (s) -- (c);
\draw (a) -- (c);
\draw (a) -- (b);
\draw[color=lightgray] (c) -- (b);
\draw[color=lightgray] (c) -- (d);
\draw[color=lightgray] (d) -- (e);
\draw[color=lightgray] (e) -- (t);
\draw[color=lightgray] (b) -- (t);

\node (G) at (0-0.2-\x,0-1.05*\y)[]{$G_\beta$};

\node (s) at (0+0.6*\x,-1*\ysc-1.5*\y)[shape=circle, label=180:{$s$}, draw]{};
\node (a) at (1*\xsc+0.6*\x,0-1.5*\y)[shape=circle, label=90:{$a$}, color=lightgray, draw]{};
\node (b) at (3*\xsc+0.6*\x,0-1.5*\y)[shape=circle, label=90:{$b$}, draw]{};
\node (c) at (1*\xsc+0.6*\x,-2*\ysc-1.5*\y)[shape=circle, label=270:{$c$}, color=lightgray, draw]{};
\node (d) at (2*\xsc+0.6*\x,-2*\ysc-1.5*\y)[shape=circle, label=270:{$d$}, draw]{};
\node (e) at (3*\xsc+0.6*\x,-2*\ysc-1.5*\y)[shape=circle, label=270:{$e$}, draw]{};
\node (t) at (4*\xsc+0.6*\x,-1*\ysc-1.5*\y)[shape=circle, label=0:{$t$}, draw]{};

\draw[color=lightgray] (s) -- (a);
\draw[color=lightgray] (s) -- (c);
\draw[color=lightgray] (a) -- (c);
\draw[color=lightgray] (a) -- (b);
\draw[color=lightgray] (c) -- (b);
\draw[color=lightgray] (c) -- (d);
\draw[color=lightgray] (d) -- (e);
\draw (e) -- (t);
\draw (b) -- (t);

\node (G) at (0-0.2+0.6*\x,0-1.45*\y)[]{$G_\gamma$};

\node (s) at (0+0.6*\x,-1*\ysc-2.8*\y)[shape=circle, label=180:{$s$}, draw]{};
\node (a) at (1*\xsc+0.6*\x,0-2.8*\y)[shape=circle, label=90:{$a$}, color=lightgray, draw]{};
\node (b) at (3*\xsc+0.6*\x,0-2.8*\y)[shape=circle, label=90:{$b$}, draw]{};
\node (c) at (1*\xsc+0.6*\x,-2*\ysc-2.8*\y)[shape=circle, label=270:{$c$}, color=lightgray, draw]{};
\node (d) at (2*\xsc+0.6*\x,-2*\ysc-2.8*\y)[shape=circle, label=270:{$d$}, color=lightgray, draw]{};
\node (e) at (3*\xsc+0.6*\x,-2*\ysc-2.8*\y)[shape=circle, label=270:{$e$}, draw]{};
\node (t) at (4*\xsc+0.6*\x,-1*\ysc-2.8*\y)[shape=circle, label=0:{$t$}, draw]{};

\draw[color=lightgray] (s) -- (a);
\draw[color=lightgray] (s) -- (c);
\draw[color=lightgray] (a) -- (c);
\draw[color=lightgray] (a) -- (b);
\draw[color=lightgray] (c) -- (b);
\draw[color=lightgray] (c) -- (d);
\draw[color=lightgray] (d) -- (e);
\draw (e) -- (t);
\draw[color=lightgray] (b) -- (t);

\node (G) at (0-0.2+0.6*\x,0-2.75*\y)[]{$G_\delta$};

\node (s) at (0-1*\x,-1*\ysc-3*\y)[shape=circle, label=180:{$s$}, draw]{};
\node (a) at (1*\xsc-1*\x,0-3*\y)[shape=circle, label=90:{$a$}, color=lightgray, draw]{};
\node (b) at (3*\xsc-1*\x,0-3*\y)[shape=circle, label=90:{$b$}, color=lightgray, draw]{};
\node (c) at (1*\xsc-1*\x,-2*\ysc-3*\y)[shape=circle, label=270:{$c$}, color=lightgray, draw]{};
\node (d) at (2*\xsc-1*\x,-2*\ysc-3*\y)[shape=circle, label=270:{$d$}, color=lightgray, draw]{};
\node (e) at (3*\xsc-1*\x,-2*\ysc-3*\y)[shape=circle, label=270:{$e$}, color=lightgray, draw]{};
\node (t) at (4*\xsc-1*\x,-1*\ysc-3*\y)[shape=circle, label=0:{$t$}, draw]{};

\draw[color=lightgray] (s) -- (a);
\draw[color=lightgray] (s) -- (c);
\draw[color=lightgray] (a) -- (c);
\draw[color=lightgray] (a) -- (b);
\draw[color=lightgray] (c) -- (b);
\draw[color=lightgray] (c) -- (d);
\draw[color=lightgray] (d) -- (e);
\draw[color=lightgray] (e) -- (t);
\draw[color=lightgray] (b) -- (t);

\node (G) at (0-0.2 -1*\x,0-2.95*\y)[]{$G_\eta$};

\def\tdcxsc{1.2};

\node (bcd1) at (0,0)[shape=rectangle, label=0:{$\{b,c\}$}, label=180:{$(\tau)$}, draw]{$s,b,c,d,t$};
\node (bcd2) at (0,-1)[shape=rectangle, label=180:{$\{c,d\}$}, draw]{$s,b,c,d,t$};
\node (bcd3) at (0,-2)[shape=rectangle, label=0:{$(\alpha)$}, draw]{$s,b,c,d,t$};
\node (bcd3l) at (-1*\tdcxsc,-3)[shape=rectangle, draw]{$s,b,c,d,t$};
\node (bcd3r) at (1*\tdcxsc,-3)[shape=rectangle, draw]{$s,b,c,d,t$};

\node (bc) at (-1*\tdcxsc,-4)[shape=rectangle, label=180:{$(\beta)$}, draw]{$s,b,c,t$};
\node (abc1) at (-1*\tdcxsc,-5)[shape=rectangle, label=180:{$\{a,b\}$}, draw]{$s,a,b,c,t$};
\node (abc2) at (-1*\tdcxsc,-6)[shape=rectangle, label=180:{$\{a,c\}$}, draw]{$s,a,b,c,t$};
\node (abc3) at (-1*\tdcxsc,-7)[shape=rectangle, draw]{$s,a,b,c,t$};
\node (sac1) at (-1*\tdcxsc,-8)[shape=rectangle, label=180:{$\{s,c\}$}, draw]{$s,a,c,t$};
\node (sac2) at (-1*\tdcxsc,-9)[shape=rectangle, label=180:{$\{s,a\}$}, draw]{$s,a,c,t$};
\node (sac3) at (-1*\tdcxsc,-10)[shape=rectangle, draw]{$s,a,c,t$};
\node (sa) at (-1*\tdcxsc,-11)[shape=rectangle, draw]{$s,a,t$};
\node (s) at (-1*\tdcxsc,-12)[shape=rectangle, label=180:{$(\eta)$}, draw]{$s,t$};

\node (bd) at (1*\tdcxsc,-4)[shape=rectangle, draw]{$s,b,d,t$};
\node (bde1) at (1*\tdcxsc,-5)[shape=rectangle, label=0:{$\{d,e\}$}, draw]{$s,b,d,e,t$};
\node (bde2) at (1*\tdcxsc,-6)[shape=rectangle, label=0:{$(\gamma)$}, draw]{$s,b,d,e,t$};
\node (bet1) at (1*\tdcxsc,-8)[shape=rectangle, label=0:{$\{b,t\}$}, draw]{$s,b,e,t$};
\node (bet2) at (1*\tdcxsc,-9)[shape=rectangle, label=0:{$\{e,t\}, (\delta)$},  draw]{$s,b,e,t$};
\node (bet3) at (1*\tdcxsc,-10)[shape=rectangle, draw]{$s,b,e,t$};
\node (et) at (1*\tdcxsc,-11)[shape=rectangle, draw]{$s,e,t$};
\node (t) at (1*\tdcxsc,-12)[shape=rectangle, draw]{$s,t$};

\draw (bcd1) -- (bcd2);
\draw (bcd2) -- (bcd3);
\draw (bcd3) -- (bcd3l);
\draw (bcd3) -- (bcd3r);

\draw (bcd3l) -- (bc);
\draw (bc) -- (abc1);
\draw (abc1) -- (abc2);
\draw (abc2) -- (abc3);
\draw (abc3) -- (sac1);
\draw (sac1) -- (sac2);
\draw (sac2) -- (sac3);
\draw (sac3) -- (sa);
\draw (sa) -- (s);

\draw (bcd3r) -- (bd);
\draw (bd) -- (bde1);
\draw (bde1) -- (bde2);
\draw (bde2) -- (bet1);
\draw (bet1) -- (bet2);
\draw (bet2) -- (bet3);
\draw (bet3) -- (et);
\draw (et) -- (t);

\end{tikzpicture}
\caption{Example for a nice tree decomposition with introduce edge nodes and vertices~$s$ and $t$ contained in every bag on an example graph $G$ (top-left). Node~$\tau$ is the root node in the tree decomposition. The graphs around the tree decomposition correspond to the graphs for the tagged nodes in the tree decomposition. For example, graph $G_\gamma$ corresponds to the graph for node $\gamma$ in the tree decomposition.}
\label{fig:modifiedTDC}
\end{figure}
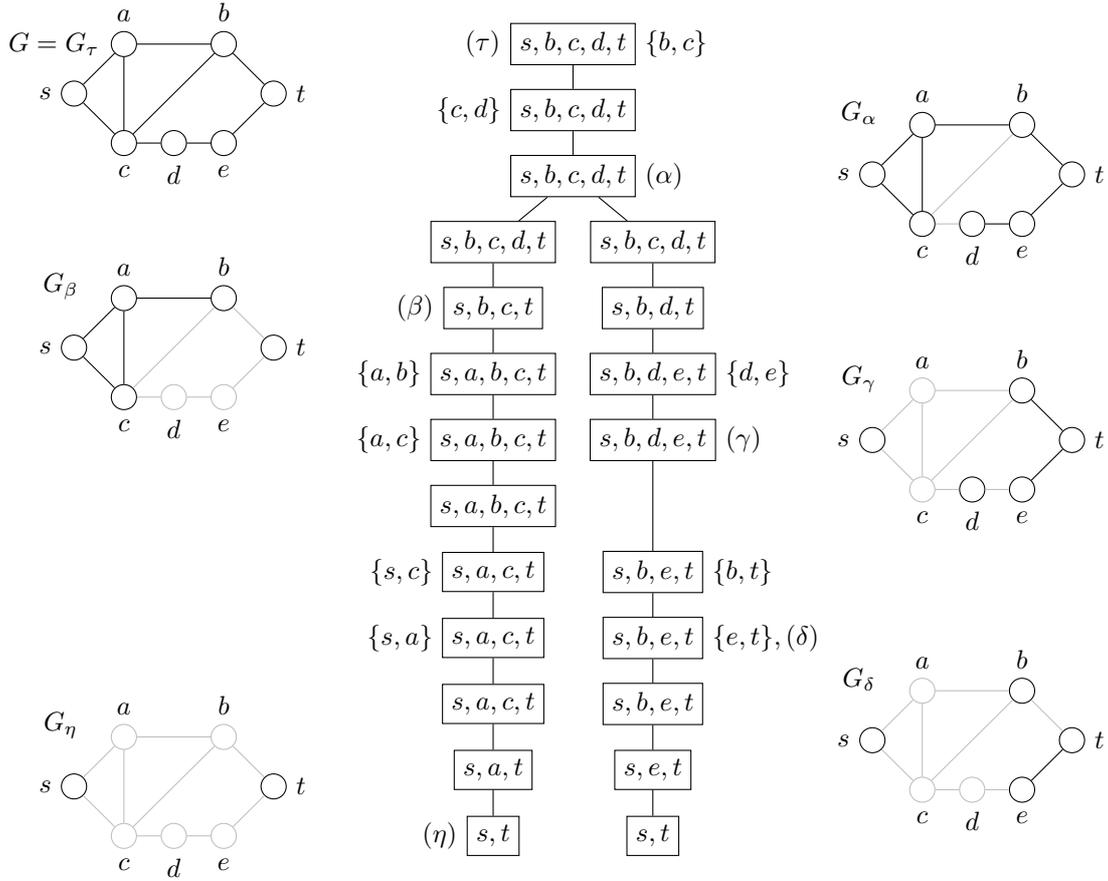
In \Cref{fig:modifiedTDC}, we show for an example graph~$G$ (upper-left) with $s,t\in V(G)$ a nice tree decomposition with introduce edge nodes and vertices~$s$ and~$t$ contained in each bag. Moreover, we illustrate the graphs as defined above for some tagged nodes in the tree decomposition. In the center of the figure, the modified nice tree decomposition with introduce edge nodes is shown. The graphs around the tree decomposition are the graphs for some tagged nodes, for example, the graph~$G_\alpha$ is the graph for node~$\alpha$ in the tree decomposition. For following examples and illustrations, we make use of this example throughout this section, and thus, we denote by $\mathbb{T}^*$ the tree decomposition in \Cref{fig:modifiedTDC}.
\fi

\subparagraph*{Partial Solutions.}
\todo{trees vs. conn. subgraphs}
\todo[inline]{rn: This part is quite try and lacks a (top-down) motivation.}
We define a set of $p$ forests in $G_\alpha$ as a \emph{partial solution} $L_\alpha$ for node $\alpha$. %
Instead of asking for $p$ $s$-$t$ routes that share at most $k$~edges, we can ask for $p$ $s$-$t$~forests that share at most $k$ edges, where an $s$-$t$~forest is a forest that contains at least one tree connecting vertices~$s$ and~$t$. Note that every forest that contains a tree containing both vertices~$s$ and~$t$ can be ``reduced'' to an $s$-$t$~path. A partial solution~$L_\alpha$ has a cost value~$c(L_\alpha)$, which is the number of edges in~$G_\alpha$ that appear in at least two of the~$p$~forests in~$L_\alpha$.

In order to represent the intersection of the trees in a partial solution with the bag that we are currently considering, we use the following notation. For each node $\alpha$ in the tree decomposition $\mathbb{T}$ of $G$, we consider $p$-tuples of pairs $\mathcal{X}^\alpha := (\mathcal{Y}_q^\alpha, Z_q^\alpha)_{q=1,\ldots,p}$, where for each $q\in[p]$, $Z^\alpha_q\subseteq B_\alpha$ together with $\mathcal{Y}^\alpha_q\subseteq 2^{B_\alpha}$ is a partition of~$B_\alpha$, that is,%
\begin{inparaenum}[(i)]
\item $\bigcup_{M\in\mathcal{Y}^\alpha_q} M \cup Z^\alpha_q = B_\alpha$, 
\item for all $X,Y\in \mathcal{Y}^\alpha_q\cup \{Z_q^\alpha\} $ with $X\neq Y$ it holds $X\cap Y=\emptyset$.
\end{inparaenum}
We say that $\mathcal{X}^\alpha$ is a \emph{signature} for node~$\alpha$. For each $q\in[p]$, we call the pair~$(\mathcal{Y}^\alpha_q,Z_q^\alpha)$ a \emph{segmentation} of the vertex set $B_\alpha$. We write segmentation~$q$ instead of segmentation with index~$q$ for short. We call each~$M\in \mathcal{Y}^\alpha_q$~a \emph{segment} of the segmentation~$q$ and we call $Z_q^\alpha$ the \emph{zero}-segment of the segmentation~$q$. 
\looseness=-1 To connect signatures (and segmentations) with the partial solutions that they represent, we use the following notation.  We say that the signature $\mathcal{X}^\alpha$ is a \emph{valid} signature for node~$\alpha$ if there is a partial solution~$L_\alpha$ for node~$\alpha$ such that for each $q\in[p]$, the zero-segment $Z_q^\alpha$ is the set of nodes in $B_\alpha$ that do not appear in the forest with index $q$ and for each set $M\in\mathcal{Y}_q^\alpha$, there is a tree $S$ in the forest with index $q$ such that $M = B_\alpha\cap V(S)$. In other words, the sets in~$\mathcal{Y}^\alpha_q$ correspond to connected components in the forest with index~$q$ of the partial solution. We say that $\mathcal{X}^\alpha$ is a signature \emph{induced} by the partial solution~$L_\alpha$ if $\mathcal{X}^\alpha$ is a valid signature for node~$\alpha$ and the partial solution~$L_\alpha$ validates $\mathcal{X}^\alpha$. In this case, for each $q\in[p]$, the pair $(\mathcal{Y}^\alpha_q,Z^\alpha_q)$ is an \emph{induced} segmentation. We remark that given~$\mathcal{X}^\alpha$, there can be exactly one, more than one or no partial solution with signature~$\mathcal{X}^\alpha$. 
\todo[inline]{This is a strange way to enumerate all possibilities...}
Given a partial solution~$L_\alpha$ for~$G_\alpha$, there is exactly one signature induced by~$L_\alpha$. Let $\mathcal{X}^\alpha$ be a signature for node~$\alpha$ such that there is no partial solution for $G_\alpha$ that induces the signature~$\mathcal{X}^\alpha$, then we say that~$\mathcal{X}^\alpha$ is an \emph{invalid} signature.

\iflong{}Given~$\mathcal{X}^\alpha$ and~$B\subseteq B_\alpha$, we define $\mathcal{X}^\alpha|_{B}$ as the signature $\mathcal{X}^\alpha$ with sets restricted to the set $B$, that is, $Z^\alpha_q\cap B$ and $M^\alpha_q\cap B$ for all $M^\alpha_q\in \mathcal{Y}^\alpha_q$ and for all $q\in[p]$.\fi{}
Let $\mathbb{T}=(T_\mathbb{T}, (B_\alpha)_{\alpha\in V(T_\mathbb{T})})$ be a nice tree decomposition of $G$ with introduce edge nodes and vertices~$s$ and~$t$ contained in every bag. Let $\omega:=\omega(\mathbb{T})\leq \tw(G) + 2$ be the width of~$\mathbb{T}$. We consider the table~$T$ in the following dynamic program that we process bottom-up on the tree decomposition~$\mathbb{T}$, that is, we start to fill the entries of the table~$T$ at the leaf nodes of the tree decomposition~$\mathbb{T}$ and we traverse the tree of the tree decomposition from the leaves to the root. For a node~$\alpha$ in the tree decomposition~$\mathbb{T}$ and a signature~$\mathcal{X}^\alpha$ for node~$\alpha$, the entry~$T[\alpha, \mathcal{X}^\alpha]$ is defined as
\[ T[\alpha,\mathcal{X}^\alpha] := \begin{cases}
\min c(L_\alpha) ,& \text{if $\mathcal{X}^\alpha$ is a valid signature,} \\
\infty ,& \text{otherwise,}
\end{cases}
\]
where the minimum is taken over all partial solutions~$L_\alpha$ in $G_\alpha$ such that~$L_\alpha$ induces the signature~$\mathcal{X}^\alpha$.
For each type of node in $\mathbb{T}$, we define a rule on how to fill each entry in~$T$, prove the correctness of each rule, and discuss the running time for applying the rule and the running time for filling all entries in $T$ for the given type of node.\ifshort{} Due to space constraints, we give some details only for introduce edge nodes, and defer the correctness proof and the remaining nodes to the full version of the paper.\fi\iflong{} We start with the leaf nodes of the tree decomposition~$\mathbb{T}$.
\subparagraph*{Leaf Node.}
Let $\alpha$ be a leaf node of $\mathbb{T}$. Since $s$ and $t$ appear in every bag of $\mathbb{T}$, it holds that $B_\alpha=\{s,t\}$. We set
\[ T[\alpha,\mathcal{X}^\alpha] := \begin{cases}
	0 ,& \text{if $\mathcal{Y}^\alpha_q=\{\{s\},\{t\}\}$ for all $q=1,\ldots, p$,} \\
	\infty ,& \text{otherwise.}
	\end{cases}
\] 

We recall that $V_\alpha=\{s,t\}$ and $E_\alpha=\emptyset$ for every leaf node~$\alpha$ in~$\mathbb{T}$. Since there is no edge in $E_\alpha$, the vertices~$s$ and~$t$ cannot appear together in one tree in a forest in any partial solution in $G_\alpha$, and thus, the vertices~$s$ and $t$ cannot appear together in one segment in any segmentation of a signature for a leaf node. Since in any solution to our problem, $s$ and $t$ appear in each of the $p$ forests, we can set $s$ and $t$ as segments of all $p$~segmentations.

\subparagraph*{Introduce Vertex Node.}
Let $\alpha$ be an introduce vertex node of $\mathbb{T}$ and let $\beta$ be the child node of~$\alpha$ with $B_\alpha\backslash B_\beta = \{v\}$. Two signatures $\mathcal{X}^\alpha$ and $\mathcal{X}^\beta$ are \emph{compatible} if $\mathcal{X}^\alpha|_{B_\beta}=\mathcal{X}^\beta$, and $v\in Z^\alpha_q$ or $\{v\}\in \mathcal{Y}^\alpha_q$ for each $q\in[p]$. We claim that 
\[
T[\alpha,\mathcal{X}^\alpha] = \begin{cases} 
\min_{\text{$\mathcal{X}^\beta$ compatible with $\mathcal{X}^\alpha$}} T[\beta,\mathcal{X}^\beta] ,& \text{if it exists $\mathcal{X}^\beta$ compatible with $\mathcal{X}^\alpha$,} \\
\infty ,& \text{otherwise.}
\end{cases}
\]
Since $\alpha$ is an introduce vertex node for vertex $v\in V(G)$, no edge incident with~$v$ is introduced in any node in the subtree rooted at node $\alpha$, and thus, vertex~$v$ is an isolated vertex in~$G_\alpha$. As a consequence, in every forest in all partial solutions for $G_\alpha$, the introduced vertex $v$ is either a single-vertex tree or does not appear in the forest since $v$ cannot be connected to any vertex in $G_\alpha$. A single-vertex tree is a tree that contains exactly one vertex and does not contain any edge.
\vspace{3pt}\emph{Correctness}. ``$\geq$'': Let $L_\alpha$ be a partial solution for~$G_\alpha$ with signature~$\mathcal{X}^\alpha$ such that $T[\alpha,\mathcal{X}^\alpha]=c(L_\alpha)$. We construct a partial solution~$L_\beta$ for $G_\beta$ and a signature $\mathcal{X}^\beta$ such that $L_\beta$ induces $\mathcal{X}^\beta$ and $\mathcal{X}^\beta$ is compatible with $\mathcal{X}^\alpha$. For each forest in the partial solution~$L_\alpha$, vertex~$v$ is either a single-vertex tree or does not appear in the forest, since there is no edge incident with vertex~$v$ in~$G_\alpha$. If vertex~$v$ appears as single-vertex tree in any forest in~$L_\alpha$, deleting $v$ yields a forest in $G_\beta$. We define the partial solution~$L_\beta$ as~$L_\alpha$ restricted to~$V_\beta$, which are the forests without the isolated vertex $v$. The partial solution~$L_\beta$ is a partial solution for~$G_\beta$ with valid signature~$\mathcal{X}^\beta:=\mathcal{X}^\alpha|_{B_\beta}$. Signature~$\mathcal{X}^\beta$ is compatible with signature~$\mathcal{X}^\alpha$. It follows that
\[ T[\alpha,\mathcal{X}^\alpha] = c(L_\alpha) = c(L_\beta) \geq T[\beta,\mathcal{X}^\beta] \geq  \min_{\text{$\mathcal{X'}^{\beta}$ compatible with $\mathcal{X}^\alpha$}} T[\beta,\mathcal{X'}^{\beta}] . \]

``$\leq$'': Let $L_\beta$ be a partial solution for~$G_\beta$ with signature~$\mathcal{X}^\beta$ compatible with signature~$\mathcal{X}^\alpha$ such that $T[\beta,\mathcal{X}^\beta]=c(L_\beta)$ and $T[\beta,\mathcal{X}^\beta]= 
\min_{\text{$\mathcal{X'}^{\beta}$ compatible with $\mathcal{X}^\alpha$}} T[\beta,\mathcal{X'}^{\beta}]$. We construct a partial solution $L_\alpha$ for $G_\alpha$ with signature $\mathcal{X}^\alpha$. For each $q\in[p]$, if $v\in Z^\alpha_q$, then we do not add $v$ to the forest with index $q$ in $L_\beta$. If $v$ is a single segment in the segmentation $q$, i.e.\ $\{v\}\in \mathcal{Y}^\alpha_q$, then we add $v$ as a single-vertex tree to the forest with index $q$ in $L_\beta$. Since $L_\beta$ is a partial solution for $G_\beta$, the constructed $L_\alpha$ is a partial solution for~$G_\alpha$ with signature $\mathcal{X}^\alpha$. It follows that
\[ \min_{\text{$\mathcal{X'}^{\beta}$ compatible with $\mathcal{X}^\alpha$}} T[\beta,\mathcal{X'}^{\beta}] = T[\beta,\mathcal{X}^\beta] = c(L_\beta) = c(L_\alpha) \geq T[\alpha,\mathcal{X}^\alpha] . \]

\vspace{3pt}\emph{Running time}. For each signature $\mathcal{X}^\alpha$, we check for all $q\in[p]$ whether $v\in Z^\alpha_q$ or $\{v\}\in \mathcal{Y}^\alpha_q$ in~$O(p\cdot |B_\alpha|)$~time. If for all $q\in[p]$ holds that $v\in Z^\alpha_q$ or $\{v\}\in \mathcal{Y}^\alpha_q$, then we check all signatures~$\mathcal{X}^\beta$ for node~$\beta$ for compatibility with signature $\mathcal{X}^\alpha$, that means, we check if $\mathcal{X}^\alpha|_{B_\beta}=\mathcal{X}^\beta$. This can be done in~$O(p\cdot |B_\alpha|^2)$~time. Since there are $O((|B_\beta|+1)^{p\cdot |B_\beta|})$~signatures for node~$\beta$ and $|B_\beta|\leq |B_\alpha|$, the running time for this step is in $O(p\cdot (|B_\alpha|+1)^{p\cdot |B_\alpha|+2})$. Since there are $O((|B_\alpha|+1)^{p\cdot |B_\alpha|})$~signatures for node~$\alpha$ and $|B_\beta|\leq |B_\alpha|\leq \omega+1$, the overall running time for filling the entries in $T$ for an introduce vertex node is in~$O(p\cdot (\omega+2)^{2\cdot p\cdot (\omega+1)+2})$.

\subparagraph*{Forget Node.}
Let $\alpha$ be a forget node of $\mathbb{T}$ and let $\beta$ be the child node of~$\alpha$ with~$B_\beta\backslash B_\alpha = \{v\}$. Two signatures $\mathcal{X}^\alpha$ and $\mathcal{X}^\beta$ are \emph{compatible} if $\mathcal{X}^\alpha = \mathcal{X}^\beta|_{B_\alpha}$. We claim that 
\[
T[\alpha,\mathcal{X}^\alpha] = \min_{\text{$\mathcal{X}^\beta$ compatible with $\mathcal{X}^\alpha$}} T[\beta,\mathcal{X}^\beta] .
\]
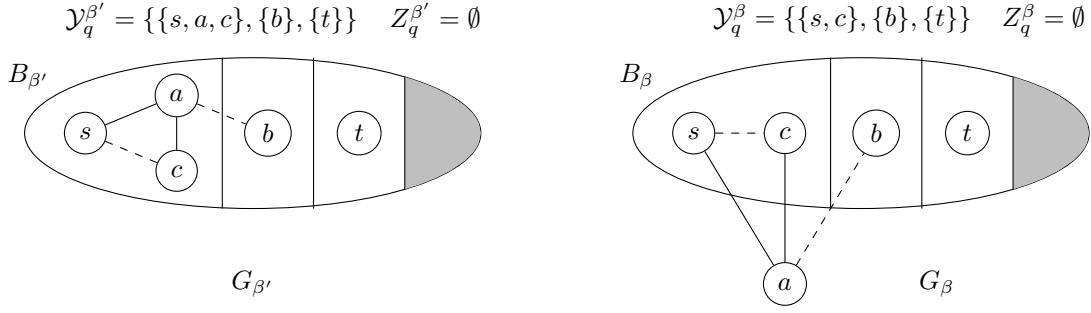
\begin{figure}[!t]
\centering
\begin{tikzpicture}

\draw[color=black] (0,0) ellipse [x radius=3cm, y radius=1cm]{};

\fill[lightgray] (2,0.7453) -- (2.2,0.6798) -- (2.4,0.6) -- (2.6,0.4988)
-- (2.8,0.3590) -- (2.9,0.2560) -- (2.95,0.1818) -- (2.975,0.1288)
-- (3,0) 
-- (2.975,- 0.1288) -- (2.95,- 0.1818) -- (2.9,-0.2560) -- (2.8,-0.3590)
 -- (2.6,-0.4988) -- (2.4,-0.6) -- (2.2,-0.6798)
 -- (2,-0.7453)  -- cycle;

\draw (-0.4,1) -- (-0.4,-1);
\draw (0.8,1) -- (0.8,-1);
\draw (2,0.7453) -- (2,-0.7453);

\node (s) at (-2.2,0)[shape=circle, draw]{$s$};
\node (a) at (-1.0,0.5)[shape=circle, draw]{$a$};
\node (c) at (-1,-0.5)[shape=circle, draw]{$c$};
\node (b) at (0.2,0)[shape=circle, draw]{$b$};
\node (t) at (1.4,0)[shape=circle, draw]{$t$};

\draw (s) -- (a);
\draw (a) -- (c);
\draw[dashed] (s) -- (c);
\draw[dashed] (a) -- (b);

\node[align=center] at (-0.55,1.5){$\mathcal{Y}^{\beta'}_q = \left\{ \{s,a,c\}, \{b\}, \{t\} \right\}$};
\node[align=center] at (2.4,1.5){$Z^{\beta'}_q = \emptyset $};

\node[align=center] at (-2.95,0.75){$B_{\beta'}$};
\node[align=center] at (0,-2){$G_{\beta'}$};

\draw[color=black] (0+8,0) ellipse [x radius=3cm, y radius=1cm]{};

\fill[lightgray] (2+8,0.7453) -- (2.2+8,0.6798) -- (2.4+8,0.6) -- (2.6+8,0.4988)
-- (2.8+8,0.3590) -- (2.9+8,0.2560) -- (2.95+8,0.1818) -- (2.975+8,0.1288)
-- (3+8,0) 
-- (2.975+8,- 0.1288) -- (2.95+8,- 0.1818) -- (2.9+8,-0.2560) -- (2.8+8,-0.3590)
 -- (2.6+8,-0.4988) -- (2.4+8,-0.6) -- (2.2+8,-0.6798)
 -- (2+8,-0.7453)  -- cycle;

\draw (-0.4+8,1) -- (-0.4+8,-1);
\draw (0.8+8,1) -- (0.8+8,-1);
\draw (2+8,0.7453) -- (2+8,-0.7453);

\node (s) at (-2.2+8,0)[shape=circle, draw]{$s$};
\node (a) at (-1.0+8,-2)[shape=circle, draw]{$a$};
\node (c) at (-1+8, 0)[shape=circle, draw]{$c$};
\node (b) at (0.2+8,0)[shape=circle, draw]{$b$};
\node (t) at (1.4+8,0)[shape=circle, draw]{$t$};

\draw (s) -- (a);
\draw (a) -- (c);
\draw[dashed] (s) -- (c);
\draw[dashed] (a) -- (b);

\node[align=center] at (-0.25+8,1.5){$\mathcal{Y}^\beta_q = \left\{ \{s,c\}, \{b\}, \{t\} \right\}$};
\node[align=center] at (2.4+8,1.5){$Z^\beta_q = \emptyset $};
\node[align=center] at (-2.95+8,0.75){$B_\beta$};
\node[align=center] at (0+8+1,-2){$G_\beta$};

\end{tikzpicture}
\caption{Example for a segmentation $q$ of two compatible signatures $\mathcal{X}^\beta$ and $\mathcal{X}^{\beta'}$ for a forget node~$\beta$ with child node~$\beta'$ in~$\mathbb{T}^*$.}
\label{fig:TDCAlgo-fn}
\end{figure}
Since node $\alpha$ is a forget node for the vertex $v\in V(G)$, all edges incident with~$v$ have been introduced in the subtree rooted at $\alpha$. Therefore, every possible way of $v$ appearing in a forest has been considered. We remark that $G_\alpha$ and $G_\beta$ are equal.
In \Cref{fig:TDCAlgo-fn}, we provide an example for a segmentation $q$ of two compatible signatures $\mathcal{X}^\beta$ and $\mathcal{X}^{\beta'}$ for a forget node $\beta$ with child node $\beta'$ in $\mathbb{T}^*$ of~\Cref{fig:modifiedTDC}. All lines connecting two vertices correspond to the edges in the graphs $G_{\beta'}$ and $G_\beta$, where only the solid lines are the edges in the partial solutions that induce the heading segmentations. Node~$\beta$ forgets vertex~$a$. The vertices~$s, a, c\in V_{\beta'}$ form a segment in~$\mathcal{Y}^{\beta'}_q$. As node~$\beta$ forgets vertex~$a$, the vertices~$s,c \in V_\beta$ form a segment in $\mathcal{Y}^{\beta}_q$, since they are connected via the vertex~$a$.
\vspace{3pt}\emph{Correctness}. ``$\geq$'': Let $L_\alpha$ be a partial solution for~$G_\alpha$ with signature~$\mathcal{X}^\alpha$ such that $T[\alpha,\mathcal{X}^\alpha] = c(L_\alpha)$. We construct a partial solution~$L_\beta$ for~$G_\beta$ and a signature~$\mathcal{X}^\beta$ such that $L_\beta$ induces $\mathcal{X}^\beta$ and $\mathcal{X}^\beta$ is compatible with $\mathcal{X}^\alpha$. Since $G_\alpha = G_\beta$, the set of $p$~forests~$L_\beta:=L_\alpha$ is a partial solution for~$G_\beta$. We set~$\mathcal{X}^\beta|_{B_\alpha}:=\mathcal{X}^\alpha$. For each $q\in[p]$, if vertex~$v$ does not appear in the forest with index $q$, then we set $Z^\beta_q:=Z^\alpha_q\cup \{v\}$ and $\mathcal{Y}^\beta_q:=\mathcal{Y}^\alpha_q$. If vertex~$v$ appears in the forest with index~$q$, then we set $Z^\beta_q:=Z^\alpha_q$, and we add~$v$ to the segmentation~$\mathcal{Y}^\beta_q$ as follows. If vertex~$v$ appears as a single-vertex tree in the forest with index~$q$, then we add~$\{v\}$ to~$\mathcal{Y}^\beta_q$. If vertex~$v$ appears in a tree with vertices in~$M\in\mathcal{Y}^\alpha_q$, then we set $\mathcal{Y}^\beta_q := (\mathcal{Y}^\alpha_q\backslash \{M\})\cup \{M\cup \{v\}\}$. Signature~$\mathcal{X}^\beta$ is compatible with signature~$\mathcal{X}^\alpha$, and the partial solution~$L_\beta$ induces $\mathcal{X}^\beta$. It follows that
\[
T[\alpha,\mathcal{X}^\alpha] = c(L_\alpha) = c(L_\beta) \geq T[\beta,\mathcal{X}^\beta] \geq  \min_{\text{$\mathcal{X'}^{\beta}$ compatible with $\mathcal{X}^\alpha$}} T[\beta,\mathcal{X'}^{\beta}] .
\]
``$\leq$'': Let $L_\beta$ be a partial solution for $G_\beta$ with signature $\mathcal{X}^\beta$ compatible with $\mathcal{X}^\alpha$ such that $T[\beta,\mathcal{X}^\beta]=c(L_\beta)$ and $T[\beta,\mathcal{X}^\beta]= 
\min_{\text{$\mathcal{X'}^{\beta}$ compatible with $\mathcal{X}^\alpha$}} T[\beta,\mathcal{X'}^{\beta}]$. We construct a partial solution for $G_\alpha$ that induces $\mathcal{X}^\alpha$. Since $G_\alpha = G_\beta$, we set $L_\alpha := L_\beta$ as the partial solution~$L_\alpha$ for~$G_\alpha$. Since $\mathcal{X}^\alpha=\mathcal{X}^\beta|_{B_\alpha}$, the partial solution~$L_\alpha$ induces~$\mathcal{X}^\alpha$. It follows that
\[
\min_{\text{$\mathcal{X'}^{\beta}$ compatible with $\mathcal{X}^\alpha$}} T[\beta,\mathcal{X'}^{\beta}] = T[\beta,\mathcal{X}^\beta] = c(L_\beta) = c(L_\alpha) \geq T[\alpha,\mathcal{X}^\alpha] .
\]

\vspace{3pt}\emph{Running time}. For a signature~$\mathcal{X}^\alpha$, we check whether $\mathcal{X}^\alpha=\mathcal{X}^\beta|_{B_\alpha}$ for all signatures~$\mathcal{X}^\beta$ for node~$\beta$. This can be done in~$O(p\cdot (|B_\beta|+1)^{p\cdot |B_\beta|+2})$ time. Since $|B_\alpha|\leq |B_\beta|\leq \omega+1$, the overall running time for filling all entries in $T$ for a forget node is in~$O(p\cdot (\omega+2)^{2\cdot p\cdot(\omega+1)+2})$.
\fi
\subparagraph*{Introduce Edge Node.}
Let $\alpha$ be an introduce edge node of $\mathbb{T}$, let $\beta$ be the child node of~$\alpha$, and let $e=\{v,w\}$ be the edge introduced by node~$\alpha$. Two signatures~$\mathcal{X}^\alpha$ and~$\mathcal{X}^\beta$ are \emph{compatible} if for each~$q\in[p]$, one of the following conditions holds:
\begin{compactenum}[(i)]
\item $\mathcal{Y}^\alpha_q=\mathcal{Y}^\beta_q$, or
\item $\mathcal{Y}^\alpha_q = (\mathcal{Y}^\beta_q\backslash \{M_1,M_2\}) \cup \{M_1\cup M_2\}$ with $M_1,M_2\in \mathcal{Y}^\beta_q$, $M_1\neq M_2$, and $v\in M_1$ and $w\in M_2$.
\end{compactenum}
If $\mathcal{X}^\alpha$ and $\mathcal{X}^\beta$ are compatible, then let $Q\subseteq [p]$ be the set of indices such that for all $q\in Q$ (ii) holds and for all $q\in [p]\backslash Q$ (i) holds. We say that $\mathcal{X}^\alpha$ and $\mathcal{X}^\beta$ are \emph{share-compatible} if~$|Q|\geq 2$. We claim that
\[
 T[\alpha,\mathcal{X}^\alpha] = \min_{\text{$\mathcal{X}^\beta$ compatible with $\mathcal{X}^\alpha$}} \left(T[\beta,\mathcal{X}^\beta] + 
 \begin{cases}
 1 ,& \text{if $\mathcal{X}^\beta$ and $\mathcal{X}^\alpha$ are share-compatible,} \\
 0 ,& \text{otherwise.} 
 \end{cases}\right)
\]
In other words, two signatures $\mathcal{X}^\alpha$ for node $\alpha$ and $\mathcal{X}^\beta$ for node $\beta$ are compatible if and only if for all $q\in [p]$, either by (i) it holds that the segmentation $q$ in $\mathcal{X}^\alpha$ is equal to the segmentation~$q$ of~$\mathcal{X}^\beta$, or by (ii) it holds that the segmentation~$q$ of~$\mathcal{X}^\alpha$ is the result of merging two segments in the segmentation~$q$ of $\mathcal{X}^\beta$, where none of the two segments is the zero-segment, and vertex~$v$ is in the one segment, and vertex~$w$ is in the other segment. This corresponds to connecting two trees by edge $e$ in the forest with index~$q$, where $v$~is in the one tree and $w$ in the other tree. Note that connecting two vertex-disjoint trees by exactly one edge yields a tree. The deletion of edge $e$ in every forest of a partial solution for~$G_\alpha$ that includes the edge~$e$ yields a partial solution for~$G_\beta$.
We remark that $G_\alpha = G_\beta+\{e\}$, that is, $G_\alpha$ differs from~$G_\beta$ only by the additional edge $e$. 
\iflong{}%
In \Cref{fig:TDCAlgo-ien}, we provide an example for a segmentation $q$ of two compatible signatures $\mathcal{X}^\delta$ and $\mathcal{X}^{\delta'}$ of an introduce edge node $\delta$ with child node $\delta'$ in $\mathbb{T}^*$. Graph~$G_{\delta'}$ does not contain any edge. Edge~$\{e,t\}$ is introduced by node~$\delta$. This allows to connect the segments containing vertex~$e$ on the one hand, and vertex~$t$ on the other hand, using edge~$\{e,t\}$.
\begin{figure}[tb]
\centering
\begin{tikzpicture}

\draw[color=black] (0,0) ellipse [x radius=3cm, y radius=1cm]{};

\fill[lightgray] (2,0.7453) -- (2.2,0.6798) -- (2.4,0.6) -- (2.6,0.4988)
-- (2.8,0.3590) -- (2.9,0.2560) -- (2.95,0.1818) -- (2.975,0.1288)
-- (3,0) 
-- (2.975,- 0.1288) -- (2.95,- 0.1818) -- (2.9,-0.2560) -- (2.8,-0.3590)
 -- (2.6,-0.4988) -- (2.4,-0.6) -- (2.2,-0.6798)
 -- (2,-0.7453)  -- cycle;

\node (s) at (-2.0,0)[shape=circle, draw]{$s$};
\node (b) at (2.5,0)[shape=circle, draw]{$b$};
\node (e) at (-0.5,0)[shape=circle, draw]{$e$};
\node (t) at (1,0)[shape=circle, draw]{$t$};

\draw (-1.25,0.9) -- (-1.25,-0.9);
\draw (0.25,1) -- (0.25,-1);
\draw (2,0.7453) -- (2,-0.7453);

\node[align=center] at (-0.25,1.5){$\mathcal{Y}^{\delta'}_q = \{\{s\},\{e\},\{t\}\}$};
\node[align=center] at (2.5,1.5){$Z^{\delta'}_q = \{b\}$};

\node[align=center] at (-2.95,0.75){$B_{\delta'}$};
\node[align=center] at (1.5,-2){$G_{\delta'}$};

\draw[color=black] (0+8,0) ellipse [x radius=3cm, y radius=1cm]{};

\fill[lightgray] (2+8,0.7453) -- (2.2+8,0.6798) -- (2.4+8,0.6) -- (2.6+8,0.4988)
-- (2.8+8,0.3590) -- (2.9+8,0.2560) -- (2.95+8,0.1818) -- (2.975+8,0.1288)
-- (3+8,0) 
-- (2.975+8,- 0.1288) -- (2.95+8,- 0.1818) -- (2.9+8,-0.2560) -- (2.8+8,-0.3590)
 -- (2.6+8,-0.4988) -- (2.4+8,-0.6) -- (2.2+8,-0.6798)
 -- (2+8,-0.7453)  -- cycle;

\node (s) at (-2.0+8,0)[shape=circle, draw]{$s$};
\node (b) at (2.5+8,0)[shape=circle, draw]{$b$};
\node (e) at (-0.5+8,0)[shape=circle, draw]{$e$};
\node (t) at (1+8,0)[shape=circle, draw]{$t$};

\draw (-1.25+8,0.9) -- (-1.25+8,-0.9);
\draw (2+8,0.7453) -- (2+8,-0.7453);

\draw[color=blue] (e) -- (t);

\node[align=center] at (-0.5+8,1.5){$\mathcal{Y}^\delta_q = \{\{s\},\{e,t\}\}$};
\node[align=center] at (2.25+8,1.5){$Z^\delta_q = \{b\}$};

\node[align=center] at (-2.95+8,0.75){$B_\delta$};
\node[align=center] at (1.5+8,-2){$G_\delta$};

\end{tikzpicture}
\caption{Example for a segmentation $q$ of two compatible signatures $\mathcal{X}^\delta$ and $\mathcal{X}^{\delta'}$ of an introduce edge node $\delta$ with child node $\delta'$ in $\mathbb{T}^*$.}
\label{fig:TDCAlgo-ien}
\end{figure}
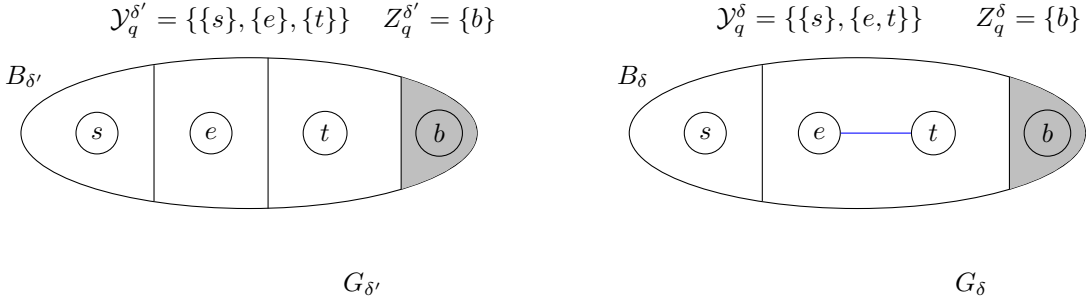
\fi{}

\iflong{}
\emph{Correctness}. ``$\geq$'': Let $L_\alpha$ be a partial solution for $G_\alpha$ with signature $\mathcal{X}^\alpha$ such that $T[\alpha,\mathcal{X}^\alpha] = c(L_\alpha)$. We construct a partial solution $L_\beta$ for $G_\beta$ and a signature $\mathcal{X}^\beta$ such that $L_\beta$ induces $\mathcal{X}^\beta$ and $\mathcal{X}^\beta$ is compatible with $\mathcal{X}^\alpha$. For each $q\in[p]$, if edge~$e$ is not part of the forest with index $q$ in $L_\alpha$, then the forest is a forest in $G_\beta$ as well. Then, we set $Z^\beta_q := Z^\alpha_q$ and $\mathcal{Y}^\beta_q := \mathcal{Y}^\alpha_q$. If edge~$e$ is part of the forest with index~$q$ in~$L_\alpha$, then deleting edge~$e$ from the forest with index~$q$ disconnects a tree of the forest such that two trees result, with $v$ in the one tree and $w$ in the other tree. Let $M\in \mathcal{Y}^\alpha_q$ be the segment in the segmentation~$q$ with $v,w\in M$. Let $M_1,M_2$ be the induced sets by splitting tree~$T_\alpha$ in the forest with index~$q$ in~$L_\alpha$ at edge~$e$, that is, if $T_1$ and $T_2$ are the connected subgraphs of $T_\alpha\backslash\{e\}$, then $M_1:=V(T_1)\cap B_\alpha$ and $M_2:=V(T_2)\cap B_\alpha$. We set $\mathcal{Y}^\beta_q := (\mathcal{Y}^\alpha_q\backslash \{M\})\cup \{M_1,M_2\}$ and $Z^\beta_q:=Z^\alpha_q$. Signature~$\mathcal{X}^\beta$ for node~$\beta$ is compatible with signature~$\mathcal{X}^\alpha$ for node~$\alpha$.

Let $L_\beta$ be the set of $p$ forests in $L_\alpha$ restricted to edge set~$E_\beta$. Then, $L_\beta$ is a partial solution for~$G_\beta$ and induces signature~$\mathcal{X}^\beta$. If edge~$e$ appears in more than one of the $p$ forests in $L_\alpha$, then $c(L_\beta)=c(L_\alpha)-1$ and the signatures $\mathcal{X}^\alpha$ and $\mathcal{X}^\beta$ are share-compatible. It follows that
\[
T[\alpha,\mathcal{X}^\alpha] = c(L_\alpha) = c(L_\beta) + 1 \geq T[\beta,\mathcal{X}^\beta] + 1 \geq  \min_{\text{$\mathcal{X'}^{\beta}$ compatible with $\mathcal{X}^\alpha$}} T[\beta,\mathcal{X'}^{\beta}] + 1 .
\]
If edge~$e$ appears in at most one of the $p$ forests in the partial solution~$L_\alpha$, then
\[
T[\alpha,\mathcal{X}^\alpha] = c(L_\alpha) = c(L_\beta) \geq T[\beta,\mathcal{X}^\beta] \geq  \min_{\text{$\mathcal{X'}^{\beta}$ compatible with $\mathcal{X}^\alpha$}} T[\beta,\mathcal{X'}^{\beta}].
\]

``$\leq$'': Let $L_\beta$ be a partial solution for~$G_\beta$ with signature~$\mathcal{X}^\beta$ compatible with~$\mathcal{X}^\alpha$ such that $T[\beta,\mathcal{X}^\beta]=c(L_\beta)$ and $T[\beta,\mathcal{X}^\beta]= 
\min_{\text{$\mathcal{X'}^{\beta}$ compatible with $\mathcal{X}^\alpha$}} T[\beta,\mathcal{X'}^{\beta}]$. We construct a partial solution~$L_\alpha$ for~$G_\alpha$ that induces signature~$\mathcal{X}^\alpha$. For each $q\in[p]$, if condition~(i) holds, that is, if $\mathcal{Y}^\alpha_q=\mathcal{Y}^\beta_q$, then we set the forest with index~$q$ in~$L_\alpha$ to the forest with index~$q$ in the partial solution~$L_\beta$. In case of condition (ii), that is, if $v$ and $w$ belong to the same segment in the segmentation $q$ of $\mathcal{X}^\alpha$ but are not in the same segment in the segmentation $q$ of $\mathcal{X}^\beta$, then we add edge~$e$ to the forest with index~$q$ in the partial solution~$L_\beta$ and we set the resulting forest as the forest with index~$q$ in the partial solution~$L_\alpha$. Since the vertices~$v$ and~$w$ are in two vertex-disjoint trees in the forest with index~$q$ in~$L_\beta$, adding edge~$e$ connects the two trees at the vertices~$v$ and~$w$, which results again in a tree. The set of~$p$~forests~$L_\alpha$, constructed as mentioned above, is a partial solution for~$G_\alpha$ and induces signature~$\mathcal{X}^\alpha$. If the signatures~$\mathcal{X}^\beta$ and~$\mathcal{X}^\alpha$ are share-compatible, then the partial solution~$L_\alpha$ is the result of adding edge~$e$ to at least two forests in~$L_\beta$, and thus, the number of common edges of the forests increases by exactly one, i.e.\ $c(L_\beta) = c(L_\alpha)-1$. It follows that
\[
\min_{\text{$\mathcal{X'}^{\beta}$ compatible with $\mathcal{X}^\alpha$}} T[\beta,\mathcal{X'}^{\beta}] = T[\beta,\mathcal{X}^\beta] = c(L_\beta) = c(L_\alpha) - 1 \geq T[\alpha,\mathcal{X}^\alpha] - 1 .
\]
If the signatures~$\mathcal{X}^\beta$ and~$\mathcal{X}^\alpha$ are compatible but not share-compatible, then the partial solution~$L_\alpha$ is the result of adding edge~$e$ to at most one forest in~$L_\beta$. It follows that
\[
\min_{\text{$\mathcal{X'}^{\beta}$ compatible with $\mathcal{X}^\alpha$}} T[\beta,\mathcal{X'}^{\beta}] = T[\beta,\mathcal{X}^\beta] = c(L_\beta) = c(L_\alpha) \geq T[\alpha,\mathcal{X}^\alpha].
\]
\fi{}

\vspace{3pt}\emph{Running time}. For each signature~$\mathcal{X}^\alpha$, we check all signatures~$\mathcal{X}^\beta$  for node~$\beta$ for compatibility, that means, we need to check for each~$q\in[p]$ whether the segmentations are equal (i) or whether the segmentation~$q$ of~$\mathcal{X}^\alpha$ is derived by merging two segments in the segmentation~$q$ of $\mathcal{X}^\beta$ (ii). To check condition (i) as well as to check condition (ii) can be done in~$O(p\cdot |B_\alpha|^2)$~time. Therefore, the overall running time for filling all entries in~$T$ for an introduce edge node is in~$O(p\cdot (\omega+2)^{2\cdot p\cdot (\omega+1)+2})$.
\iflong{}
\subparagraph*{Join Node.}
Let $\alpha$ be a join node of $\mathbb{T}$ and let $\beta,\gamma$ be the two child nodes of $\alpha$. A signature~$\mathcal{X}^\alpha$ for node $\alpha$ and a pair of two signatures~$\mathcal{X}^\beta$ for node~$\beta$ and~$\mathcal{X}^\gamma$ for node $\gamma$ are \emph{compatible} if for all $q\in[p]$ it holds that
\begin{enumerate}[(i)]
\item $Z^\alpha_q=Z^\beta_q=Z^\gamma_q$,
\item $v,w\in M^\alpha\in \mathcal{Y}^\alpha_q$ with $v\neq w$ if and only if there exists $\ell\geq 1$ and $M_1,\ldots,M_\ell\in \mathcal{Y}^\beta_q\cup\mathcal{Y}^\gamma_q$ with $|M_i\cap M_{i+1}|=1$ for all $i=1,\ldots,\ell-1$ and $v\in M_1$ and $w\in M_\ell$,
\item for all $M^\beta\in \mathcal{Y}^\beta_q$ and $M^\gamma\in \mathcal{Y}^\gamma_q$ holds $|M^\beta\cap M^\gamma| \leq 1$, and
\item there do not exist $\ell\geq 3$ and $M_1,\ldots,M_\ell\in \mathcal{Y}^\beta_q\cup\mathcal{Y}^\gamma_q$ with $|M_i\cap M_{i+1}|=1$ for all $i=1,\ldots,\ell-1$ and $M_i\neq M_j$ for all $i,j\in[\ell]$, $i\neq j$, such that $v\in M_1$ and $v\in M_\ell$. 
\end{enumerate} 
We claim that
\[
T[\alpha,\mathcal{X}^\alpha] = \min\limits_{\text{$(\mathcal{X}^\beta,\mathcal{X}^\gamma)$ compatible with $\mathcal{X}^\alpha$}} (T[\beta,\mathcal{X}^\beta]+T[\gamma,\mathcal{X}^\gamma] ) .
\]
In other words, a signature~$\mathcal{X}^\alpha$ is compatible with a pair of two signatures~$\mathcal{X}^\beta$ for node~$\beta$ and $\mathcal{X}^\gamma$ for node~$\gamma$, if and only if for every $q\in[p]$ it holds that 
\begin{enumerate}[(i)]
\item the vertices that appear in the segmentations with index~$q$ in all three signatures are the same,
\item every segment in the segmentation~$q$ of~$\mathcal{X}^\alpha$ is a union of segments in the segmentation~$q$ in~$\mathcal{X}^\beta$ and segments in the segmentation~$q$ in~$\mathcal{X}^\gamma$,
\item every pair of segments with one segment in the segmentation~$q$ in~$\mathcal{X}^\beta$ and one segment in the segmentation~$q$ in~$\mathcal{X}^\gamma$ has at most one vertex in~$B_\beta=B_\gamma$ in common, and
\item there is no chain of at least three segments in the union of the segmentations with index~$q$ in $\mathcal{X}^\beta$ and $\mathcal{X}^\gamma$ with one vertex in the first and last segment.
\end{enumerate}
We say that segments $M_1,\ldots,M_\ell$, $\ell\geq 2$, form a \emph{chain} of segments, if $|M_i\cap M_{i+1}|=1$ for all $i=1,\ldots,\ell-1$ and $M_i\neq M_j$ for all $i,j\in[\ell]$, $i\neq j$.

Intuitively, (i)-(iv) define how to combine segmentations in signatures of child nodes to segmentations in a signature of a join node. The condition (ii) ensures that every forest with index~$q$ in a partial solution $L_\alpha$ for $G_\alpha$ is a union of the two forests with index~$q$ in some partial solutions~$L_\beta$ for~$G_\beta$ and~$L_\gamma$ for~$G_\gamma$, respectively. 

\begin{figure}[!t]
\centering
\begin{tikzpicture}[x=0.8cm, y=0.8cm]

\tikzstyle{every node}=[scale=0.4798255,font={\huge}];

\node[align=center, red] (one) at (4,-0.5){not possible};

\draw[color=black] (0+4,0-2) ellipse [x radius=1.25cm, y radius=0.35cm]{};

\node (v) at (-0.5+4,-2)[shape=circle,draw]{};
\node (w) at (0.5+4,-2)[shape=circle,draw]{};
\node (u) at (-1+4,-4)[shape=circle, draw]{};
\node (x) at (1+4,-4)[shape=circle, draw]{};

\def\colone{gray!90}%
\def\coltwo{gray!40}%

\draw[decorate,decoration=snake,\colone] (v) to [out=225, in=135](u);
\draw[decorate,decoration=snake,\coltwo] (w) to [out=-45, in=45](x);

\draw[decorate,decoration=snake,\colone] (u) to [out=45, in=135](x);
\draw[decorate,decoration=snake,\coltwo] (u) to [out=-45, in=-135](x);

\node[align=center, red] (one) at (2+4+5,-0.5){not allowed};

\draw[color=black] (0+4+5,0-2-1) ellipse [x radius=1.25cm, y radius=0.35cm]{};

\node (v) at (-0.5+4+5,-2-1)[shape=circle,draw]{};
\node (w) at (0.5+4+5,-2-1)[shape=circle,draw]{};

\draw[decorate,decoration=snake,\colone] (v) to [out=90, in=180](0+4+5,-2-1+1.5);
\draw[decorate,decoration=snake,\colone] (w) to [out=90, in=0](0+4+5,-2-1+1.5);
\draw[decorate,decoration=snake,\coltwo] (v) to [out=270, in=180](0+4+5,-2-1-1.5);
\draw[decorate,decoration=snake,\coltwo] (w) to [out=270, in=0](0+4+5,-2-1-1.5);

\draw[color=black] (0+4+5+4.5,0-2-1) ellipse [x radius=2cm, y radius=0.35cm]{};

\node (v) at (-0.5+4+5+4.5,-2-1)[shape=circle,draw]{};
\node (w) at (0.5+4+5+4.5,-2-1)[shape=circle,draw]{};

\node (u) at (-1.5+4+5+4.5,-2-1)[shape=circle,draw]{};
\node (x) at (1.5+4+5+4.5,-2-1)[shape=circle,draw]{};

\draw[decorate,decoration=snake,\colone] (u) to [out=270, in=180](-1+4+5+4.5,-2-1-1.5);
\draw[decorate,decoration=snake,\colone] (v) to [out=270, in=0](-1+4+5+4.5,-2-1-1.5);

\draw[decorate,decoration=snake,\colone] (w) to [out=270, in=180](1+4+5+4.5,-2-1-1.5);
\draw[decorate,decoration=snake,\colone] (x) to [out=270, in=0](1+4+5+4.5,-2-1-1.5);

\draw[decorate,decoration=snake,\coltwo] (v) to [out=90, in=180](0+4+5+4.5,-2-1+1.25);
\draw[decorate,decoration=snake,\coltwo] (w) to [out=90, in=0](0+4+5+4.5,-2-1+1.25);

\draw[decorate,decoration=snake,\coltwo] (u) to [out=90, in=180](0+4+5+4.5,-2-1+1.75);
\draw[decorate,decoration=snake,\coltwo] (x) to [out=90, in=0](0+4+5+4.5,-2-1+1.75);

\end{tikzpicture}
\caption{Sketch of scenarios when combining two forests with the same index in two partial solutions for the two child nodes of a join node.}
\label{fig:TDCAlgo-jn-scen}
\end{figure}
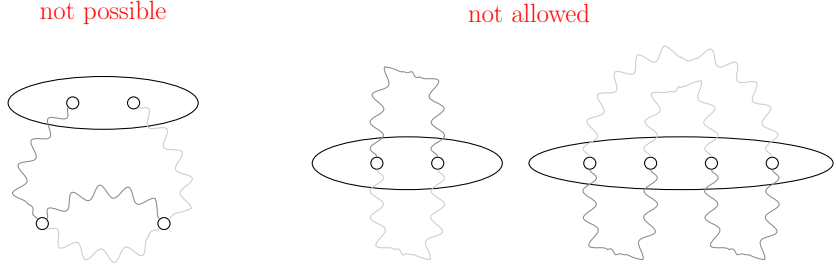
\Cref{fig:TDCAlgo-jn-scen} exemplifies three scenarios of cycle creation caused by a union of two forests, where the colors dark-gray and light-gray %
indicate each of the two forests. We used curved lines to highlight that there could be trees connecting two vertices. The scenario on the left-hand side illustrates a situation that is not possible. The scenario implies that there are vertices that have been forgotten in the subtrees rooted at each child node of the tree decomposition, which is not possible by the definition of a tree decomposition (cf.\ \cref{sec:prelim}). The two scenarios on the right-hand side illustrate two scenarios that are not allowed to occur by our definition of compatibility of join nodes. More precisely, conditions (iii) and~(iv) ensure that none of these two scenarios occurs.  

The conditions (iii) and (iv) ensure that a union of two forests in~$L_\beta$ and~$L_\gamma$ does not close a cycle. 
Condition~(iii) prevents the following creation of cycles. If there is a tree~$T_\beta$ in the forest with index~$q$ in~$L_\beta$ and a tree~$T_\gamma$ in the forest with index~$q$ in~$L_\gamma$ that have at least two vertices in common, then the union of these two trees creates a cycle in $G_\alpha$. 
Condition~(iv) prevents the following creation of cycles. Let $v$ be a vertex in $V_\alpha$ such that there exist some trees $T_1,\ldots, T_\ell$ in the forests with index~$q$ in~$L_\beta$ and~$L_\gamma$ such that $|V(T_i)\cap V(T_{i+1})|=1$ for $i=1,\ldots,\ell-1$ and $v\in V(T_1)$ and $v\in V(T_\ell)$. Then the graph $T^\alpha=T_1\cup\ldots\cup T_\ell$ as union of the trees in~$G_\alpha$ contains a cycle and vertex~$v$ is part of a cycle in~$T^\alpha$. 

\iflong{}
\begin{figure}[!t]
\centering
\begin{tikzpicture}

\draw[color=black] (0,0+3) ellipse [x radius=3cm, y radius=1cm]{};

\fill[lightgray] (2,0.7453+3) -- (2.2,0.6798+3) -- (2.4,0.6+3) -- (2.6,0.4988+3)
-- (2.8,0.3590+3) -- (2.9,0.2560+3) -- (2.95,0.1818+3) -- (2.975,0.1288+3)
-- (3,0+3) 
-- (2.975,- 0.1288+3) -- (2.95,- 0.1818+3) -- (2.9,-0.2560+3) -- (2.8,-0.3590+3)
 -- (2.6,-0.4988+3) -- (2.4,-0.6+3) -- (2.2,-0.6798+3)
 -- (2,-0.7453+3)  -- cycle;

\node (s) at (-1,0+3)[shape=circle, draw]{$s$};
\node (a) at (-1.0,-2.5+3)[shape=circle, draw]{$a$};
\node (b) at (0.2,0+3)[shape=circle, draw]{$b$};
\node (c) at (-2.2,0+3)[shape=circle, draw]{$c$};
\node (d) at (2.5,0+3)[shape=circle, draw]{$d$};
\node (t) at (1.4,0+3)[shape=circle, draw]{$t$};

\draw (-1.6,0.85+3) -- (-1.6,-0.85+3);
\draw (0.9,0.95+3) -- (0.9,-0.95+3);
\draw (2,0.7453+3) -- (2,-0.7453+3);

\draw[blue] (s) -- (a);
\draw[blue] (a) -- (b);
\draw[dashed] (a) -- (c);
\draw[dashed] (s) -- (c);

\node[align=center] at (-0.55,2.5+2){$\mathcal{Y}^{\alpha_\ell}_q=\left\{\{s,b\},\{c\},\{t\}\right\}$};
\node[align=center] at (2.5,2.5+2){$Z^{\alpha_\ell}_q=\{d\}$};

\node[align=center] at (-2.95,1.75+2){$B_{\alpha_\ell}$};
\node[align=center] at (1,-0.75+2){$G_{\alpha_\ell}$};

\def\x{2.5}

\draw[color=black] (0,0-\x) ellipse [x radius=3cm, y radius=1cm]{};

\fill[lightgray] (2,0.7453-\x) -- (2.2,0.6798-\x) -- (2.4,0.6-\x) -- (2.6,0.4988-\x)
-- (2.8,0.3590-\x) -- (2.9,0.2560-\x) -- (2.95,0.1818-\x) -- (2.975,0.1288-\x)
-- (3,0-\x) 
-- (2.975,- 0.1288-\x) -- (2.95,- 0.1818-\x) -- (2.9,-0.2560-\x) -- (2.8,-0.3590-\x)
 -- (2.6,-0.4988-\x) -- (2.4,-0.6-\x) -- (2.2,-0.6798-\x)
 -- (2,-0.7453-\x)  -- cycle;

\node (s) at (-1,0-\x)[shape=circle, draw]{$s$};
\node (b) at (0.2,0-\x)[shape=circle, draw]{$b$};
\node (c) at (-2.2,0-\x)[shape=circle, draw]{$c$};
\node (d) at (2.5,0-\x)[shape=circle, draw]{$d$};
\node (e) at (2,-2.5-\x)[shape=circle, draw]{$e$};
\node (t) at (1.4,0-\x)[shape=circle, draw]{$t$};

\draw (-1.6,0.85-\x) -- (-1.6,-0.85-\x);
\draw (-0.4,0.995-\x) -- (-0.4,-0.995-\x);
\draw (2,0.7453-\x) -- (2,-0.7453-\x);

\draw[dashed] (d) -- (e);
\draw[dashed] (e) -- (t);
\draw[darkgreen] (b) -- (t);

\node[align=center] at (-0.55,2.5-\x-1){$\mathcal{Y}^{\alpha_r}_q=\left\{\{s\},\{b,t\},\{c\}\right\}$};
\node[align=center] at (2.5,2.5-\x-1){$Z^{\alpha_r}_q=\{d\}$};

\node[align=center] at (-2.95,1.75+2-\x-3){$B_{\alpha_r}$};
\node[align=center] at (1,-0.75+2-\x-3){$G_{\alpha_r}$};

\draw[color=black] (0+8,0) ellipse [x radius=3cm, y radius=1cm]{};

\fill[lightgray] (2+8,0.7453) -- (2.2+8,0.6798) -- (2.4+8,0.6) -- (2.6+8,0.4988)
-- (2.8+8,0.3590) -- (2.9+8,0.2560) -- (2.95+8,0.1818) -- (2.975+8,0.1288)
-- (3+8,0) 
-- (2.975+8,- 0.1288) -- (2.95+8,- 0.1818) -- (2.9+8,-0.2560) -- (2.8+8,-0.3590)
 -- (2.6+8,-0.4988) -- (2.4+8,-0.6) -- (2.2+8,-0.6798)
 -- (2+8,-0.7453)  -- cycle;

\draw (2+8,0.7453) -- (2+8,-0.7453);

\node (s) at (-1+8,0)[shape=circle, draw]{$s$};
\node (a) at (-1.0+8,-2.5)[shape=circle, draw]{$a$};
\node (b) at (0.2+8,0)[shape=circle, draw]{$b$};
\node (c) at (-2.2+8,0)[shape=circle, draw]{$c$};
\node (d) at (2.5+8,0)[shape=circle, draw]{$d$};
\node (e) at (2+8,-2.5)[shape=circle, draw]{$e$};
\node (t) at (1.4+8,0)[shape=circle, draw]{$t$};

\draw (-1.6+8,0.85) -- (-1.6+8,-0.85);
\draw (2+8,0.7453) -- (2+8,-0.7453);

\draw[blue] (s) -- (a);
\draw[blue] (a) -- (b);
\draw[dashed] (a) -- (c);
\draw[dashed] (s) -- (c);
\draw[dashed] (d) -- (e);
\draw[dashed] (e) -- (t);
\draw[darkgreen] (b) -- (t);

\node[align=center] at (0+8-0.5,1.5){$\mathcal{Y}^\alpha_q = \left\{\{s,b,t\},\{c\}\right\}$};
\node[align=center] at (0+8+2.25,1.5){$Z^\alpha_q=\{d\}$};

\node[align=center] at (-2.95+8,1.75+2-3){$B_\alpha$};
\node[align=center] at (1+8,-0.75+2-3){$G_\alpha$};

\end{tikzpicture}
\caption{Example for a segmentation~$q$ of three compatible signatures $\mathcal{X}^\alpha$, $\mathcal{X}^{\alpha_\ell}$, and $\mathcal{X}^{\alpha_r}$ for a join node~$\alpha$ with child nodes~$\alpha_\ell$ and~$\alpha_r$ in~$\mathbb{T}^*$.}
\label{fig:TDCAlgo-jn}
\end{figure}
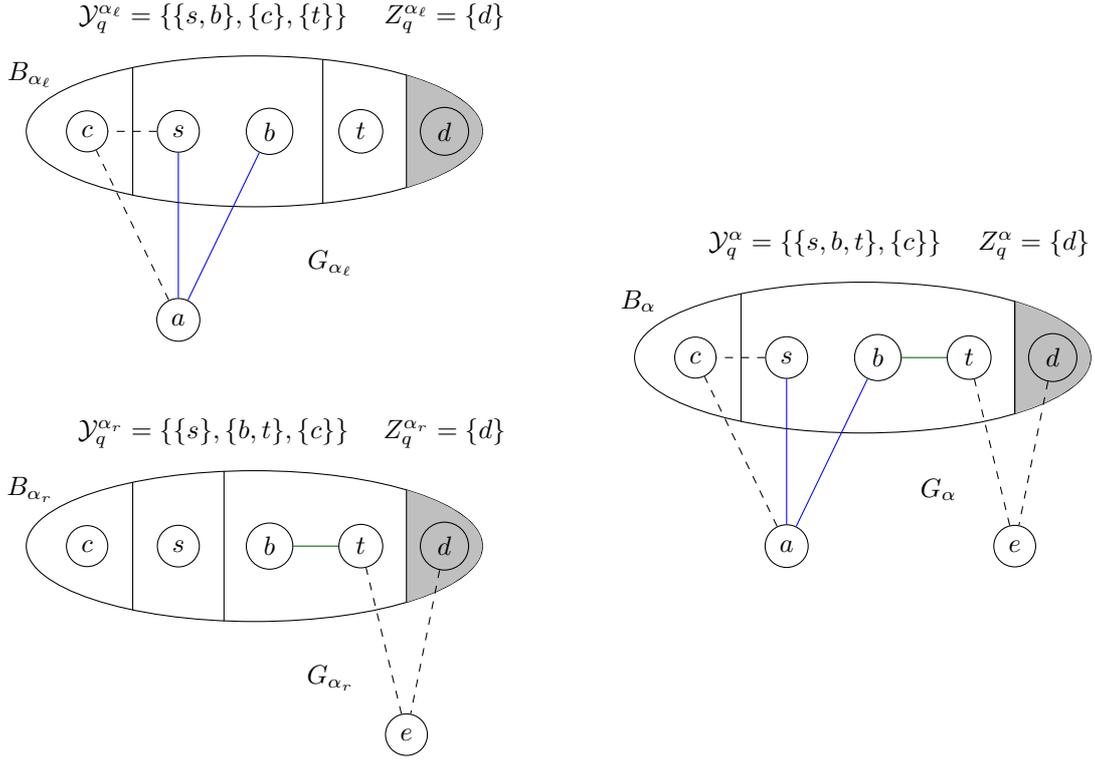
In \Cref{fig:TDCAlgo-jn}, we provide an example for a segmentation~$q$ of three compatible signatures $\mathcal{X}^\alpha$, $\mathcal{X}^{\alpha_\ell}$, and $\mathcal{X}^{\alpha_r}$ for a join node $\alpha$ with child nodes~$\alpha_\ell$ and $\alpha_r$ in $\mathbb{T}^*$. The segments $\{s,b\}$ and $\{t\}$ in $\mathcal{Y}^{\alpha_\ell}_q$ together with the segments $\{s\}$ and $\{b,t\}$ in~$\mathcal{Y}^{\alpha_r}$ form segment~$\{s,b,t\}$ in~$\mathcal{Y}^{\alpha}_q$. Note that the four conditions for compatibility hold. Condition~(i) holds since $Z^{\alpha}_q=Z^{\alpha_\ell}_q=Z^{\alpha_r}_q=\{d\}$. Moreover, note that condition~(iii) holds. According to condition~(ii), note that for any pair in the segment~$\{s,b,t\}\in\mathcal{Y}^\alpha_q$, the segments~$\{s,b\}\in\mathcal{Y}^{\alpha_\ell}_q$ and~$\{b,t\}\in\mathcal{Y}^{\alpha_r}_q$ provide a required chain of segments. Conversely, for any possible chain of segments in~$\mathcal{Y}^{\alpha_\ell}_q\cup \mathcal{Y}^{\alpha_r}_q$, segment~$\{s,b,t\}\in\mathcal{Y}^\alpha_q$ is the required segment in condition~(ii). According to condition~(iv), note that there is no chain of at least three segments in~$\mathcal{Y}^{\alpha_\ell}_q\cup \mathcal{Y}^{\alpha_r}_q$ such that a vertex~$v\in\{s,b,c,t\}$ appears in the first and last segment of the chain.
\fi

\vspace{3pt}\emph{Correctness}. ``$\geq$'': Let $L_\alpha$ be a partial solution for~$G_\alpha$ with signature~$\mathcal{X}^\alpha$ such that $T[\alpha,\mathcal{X}^\alpha] = c(L_\alpha)$. We construct a partial solution~$L_\beta$ for~$G_\beta$, a partial solution~$L_\gamma$ for~$G_\gamma$ and two signatures~$\mathcal{X}^\beta$ and~$\mathcal{X}^\gamma$, such that the pair~$(\mathcal{X}^\beta,\mathcal{X}^\gamma)$ is compatible with~$\mathcal{X}^\alpha$, the partial solution~$L_\beta$ induces signature~$\mathcal{X}^\beta$ and the partial solution~$L_\gamma$ induces signature~$\mathcal{X}^\gamma$. If we restrict each forest in~$L_\alpha$ to the edge sets~$E_\beta$ and~$E_\gamma$, then each forest restricted to~$E_\beta$ is a forest in~$G_\beta$ and each forest restricted to~$E_\gamma$ is a forest in~$G_\gamma$. Therefore, restricting each forest in~$L_\alpha$ to~$E_\beta$ yields a partial solution~$L_\beta$ for~$G_\beta$, and restricting each forest in~$L_\alpha$ to~$E_\gamma$ yields a partial solution~$L_\gamma$ for~$G_\gamma$. We set~$\mathcal{X}^\beta$ and~$\mathcal{X}^\gamma$ as the signatures induced by the partial solutions~$L_\beta$ and~$L_\gamma$ respectively. 

We show that the pair of signatures~$\mathcal{X}^\beta$ and~$\mathcal{X}^\gamma$ is compatible with~$\mathcal{X}^\alpha$. 
Condition (i) holds for every~$q\in[p]$ since every vertex that does not appear in the forest with index~$q$ in~$L_\alpha$ neither appears in the forests with index~$q$ nor in~$L_\beta$ nor in~$L_\gamma$. Since the segmentations with index~$q$ are induced by~$L_\beta$ and~$L_\gamma$, it follows that~$Z^\alpha_q = Z^\beta_q = Z^\gamma_q$ for all~$q\in[p]$.
Suppose that there exists a~$q\in[p]$ such that condition~(iii) does not hold for~$q\in[p]$. This means that there exist~$M^\beta\in \mathcal{Y}^\beta_q$ and~$M^\gamma\in \mathcal{Y}^\gamma_q$ with~$|M^\beta\cap M^\gamma|\geq 2$. Let~$v,w\in M^\beta\cap M^\gamma$. Let $T^\beta$ be the tree in the forest with index~$q$ in~$L_\beta$ corresponding to $M^\beta$ and let $T^\gamma$ be the tree in the forest with index~$q$ in~$L_\gamma$ corresponding to~$M^\gamma$. Note that $v,w\in V(T^\beta)\cap V(T^\gamma)$. By our construction of $L_\beta$ and $L_\gamma$, there is a tree $T^\alpha$ in the forest with index $q$ in $L_\alpha$, such that $T^\beta$ is a subtree of $T^\alpha$ restricted to $E_\beta$, and $T^\gamma$ is a subtree of $T^\alpha$ restricted to $E_\gamma$. Since $v,w\in V(T^\alpha)$, there is a $v$-$w$~path in~$T^\alpha$ using only edges in~$E_\beta$ and a $v$-$w$~path in~$T^\alpha$ using only edges in~$E_\gamma$. Since $E_\beta\cap E_\gamma=\emptyset$, the two paths form a cycle in $T^\alpha$. This is a contradiction to the fact that $T^\alpha$ is a tree.
For condition (ii), direction ``$\Rightarrow$'', we consider~$q\in[p]$, $M^\alpha\in\mathcal{Y}^\alpha_q$ and~$v,w\in M^\alpha$, $v\neq w$, if such a~$M^\alpha\in\mathcal{Y}^\alpha_q$ exists. Segment~$M^\alpha$ corresponds to a tree~$T^\alpha$ in the forest with index~$q$ in~$L_\alpha$. Since~$v,w\in M^\alpha$, the vertices~$v$ and~$w$ appear in tree~$T^\alpha$. Since $E_\alpha=E_\beta\cup E_\gamma$ and $E_\beta\cap E_\gamma=\emptyset$, the restriction of~$T^\alpha$ to~$E_\beta$ and~$E_\gamma$ splits the tree in maximal subtrees~$T_1,\ldots,T_\ell$ alternating by~$G_\beta$ and~$G_\gamma$. Note that $|V(T_i)\cap V(T_j)|\leq 1$ for all $i,j\in[\ell]$, $i\neq j$, and $T^\alpha=T_1\cup\ldots\cup T_\ell$. Let~$M_1,\ldots,M_\ell\in \mathcal{Y}^\beta_q\cup \mathcal{Y}^\gamma_q$ be segments such that segment~$M_i$ corresponds to subtree~$T_i$ for all $i\in[\ell]$. We claim that if $|V(T_i)\cap V(T_j)|=1$ for some $i\neq j$ and $u\in V(T_i)\cap V(T_j)$, then $u\in B_\alpha$. 

Suppose that $u\not\in B_\alpha=B_\beta=B_\gamma$. Since the trees~$T_1,\ldots, T_\ell$ are maximal subtrees of tree~$T^\alpha$ restricted to $E_\beta$ and $E_\gamma$, one of the trees~$T_i$ or $T_j$ is a tree in~$G_\beta$, and the other is a tree in~$G_\gamma$. Therefore, vertex~$u$ is incident with an edge in~$E_\beta$ and an edge in~$E_\gamma$. Thus, vertex~$u$ appears in the subtree rooted at node $\beta$ and in the subtree rooted at node $\gamma$. This is a contradiction to the fact that $\mathbb{T}$ is a tree decomposition, and hence, $u\in B_\alpha=B_\beta=B_\gamma$.

Moreover, if $|V(T_i)\cap V(T_j)|=1$ for some $i\neq j$ and $u\in V(T_i)\cap V(T_j)$, then $u\in M_i$ and $u\in M_j$. If there is a $j\in[\ell]$ such that $v,w\in M_j$, then we are done. Thus, let $v\in M_{j_1}$ and $w\in M_{j_2}$ with $j_1,j_2\in[\ell]$, $j_1\neq j_2$. Then there exists a subset $S_1,\ldots, S_{\ell'}$ of the trees $T_1,\ldots,T_\ell$ with $\ell'\leq \ell$, $S_1=T_{j_1}$, $S_{\ell'} = T_{j_2}$ and $|V(S_i)\cap V(S_{i+1})|=1$ for all $i=1,\ldots,\ell'-1$.   Let $M_{S_1},\ldots,M_{S_{\ell'}}$ be the corresponding segments to $S_1,\ldots, S_{\ell'}$. Then, $|M_{S_i}\cap M_{S_{i+1}}|=1$ for all $i=1,\ldots,\ell'-1$, $v\in M_{S_1}$ and $w\in M_{S_{\ell'}}$, and hence, direction ``$\Rightarrow$'' of condition~(ii) is proven.
For condition~(ii), direction ``$\Leftarrow$'', we consider~$q\in[p]$, $\ell\geq 1$ and $M_1,\ldots,M_\ell\in \mathcal{Y}^\beta_q\cup\mathcal{Y}^\gamma_q$ with $|M_i\cap M_{i+1}|=1$ for all $i=1,\ldots,\ell-1$, $v\in M_1$ and $w\in M_\ell$. We show that there exists a segment~$M^\alpha\in \mathcal{Y}^\alpha_q$ with $v,w\in M^\alpha$. Let $T_1,\ldots, T_\ell$ be trees in the forests with index $q$ in $L_\beta$ and $L_\gamma$ such that tree~$T_i$ corresponds to segment~$M_i$ for all $i\in[\ell]$. Since $|M_i\cap M_{i+1}|=1$ for all $i=1,\ldots,\ell-1$, it follows that $|V(T_i)\cap V(T_{i+1})|=1$ for all~$i=1,\ldots,\ell-1$. Therefore, $T_1,\ldots,T_\ell$ are subtrees of a tree~$T^\alpha$ in the forest with index~$q$ in $L_\alpha$ with~$v,w\in V(T^\alpha)$. Let~$M^\alpha$ be the segment corresponding to~$T^\alpha$. Then, segment~$M^\alpha$ contains the vertices~$v$ and~$w$, i.e.\ $v,w\in M^\alpha$, and hence, direction ``$\Leftarrow$'' of condition~(ii) is proven.

Suppose that there exists a~$q\in[p]$ such that condition~(iv) does not hold for~$q\in[p]$. 
Then there exist a vertex~$v\in B_\alpha$, an integer~$\ell\geq 3$ and segments~$M_1,\ldots,M_\ell\in \mathcal{Y}^\beta_q\cup\mathcal{Y}^\gamma_q$ with~$|M_i\cap M_{i+1}|=1$ for all~$i=1,\ldots,\ell-1$ and~$M_i\neq M_j$ for all~$i\neq j$, such that $v\in M_1$ and $v\in M_\ell$. Let $T_1,\ldots,T_\ell$ be the trees in the forests with index~$q$ in~$L_\beta$ and~$L_\gamma$ such that tree~$T_i$ corresponds to segment~$M_i$ for all $i\in[\ell]$. Note that~$|V(T_i)\cap V(T_{i+1})|=1$ for all~$i=1,\ldots,\ell-1$, and vertex~$v$ appears in the trees~$T_1$ and~$T_\ell$.
For all~$i=1,\ldots,\ell-1$, let~$w_i$ be the vertex in the intersection~$V(T_i)\cap V(T_{i+1})$ of the vertex sets of the trees~$T_i$ and~$T_{i+1}$. 
By construction, the union of the trees~$T':=T_1\cup\ldots\cup T_\ell$ is a subtree of a tree~$T^\alpha$ in the forest with index~$q$ in~$L_\alpha$. 
Thus, the tuple~$(v,w_1,w_2,\ldots,w_{\ell-1},v)$ represents a cycle in~$T'$, and thus, in~$T^\alpha$. 
This is a contradiction to the fact that~$L_\alpha$ is a partial solution for~$G_\alpha$, and hence, condition~(iv) holds.

We conclude that the pair of signatures~$\mathcal{X}^\beta$ and~$\mathcal{X}^\gamma$ is compatible with~$\mathcal{X}^\alpha$. Since $E_\beta\cap E_\gamma=\emptyset$, the number of edges that appear in at least two forests in~$L_\alpha$ is the sum of the number of edges that appear in at least two forests in~$L_\beta$ and the number of edges that appear in at least two forests in~$L_\gamma$. It follows that
\begin{align*}
T[\alpha,\mathcal{X}^\alpha] &= c(L_\alpha) = c(L_\beta) + c(L_\gamma) \geq T[\beta,\mathcal{X}^\beta] + T[\gamma,\mathcal{X}^\gamma] \\
&\geq \min_{\text{$(\mathcal{X'}^\beta,\mathcal{X'}^\gamma)$ compatible with $\mathcal{X}^\alpha$}} \left(T[\beta,\mathcal{X'}^\beta] + T[\gamma,\mathcal{X'}^\gamma]\right) .
\end{align*}

``$\leq$'':  Let~$L_\beta$ and~$L_\gamma$ be partial solutions for~$G_\beta$ and~$G_\gamma$ with signatures~$\mathcal{X}^\beta$ and~$\mathcal{X}^\gamma$, as pair compatible with signature~$\mathcal{X}^\alpha$ for node~$\alpha$, such that~$T[\beta, \mathcal{X}^\beta]=c(L_\beta)$, $T[\gamma, \mathcal{X}^\gamma]=c(L_\gamma)$ and $T[\beta, \mathcal{X}^\beta]+T[\gamma, \mathcal{X}^\gamma] = \min_{\text{$(\mathcal{X'}^\beta,\mathcal{X'}^\gamma)$ compatible with $\mathcal{X}^\alpha$}} (T[\beta,\mathcal{X'}^\beta]+T[\gamma,\mathcal{X'}^\gamma])$. We construct a partial solution~$L_\alpha$ for~$G_\alpha$ with signature~$\mathcal{X}^\alpha$. We claim that for each~$q\in[p]$, the union of the forests with index~$q$ in~$L_\beta$ and~$L_\gamma$ yields a forest in~$G_\alpha$, that induces the segmentation~$(\mathcal{Y}^\alpha_q,Z^\alpha_q)$ in signature~$\mathcal{X}^\alpha$. 
Let~$B:=B_\alpha$. We remark that~$B_\alpha=B_\beta=B_\gamma$ since~$\alpha$ is a join node in~$\mathbb{T}$. We claim that the intersection of the vertex sets of~$G_\beta$ and~$G_\gamma$ are only the vertices in~$B$, that is~$V_\beta\cap V_\gamma=B$. Suppose that there is a vertex~$v\in(V_\beta\cap V_\gamma)\backslash B$. Then the graph induced by the node set~$\{\rho\in V(T_\mathbb{T})\mid v\in B_\rho\}$ is not connected. This contradicts the fact that~$\mathbb{T}$ is a tree decomposition, and thus, $V_\beta\cap V_\gamma = B$. 

Recall that for each~$q\in[p]$, the zero-segments are equal in all three segmentations, that is,~$Z^\alpha_q=Z^\beta_q=Z^\gamma_q$. Hence, the vertex sets in both forests with index~$q$ in~$L_\beta$ and~$L_\gamma$ are the same. In addition, we know that~$E_\beta\cap E_\gamma=\emptyset$ and therefore, the two forests with index~$q$ in~$L_\beta$ and~$L_\gamma$ do not have any edge in common. 
We need to show that for all~$q\in[p]$ the union of the forests with index~$q$ in~$L_\beta$ and~$L_\gamma$ does not contain a cycle in~$G_\alpha$. Suppose there is a~$q\in[p]$ such that the union of the forests with index~$q$ in~$L_\beta$ and~$L_\gamma$ contains a cycle in~$G_\alpha$. 

\emph{Case 1}: There is a tree~$T_1$ in the forest with index~$q$ in~$L_\beta$ and a tree~$T_2$ in the forest with index~$q$ in~$L_\gamma$, such that the union~$T_0:=T_1\cup T_2$ contains a cycle. Let~$M_1\in \mathcal{Y}^\beta_q$ and~$M_2\in \mathcal{Y}^\gamma_q$, such that segment~$M_1$ corresponds to tree~$T_1$ and segment~$M_2$ corresponds to tree~$T_2$. Since graph~$T_0$ contains a cycle in~$G_\alpha$, the trees~$T_1$ and~$T_2$ have at least two vertices in common. Because of~$V(T_1)\subseteq V_\beta$,~$V(T_2)\subseteq V_\gamma$ and~$V_\beta\cap V_\gamma = B$, the common vertices are in the vertex set~$B$. This means that there are two vertices~$v,w\in B$ such that~$v,w\in M_1$ and~$v,w\in M_2$. This contradicts condition (iii), and hence, there are no two trees in the forests with index~$q$ in~$L_\beta$ and~$L_\gamma$ such that their union contains a cycle in~$G_\alpha$. 

\emph{Case 2}: There are trees~$T_1,\ldots, T_{\ell}$,~$\ell\geq 3$, in the forests with index~$q$ in~$L_\alpha$ and~$L_\beta$, such that their union~$T_0:=T_1\cup\ldots\cup T_{\ell}$ contains a cycle in~$G_\alpha$ and $T_0\backslash T_i$ does not contain a cycle in~$G_\alpha$ for all~$i\in[\ell]$. It follows that $|V(T_i)\cap V(T_j)|\leq 1$ for all~$i,j\in[\ell]$ with~$i\neq j$. Let~$M_1,\ldots, M_{\ell}\in \mathcal{Y}^\beta_q\cup \mathcal{Y}^\gamma_q$, such that segment~$M_i$ corresponds to tree~$T_i$ for all~$i\in[\ell]$. Since~$T_0$ contains a cycle in~$G_\alpha$, there exists an ordering~$\pi$ on the set~$[\ell]$, such that~$|V(T_{\pi(i)})\cap V(T_{\pi(i+1)})|=1$ for all~$i=1,\ldots,\ell'-1$ and~$|V(T_{\pi(\ell)})\cap V(T_{\pi(1)})|=1$. Since~$V(T_i)\cap V(T_j)\subseteq B$ for all~$i,j\in[\ell]$ with~$i\neq j$, it follows that $|M_{\pi(i)}\cap M_{\pi(i+1)}|=1$ for all~$i=1,\ldots,\ell-1$. Let $v$ be the vertex such that~$\{v\} = V(T_{\pi(1)})\cap V(T_{\pi(\ell)})$. Since~$V(T_{\pi(1)})\cap V(T_{\pi(\ell)})\subseteq B$, the segments~$M_{\pi(1)}$ and~$M_{\pi(\ell)}$ contain vertex~$v$. Altogether, this contradicts condition~(iv), and hence, there are no trees~$T_1,\ldots, T_{\ell}$,~$\ell\geq 3$, in the forests with index~$q$ in~$G_\alpha$ and~$G_\beta$ such that their union~$T_0=T_1\cup\ldots\cup T_{\ell}$ contains a cycle in~$G_\alpha$.
We conclude that there are no two forests with index~$q$ in~$L_\beta$ and~$L_\gamma$, such that their union contains a cycle, and thus, $L_\alpha$ is a partial solution for~$G_\alpha$. Moreover, by condition~(ii), $L_\alpha$ induces signature~$\mathcal{X}^\alpha$. It follows that
\begin{align*}
\min\limits_{\text{$(\mathcal{X'}^\beta,\mathcal{X'}^\gamma)$ compatible with $\mathcal{X}^\alpha$}} (T[\beta,\mathcal{X'}^\beta]+T[\gamma,\mathcal{X'}^\gamma]) &= T[\beta,\mathcal{X}^\beta] + T[\gamma,\mathcal{X}^\gamma] = c(L_\beta) + c(L_\gamma)  \\
& = c(L_\alpha) \geq T[\alpha,\mathcal{X}^\alpha].
\end{align*}

\vspace{3pt}\emph{Running time}. For each signature~$\mathcal{X}^\alpha$, we check all pairs of signatures~$\mathcal{X}^\beta$,~$\mathcal{X}^\gamma$ for node~$\beta$ and~$\gamma$ for compatibility, that means we check conditions~(i)-(iv)  for~$O((|B_\beta|+1)^{p\cdot|B_\beta|}\cdot(|B_\gamma|+1)^{p\cdot|B_\gamma|})$ pairs of signatures with respect to the signature~$\mathcal{X}^\alpha$. Let~$B:=B_\alpha$. Recall that~$B_\alpha=B_\beta=B_\gamma$.

For each pair, we can check condition~(i) in~$O(p\cdot |B|^3)$~time. We can check conditions~(ii)-(iv) in~$O(p\cdot |B|^3)$~time as follows. 

For each~$q\in[p]$, we construct a graph~$\hat{G}_q$ in the following way. We set~$V(\hat{G}_q):=\{v_i\mid M_i\in \mathcal{Y}^\beta_q\cup \mathcal{Y}^\gamma_q\}$ and~$E(\hat{G}_q):=\{\{v_i,v_j\}\in V(\hat{G}_q)^2\mid |M_i\cap M_j|=1,\, M_i,M_j\in \mathcal{Y}^\beta_q\cup \mathcal{Y}^\gamma_q\}$. We can construct the graph~$\hat{G}_q$ in~$O(|B|^3)$~time. We can check condition~(iii) while constructing graph~$\hat{G}_q$. If condition~(iv) does not hold, then there exists a cycle in~$\hat{G}_q$. We can detect a cycle in~$\hat{G}_q$ in~$O(|B|^2)$~time, for example by applying a depth-first search on~$\hat{G}_q$, and thus, we can check condition~(iv) in~$O(|B|^2)$~time. 

For condition~(ii), we compare the corresponding segments of the vertex sets of the connected components in~$\hat{G}_q$ with the segments in~$\mathcal{Y}^\alpha_q$. Finding the connected components in~$\hat{G}_q$ can be done in~$O(|B|^2)$~time, for example by applying a depth-first search in~$\hat{G}_q$. The comparison of the segments can be done in~$O(|B|^2)$~time. Thus, condition~(ii) can be verified in~$O(|B|^2)$~time. We conclude that for each~$q\in[p]$, we can check conditions~(ii)-(iv) in~$O(|B|^3)$~time.

We can check conditions~(i)-(iv) for each pair of signatures for node~$\beta$ and node~$\gamma$ in~$O(p\cdot|B|^3)$~time. Therefore, the overall running time for filling all entries in~$T$ for a join node is in~$O(p\cdot (\omega+2)^{3\cdot p\cdot (\omega+1)+3})$.
\fi{}
\medskip

\ifshort{}
\looseness=-1 The bottleneck in computing the tables is in the join nodes; they induce a running time portion of $O(p\cdot (\omega+2)^{3\cdot p\cdot (\omega+1)+3})$. Hence, filling the tables for each node in the tree decomposition can be done in the running time claimed by \cref{thm:twdp}. By the above arguments about partial solutions, the minimum number of shared edges in a $(p, s, t)$-routing can then be read off from the table in the root node of the tree decomposition, where we take the minimum value over all signatures where for each of the $p$~segmentations there exists a segment that contains the vertices~$s$ and~$t$. Hence, \cref{thm:twdp} follows.
\else{}
Now we describe how to fill the entries in the table~$T$ of the dynamic program according to each type of nodes in the tree decomposition~$\mathbb{T}$. %

\begin{proof}[Proof of \Cref{thm:twdp}]
Let $G$ be graph with $s,t\in V(G)$ given together with a tree decomposition $\mathbb{T'}=(T',(B_\alpha')_{\alpha\in V(T')})$ of width $\omega':=\omega(\mathbb{T'})$ of~$G$. We modify the tree decomposition $\mathbb{T}'$ in polynomial time to a nice tree decomposition with introduce edge nodes of equal width, and add the vertices~$s$ and $t$ to every bag. Let $\mathbb{T}$ be the nice tree decomposition with introduce edge nodes and vertices~$s$ and $t$ contained in every bag obtained from $\mathbb{T'}$. Note that $\omega:=\omega(\mathbb{T}) \leq \omega'+2$. We apply the dynamic program described above bottom-up on the tree decomposition~$\mathbb{T}$. The dynamic program runs in~$O(p\cdot (\omega+2)^{3\cdot p\cdot (\omega+1)+4}\cdot n)$~time. Since $\omega\leq \omega'+2$, it follows that the dynamic program runs in~$O(p\cdot (\omega'+4)^{3\cdot p\cdot (\omega'+3)+4}\cdot |V(G)|)$~time. Finally, we read out the minimum number of shared edges for~$p$~$s$-$t$ routes in the entries of the root node in~$\mathbb{T}$ as follows. 
Let~$\tau$ be the root node of~$\mathbb{T}$. Note that~$\{s,t\}\subseteq B_\tau$. Let~$\mathcal{F}$ be the set of all signatures for node~$\tau$ such that for all signatures~$\mathcal{X}^\tau=(\mathcal{Y}^\tau_q,Z^\tau_q)_{q=1,\ldots,p}$ in~$\mathcal{F}$ it holds that for all~$q\in[p]$ there exists a segment~$M\in \mathcal{Y}^\tau_q$ with~$\{s,t\}\subseteq M$. Due to our construction, a segment of a segmentation corresponds to a tree in a partial solution for the given graph. Hence, a set of~$p$ segmentations, where for each of the $p$~segmentations there exists a segment that contains the vertices~$s$ and~$t$, corresponds to a solution for \msetsc{} with~$p$~routes. Thus, the minimum number of shared edges for~$p$~$s$-$t$~routes equals~$\min_{\mathcal{X}^\tau\in \mathcal{F}} T[\tau,\mathcal{X}^\tau]$.
\end{proof}

\fi{}

We remark that we can modify the dynamic program in such a way that we can solve the weighted variant of \msetsc{}, that is, with weights~$w:E(G)\to \mathbb{N}$ on the edge set of the input graph. \iflong{}The cost of the partial solutions is the sum of the weights of shared edges, and thus the entry in the table of the dynamic program. For an introduce edge node, in the case of share-compability, we increase the value of the entry by the weight of the introduced edge. More precisely, for an introduce edge node $\alpha$ that introduces edge~$e$ and a signature~$\mathcal{X}^\alpha$ for node~$\alpha$, the filling rule is adjusted by
\[
 T[\alpha,\mathcal{X}^\alpha] = \min \left(T[\beta,\mathcal{X}^\beta] + 
 \begin{cases}
 w(e) ,& \text{if $\mathcal{X}^\beta$ and $\mathcal{X}^\alpha$ are share-compatible,} \\
 0 ,& \text{otherwise} 
 \end{cases}\right),
\] 
where the minimum is taken over all signatures $\mathcal{X}^\beta$ for node~$\beta$ compatible with $\mathcal{X}^\alpha$.\fi{}

%


\section{Fixed-Parameter Tractability with Respect to the Number of Routes}\label{sec:twred}
In this section we prove the following.
\begin{theorem}\label{theorem!fptwrtp}
\msetsc{} is fixed-parameter tractable with respect to the number~$p$ of routes. 
\end{theorem}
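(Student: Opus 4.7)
The strategy is to first give a cut-covering reformulation of \MSEl, then apply the treewidth reduction technique of Marx, O'Sullivan, and Razgon~\cite{MarxOR13} to reduce to an equivalent instance on a graph of treewidth bounded by a function of~$p$, and finally invoke the dynamic program of~\Cref{thm:twdp}.

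\textbf{Structural reformulation.} I would first prove the following equivalence: $(G,s,t,p,k)$ is a \yes-instance of \MSEl if and only if there is an edge set $S \subseteq E(G)$ with $|S| \leq k$ that intersects every $s$-$t$ edge cut of~$G$ of size at most~$p-1$ (equivalently, every minimal $s$-$t$ edge separator of size at most $p-1$, since every small cut contains a minimal one). For the only-if direction, let $S$ be the set of shared edges of a $(p,s,t)$-routing; every $s$-$t$ cut $C$ is traversed by each of the $p$ routes, and each edge in $C \setminus S$ carries at most one route, so $|C|<p$ forces $C\cap S\neq\emptyset$. Conversely, given such an $S$, assign capacity $p$ to edges of~$S$ and capacity $1$ to all other edges; the cut condition gives max-flow value at least $p$, and decomposing an integer $s$-$t$ flow of value $p$ into paths (shortcutting any cycles) yields a $(p,s,t)$-routing whose shared edges lie in~$S$.

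\textbf{Treewidth reduction.} Next I would apply the treewidth reduction technique~\cite{MarxOR13} with parameter $p-1$ and terminals $\{s,t\}$ to produce a graph~$G^*$ of treewidth bounded by some function $f(p)$ such that the family of minimal $s$-$t$ edge separators of~$G^*$ of size at most~$p-1$ coincides with that of~$G$. Since the technique is originally stated for vertex separators, I would cast the edge-separator setting into that framework by subdividing each edge of~$G$ and identifying subdivision vertices with the original edges, so that minimal edge cuts of size at most~$p-1$ in~$G$ correspond to minimal $s$-$t$ vertex separators of size at most~$p-1$ in the subdivision. Combined with the reformulation, this yields that $\MSE$ has the same answer on $(G,s,t,p,k)$ and on the reduced instance.

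\textbf{Algorithm.} Finally, I would apply~\Cref{thm:twdp} to the reduced instance; since its treewidth is bounded by $f(p)$, the running time is $O\bigl(p\cdot (f(p)+4)^{3p\,(f(p)+3)+4}\cdot n\bigr)$, which is \FPT in~$p$ alone. The main obstacle is the edge-versus-vertex-separator translation: one must carefully verify that after subdividing and invoking the vertex-separator version of the reduction, the resulting object can indeed be viewed as a graph on which \MSEl with parameters $(p,k)$ is equivalent to the original, so that \Cref{thm:twdp} computes the correct optimum. The cut-covering reformulation is precisely what makes this translation transparent, since it reduces the problem to a property that depends only on the family of small minimal $s$-$t$ separators, which the reduction preserves by design.
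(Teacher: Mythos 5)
Your overall strategy---subdivide edges so that small minimal $s$-$t$ cuts become small minimal $s$-$t$ vertex separators, apply treewidth reduction with parameter $p-1$ and terminals $\{s,t\}$, and finish with the dynamic program of \cref{thm:twdp}---is exactly the route the paper takes. Your hitting-set reformulation (a yes-instance iff some set of at most $k$ edges meets every $s$-$t$ cut of size at most $p-1$) is a correct and somewhat cleaner repackaging of what the paper obtains from \cref{lemma!yesinstancethenedgesetminimalcut} together with its contraction/flow formulation of \MSE; both directions of your argument for the reformulation are sound.

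The step you explicitly leave open---recovering, from the output of the vertex-based treewidth reduction, a graph on which \MSEl can actually be run with the same parameters---is a genuine piece of the proof and not mere bookkeeping. \cref{theorem!twrt} only guarantees that the vertices participating in small minimal separators (plus $s$ and $t$) survive; a subdivision vertex $x_e$ may survive while, a priori, neither endpoint of the original edge $e$ does, in which case $x_e$ sits on copy paths in the torso and there is no way to ``un-subdivide'' and reconstitute $e$. The paper closes this with \cref{lemma!eachofneighborsarepartofspearators}: every vertex in the closed neighborhood of a separator that corresponds to a small minimal cut itself lies in some minimal $s$-$t$ separator of size at most $p-1$, and hence survives the reduction. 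This licenses \cref{step!three} (contracting one incident edge per surviving subdivision vertex) and then \cref{lemma!cutingiffcutingstar}, which establishes the two-way coincidence of the minimal $s$-$t$ cuts of size smaller than $p$ in $G$ and in the resulting graph $G^*$---precisely the statement your hitting-set reformulation needs in order to transfer the answer. To turn your plan into a proof you must supply arguments of this kind for the neighborhood preservation and for the correspondence of small cuts after un-subdividing; everything else in your proposal matches the paper's proof.
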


\looseness=-1 The basic idea for the proof is to use treewidth reduction~\cite{MarxOR13}, a way to process a graph~$G$ containing terminals $s, t$ in such a way that each minimal $s$-$t$ separator of size at most~$p - 1$ is preserved and the treewidth of the resulting graph is bounded by a function of~$p$. The reason that this approach works is (we prove below) that each $(p,s,t)$-routing is characterized by its shared edges, and these are contained in minimal cuts of size at most $p - 1$. However, treewidth reduction preserves only minimal separators, that is, vertex sets, and not necessarily minimal cuts, that is, edge sets. Hence, we need to further process input graph and the graph coming out of the treewidth reduction process. 

We now describe the approach in more detail; refer to \Cref{fig:pftpoverview} for an overview of the following modifications and the graphs obtained in each step.%
\iflong{} Let $(G,s,t,p,k)$ be an instance of \MSE{}, where~$G$ is the input graph with $s,t\in V(G)$. First, we obtain a graph~$H$ by subdividing each edge in $G$. We denote by~$V_E$ the set of vertices obtained from the subdivisions. As a consequence, every minimal $s$-$t$~cut in $G$ of size at most $p-1$ corresponds to a minimal $s$-$t$~separator in $H$ of size at most $p-1$. Next, we apply the treewidth reduction technique to~$H$, obtaining the graph~$H^*$. By the treewidth reduction technique, graph~$H^*$ contains all minimal $s$-$t$~separators in~$H$ of size at most $p-1$ and the treewidth of graph~$H^*$ is upper-bounded by a function only depending on $p$. We denote by $V_E^*:=\{v\in V(H^*)\mid v\in V_E\}$ the set of vertices in~$V_E$ which are preserved by the treewidth reduction technique in $H^*$. Finally, we contract an incident edge for each vertex in $V_E^*\subseteq V(H^*)$ to obtain the graph~$G^*$.%
\todo[inline]{rn:This is too vague. What did we discuss where?}\fi{}
\begin{figure}[!t]%
\centering
\begin{tikzpicture}%

\node (G) at (0-0.5,0) []{$G$};
\node (H) at (4-0.25,0) []{$H$};
\node (H*) at (7,0) []{$H^*$};
\node (G*) at (11+0.5,0) []{$G^*$};

\draw[color=black] (-1,-1) rectangle (12,1);

\draw[->,>=stealth] (G) to (H); 
\node (fA) at (2-0.25,0) [label=90:{Subdivide each}, label=270:{edge in~$G$}]{};
\draw[->,>=stealth] (H)  -- (H*);
\node (sA) at (5.5-0.25,0) [label=90:{Treewidth}, label=270:{Reduction}]{};
\draw[->,>=stealth] (H*)  -- (G*) ;
\node (tA) at (9+0.25,0) [label=90:{Contract an incident}, label=270:{edge for each $v\in V_E^*$}]{};

\node[align=center] (whyfA) at (1,2.5) {Each minimal~$s$-$t$ cut of\\  size at most~$p-1$\\ corresponds to a\\ minimal~$s$-$t$ separator \\ of size at most~$p-1$.};
\draw[->,>=stealth, dashed, very thin, shorten >=6mm] (whyfA) to (fA);

\node[align=center] (whysA) at (5.75,2.5) {Constructs a graph of\\ treewidth bounded by\\ a function in~$p$ that preserves\\ all minimal~$s$-$t$ separators\\ of size at most~$p-1$.};
\draw[->,>=stealth, dashed, very thin, shorten >=6mm] (whysA) to (sA);

\node[align=center] (whytA) at (10.25,2.5) {Yields 1-to-1\\correspondence between \\ minimal~$s$-$t$ cuts\\ of size at most~$p-1$\\ in~$G$ and~$G^*$.};
\draw[->,>=stealth, dashed, very thin, shorten >=6mm] (whytA) to (tA);

\node[align=center] (propH) at (0.75,-3.25) {Includes the vertex set~$V_E$,\\ the vertices corresponding \\ to the subdivisions.\\ Each minimal $s$-$t$~cut in~$G$ \\ of size at most $p-1$ \\ corresponds to a \\ minimal $s$-$t$~separator in~$H$ \\ of size at most~$p-1$.};
\draw[->,>=stealth, dashed, very thin] (propH) to (H);

\node[align=center] (propH*) at (5.75,-3.25) {Has treewidth bounded\\ by a function in~$p$, contains\\ every minimal~$s$-$t$ separator\\ of size at most~$p-1$ in~$H$ and\\ contains the neighborhood of \\ every vertex in $V_E$ which is in \\ a minimal~$s$-$t$ separator\\ of size at most~$p-1$ in~$H$.};
\draw[->,>=stealth, dashed, very thin] (propH*) to (H*);

\node[align=center] (propG*) at (10.25,-3.25) {Has treewidth bounded\\ by the treewidth of~$H^*$.\\ An edge set \\ $C\subseteq E(G)\cap E(G^*)$ \\ with $|C|<p$ is a\\ minimal~$s$-$t$ cut in $G^*$\\ if and only if it is a\\ minimal~$s$-$t$ cut in $G$.};
\draw[->,>=stealth, dashed, very thin] (propG*) to (G*);

\end{tikzpicture}
\caption{Overview of the strategy behind the proof of~\Cref{theorem!fptwrtp}.}
\label{fig:pftpoverview}
\end{figure}
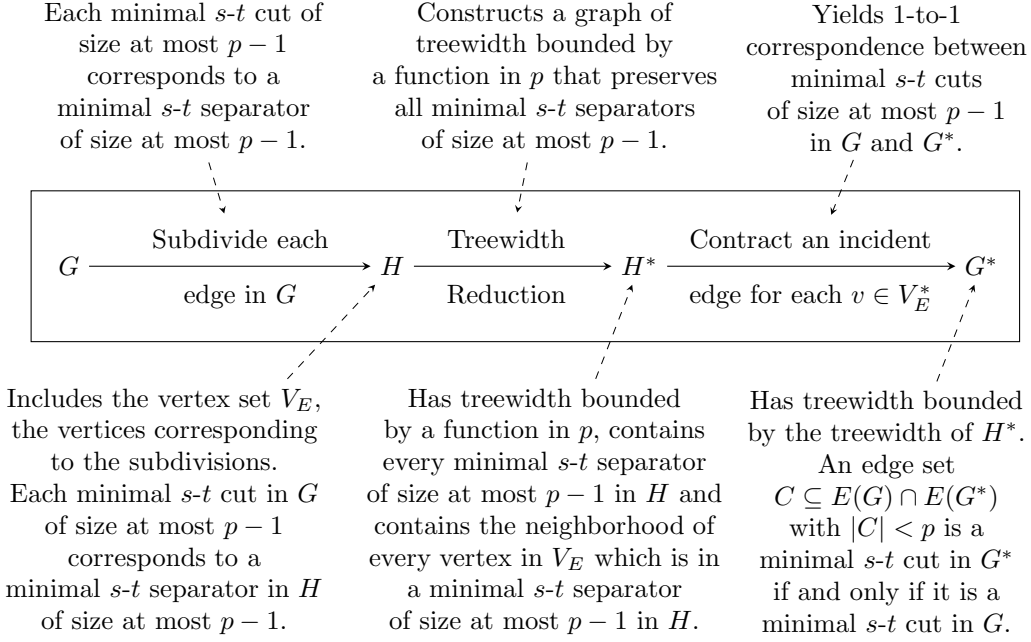%
In the following, we modify step by step graph~$G$ to graph~$G^*$.\iflong{} We discuss each step and we prove the properties of the obtained graphs described above. Finally, we give a proof of~\Cref{theorem!fptwrtp}.\fi{}
We start with the following lemma which states that if our instance is a yes-instance, then we can find a solution where each of the shared edges is part of a minimal~$s$-$t$~cut of size smaller than the number~$p$ of routes.

\begin{lem}\label{lemma!yesinstancethenedgesetminimalcut}
If~$(G,s,t,p,k)$ is a yes-instance of \MSE{} and~$G$ has a minimal~$s$-$t$ cut of size smaller than~$p$, then there exists a solution~$F\subseteq E$ such that each~$e\in F$ is in a minimal~$s$-$t$ cut of size smaller than~$p$ in~$G$.
\end{lem}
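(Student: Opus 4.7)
The plan is to recast the problem in flow-theoretic terms and then run an exchange argument on edge multiplicities. Given a $(p,s,t)$-routing, I write $m_e$ for the number of routes traversing edge $e$, so that $F = \{e : m_e \geq 2\}$ is the shared-edge set and $m$ is an integer $s$-$t$ flow of value $p$ in $G$ with $m_e$ units of capacity on $e$. By max-flow/min-cut this is equivalent to the cut constraints $\sum_{f \in C} m_f \geq p$ for every $s$-$t$ cut $C$. Conversely, any integer function $m' \geq 1$ on $E$ satisfying all such constraints decomposes, via integral flow decomposition (after discarding cycles), into $p$ $s$-$t$ paths using each edge $e$ at most $m'_e$ times, yielding a routing whose shared edges lie in $\{e : m'_e \geq 2\}$.

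Writing $\hat{E}$ for the union of all minimal $s$-$t$ cuts of size less than $p$, the task reduces to iteratively shrinking $F \setminus \hat{E}$. The crucial claim I will establish is that if $e \in F$ does not lie in $\hat{E}$, then $\sum_{f \in C} m_f \geq p + 1$ for every $s$-$t$ cut $C$ containing $e$. I will prove this by contradiction: suppose some such $C$ is tight, so $\sum_{f \in C} m_f = p$. Since $m_e \geq 2$ and $m_f \geq 1$ for all $f$, we get $|C| \leq p - m_e + 1 \leq p - 1$. Moreover $C$ must be minimal, for otherwise some $C \setminus \{f\}$ would still be an $s$-$t$ cut with multiplicity sum at most $p - 1$, contradicting its own cut constraint. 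Thus $C$ would be a minimal $s$-$t$ cut of size less than $p$ containing $e$, contradicting $e \notin \hat{E}$.

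Given the claim, $m_e$ can be decremented by $1$ while keeping every cut constraint satisfied, and flow decomposition then produces a new $(p,s,t)$-routing whose shared-edge set is a subset of $F$, because no $m_f$ for $f \neq e$ was touched. Reapplying the argument, which only needs $m_e \geq 2$, drives $m_e$ down to $1$, removing $e$ from the shared set; iterating over all edges of $F \setminus \hat{E}$ yields the desired solution of size at most $k$. The main delicate point will be verifying that decreasing $m_e$ never accidentally turns a previously unshared edge into a shared one in the resulting routing; this is handled by observing that in the decomposition step each $f \neq e$ retains capacity $m_f$, so the new routing uses $f$ at most $m_f$ times, and edges with $m_f = 1$ remain used at most once.
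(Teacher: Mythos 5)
Your key claim --- that if $e \in F$ and $e \notin \hat{E}$ then \emph{every} $s$-$t$ cut containing $e$ has multiplicity sum at least $p+1$ --- is false, and the flaw is exactly the step ``$m_f \geq 1$ for all $f$''. A routing need not use every edge, so a cut may contain edges of multiplicity $0$; then neither the bound $|C| \leq p - m_e + 1 \leq p-1$ nor the minimality argument (``$C\setminus\{f\}$ would have sum at most $p-1$'') goes through. Concretely, let $s$ be adjacent only to $a$ via $e_1$, and let $a$ reach $t$ via the edge $e_2=\{a,t\}$, the path $a\text{-}b\text{-}t$ (edges $e_3,e_4$), and the path $a\text{-}c\text{-}t$ (edges $e_5,e_6$); set $p=3$. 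The only minimal $s$-$t$ cut of size $<p$ is $\{e_1\}$, so $\hat{E}=\{e_1\}$. Route all three paths over $e_1$, two continuing over $e_2$ and one via $b$; then $m_{e_2}=2$, $m_{e_5}=m_{e_6}=0$, so $e_2\in F\setminus\hat{E}$, yet the minimal cut $C=\{e_2,e_3,e_5\}$ has $\sum_{f\in C}m_f=2+1+0=3=p$. Decrementing $m_{e_2}$ therefore violates a cut constraint and no $3$-flow survives: eliminating the sharing of $e_2$ requires \emph{rerouting} one unit through $c$, i.e., raising $m_{e_5}$ and $m_{e_6}$ from $0$ to $1$, which your pure ``decrement and re-decompose'' scheme can never do since it only ever lowers multiplicities.

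The paper sidesteps this by not fixing the routing at all. It works with the equivalent contraction formulation (a solution is an edge set $F$ with $|F|\le k$ such that $G/F$ admits an $s$-$t$ flow of value $p$), takes a solution $L$ that is minimal under set inclusion, and for $e\in L$ argues on $G/(L\setminus\{e\})$: either its max flow is still $\geq p$ (contradicting minimality of $L$), or it has a minimal cut of size $<p$, which one checks is also a minimal cut of size $<p$ in $G$ and hence would contain $e$ if $e$ mattered --- otherwise that cut survives contracting $e$ and contradicts $L$ being a solution. The rerouting your argument is missing happens implicitly there, because the flow in the contracted graph is recomputed from scratch rather than derived from the old multiplicities. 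To salvage your approach you would have to allow the exchange step to increase multiplicities on currently unused edges (e.g., via an augmenting-path argument), at which point you are essentially re-proving the paper's minimal-solution argument.
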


Recall that if~$G$ does not have a minimal~$s$-$t$ cut of size smaller than~$p$, then we can find~$p$~$s$-$t$ routes without sharing an edge. \iflong{}In the following proof, we make use of the following equivalent formulation of \MSE{} based on edge contractions. Given an undirected graph $G=(V,E)$, $s,t\in V(G)$, $p\in\mathbb{N}$, and $k\in \mathbb{N}_0$, the question is whether there is a subset~$F\subseteq E$ of edges of cardinality at most~$k$ in~$G$ such that the graph~$G/F$ with unit edge capacities allows an $s$-$t$~flow of value at least~$p$. In the following, we call such a set $F$ a \emph{solution}. Using Menger's theorem, one can obtain with small effort the equivalence of \MSE{} and the problem above. 

\begin{proof}[Proof of \Cref{lemma!yesinstancethenedgesetminimalcut}]

We make use of the contraction equivalent of \MSE{}. We show that for every minimal solution for \MSE{} it holds that each edge of the solution is part of a minimal~$s$-$t$ cut of size smaller than~$p$, where a solution is minimal if it is not a superset of another solution.

Let $G=(V,E)$ be the graph. Let~$(G,s,t,p,k)$ be a yes-instance of \MSE{}. Then there exists a solution~$L\subseteq E$,~$|L|\leq k$, such that graph~$G_L:=G/L$ with unit edge capacities allows a maximum~$s$-$t$~flow of value at least~$p$. We call a solution~$L$ minimal if there is no edge~$e\in L$ such that graph~$G/(L\backslash \{e\})$ with unit edge capacities allows a maximum~$s$-$t$~flow of value at least~$p$.

Let~$L$ be a minimal solution and let~$e\in L$. Suppose that~$e$ is not part of a minimal~$s$-$t$ cut of size smaller than~$p$ in~$G$. Let~$L':=L\backslash\{e\}$ and $G_{L'}:=G/L'$. We consider the following two cases.

\emph{Case 1:} The maximum~$s$-$t$~flow of~$G_{L'}$ has value smaller than~$p$. Then, using the max-flow min-cut theorem,~$G_{L'}$ has an~$s$-$t$ cut~$C$ of size smaller than~$p$. Since~$e\not\in C$, contracting edge~$e$ in~$G_{L'}$ does not affect cut~$C$. Therefore,~$C$ is also an~$s$-$t$ cut of size smaller than~$p$ in~$G_L$ and, again by the max-flow min-cut theorem, this implies a maximum $s$-$t$~flow of value smaller than~$p$ in~$G_L$. This is a contradiction to the fact that~$L$ is a solution.

\emph{Case 2:} The maximum~$s$-$t$~flow of~$G_{L'}$ has value at least~$p$. Then~$L'$ is a solution, which contradicts the minimality of~$L$.

Since~$|L|\leq k$ and each edge in~$L$ is in a minimal~$s$-$t$ cut of size smaller than~$p$ in~$G$, this completes the proof.
\end{proof}
\fi{}

\looseness=-1 As mentioned before, as part of our approach we use the treewidth reduction technique~\cite{MarxOR13}. Given a graph~$G=(V,E)$ with~$T=\{s,t\}\subseteq V(G)$ and an integer~$\ell\in \mathbb{N}$, first the treewidth reduction technique computes the set~$C$ of vertices containing all vertices in~$G$ which are part of a minimal~$s$-$t$~separator of size at most~$\ell$ in~$G$. Then, it constructs the so-called \emph{torso} of graph~$G$ given~$C$ and~$T$, that is, the induced subgraph~$G[C\cup T]$ with additional edges between each pair of vertices~$v,w\in C\cup T$ with~$\{v,w\}\not\in E(G)$ if there is a~$v$-$w$~path in~$G$ whose internal vertices are not contained  in~$C\cup T$. Finally, each of these additional edges is subdivided and~$\ell$~additional copies of each of that subdivisions are introduced, that is, if~$\{v,w\}$ is one of these additional edges, then the vertices~$x^{vw}_1,\ldots,x^{vw}_{\ell+1}$ are added and edge~$\{v,w\}$ is replaced by the edges~$\{v,x^{vw}_1\},\ldots,\{v,x^{vw}_{\ell+1}\},\{x^{vw}_1,w\},\ldots,\{x^{vw}_{\ell+1},w\}$. In the following, we denote these paths by \emph{copy paths}. The resulting graph contains all minimal~$s$-$t$~separators of size at most~$\ell$ in~$G$ and has treewidth upper-bounded by~$h(\ell)$ for some function~$h$ only depending on~$\ell$.

\begin{theorem}[Treewidth reduction {\cite[Theorem~2.15]{MarxOR13}}]\label{theorem!twrt}
Let~$G$ be a graph, $T\subseteq V(G)$, and let~$\ell$ be an integer. Let~$C$ be the set of all vertices of~$G$ participating in a minimal~$s$-$t$ separator of size at most~$\ell$ for some $s,t\in T$. For every fixed~$\ell$ and~$|T|$, there is a linear-time algorithm that computes a graph $G^*$ having the following properties:
\begin{compactenum}[(1)]
\item $C\cup T\subseteq V(G^*)$.
\item For every~$s,t\in T$, a set~$L\subseteq V(G^*)$ with~$|L|\leq \ell$ is a minimal~$s$-$t$ separator of $G^*$ if and only if~$L\subseteq C\cup T$ and~$L$ is a minimal~$s$-$t$ separator of~$G$.
\item The treewidth of~$G^*$ is at most~$h(\ell,|T|)$ for some function~$h$.
\item $G^*[C\cup T]$ is isomorphic to $G[C\cup T]$.
\end{compactenum}
\end{theorem}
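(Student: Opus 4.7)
The plan is to construct $G^*$ explicitly and then verify the four properties, with the treewidth bound~(3) being by far the most delicate. First I would compute $C$ as stated: a vertex~$v$ lies in~$C$ iff there exist $s,t\in T$ such that the minimum $s$-$t$ separator in~$G$ has size at most~$\ell$ and contains~$v$. This can be checked via Menger-style flow computations ($\ell$ disjoint $(s,v)$-paths and $\ell$ disjoint $(v,t)$-paths with appropriate restrictions), which is polynomial for each fixed~$\ell$. Then I would form the \emph{torso} $G'$ on $C\cup T$: keep all original edges of $G[C\cup T]$, and for every pair $u,v\in C\cup T$ that is connected in $G$ by a path whose internal vertices avoid $C\cup T$, add a ``shortcut'' edge $\{u,v\}$ (if not already present). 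Finally, to produce $G^*$, replace every shortcut edge $\{u,v\}$ by $\ell+1$ internally disjoint length-two paths (the ``copy paths''), introducing fresh degree-$2$ vertices $x_1^{uv},\ldots,x_{\ell+1}^{uv}$. Properties~(1) and~(4) are immediate from this construction.

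For property~(2), the key observation is that the copy-path gadget cannot be broken cheaply: any $s$-$t$ separator of size at most~$\ell$ in~$G^*$ must leave some copy $x_i^{uv}$ intact, so to separate $u$ from $v$ through the gadget it must contain $u$ or~$v$. Hence any minimal $s$-$t$ separator $L$ in~$G^*$ with $|L|\le\ell$ avoids all auxiliary $x_i^{uv}$ vertices, meaning $L\subseteq C\cup T$. Because each shortcut $\{u,v\}$ in~$G'$ corresponds to a path in~$G$ through vertices outside $C\cup T$, separating $s$ from $t$ in $G^*$ with $L\subseteq C\cup T$ is equivalent to separating them in~$G$. Minimality transfers similarly: any $v\in L$ that can be removed in $G^*$ could also be removed in~$G$, and vice versa, because the induced graphs on $C\cup T$ agree and the ``exterior'' paths are faithfully represented. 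The converse direction uses that a minimal $s$-$t$ separator of size at most~$\ell$ in~$G$ is contained in~$C\cup T$ by definition of~$C$, and again the translation is symmetric.

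The main obstacle is property~(3), the treewidth bound. My plan is to build a tree decomposition of~$G^*$ recursively using the structure of minimal $(s,t)$-separators for $s,t\in T$. Start from a terminal pair $(s,t)$, pick a minimum $s$-$t$ separator~$S$ of size at most~$\ell$ (if it exists), and recurse on the two sides of $S$ with $T$ augmented by~$S$. One has to argue two things: first, that the augmented terminal set stays of size bounded by some $g(\ell,|T|)$ throughout the recursion (which follows because each recursive call adds at most~$\ell$ new ``terminals'' and the recursion depth is bounded by a function of the number of pairwise-crossing minimal separators, which in turn is bounded by the well-known fact that the number of minimal $(s,t)$-separators of size $\le\ell$ forms a structured (essentially laminar-after-quotienting) family); second, that the base case, when no further separator of size $\le\ell$ exists between any two terminals, has bounded treewidth because the graph is ``tightly connected'' around~$T$ and all of $C$ lies within bounded ``distance'' of $T$ in the separator structure. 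The copy-path vertices add only two new degree-$2$ vertices per shortcut edge, so they increase bag sizes by at most a constant factor.

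The delicate part is quantifying the recursion depth and bag size simultaneously; the natural induction needs a carefully chosen potential function (for instance, on the laminar family of ``closest'' minimum separators from each terminal), and establishing that this potential strictly decreases in each recursive step is where I expect most of the technical work to lie. Once this is in place, the overall running time is linear after the one-time computation of~$C$ and the torso (each fixed-$\ell$), matching the stated bound.
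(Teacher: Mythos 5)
First, a point of order: the paper does not prove this statement. It is Theorem~2.15 of Marx, O'Sullivan, and Razgon \cite{MarxOR13}, imported as a black box; the surrounding text of \cref{sec:twred} only recalls the construction informally. So there is no in-paper proof to compare against, and your proposal is an attempt to reprove the cited result from scratch. Your construction (compute $C$, form the torso on $C\cup T$, replace each added torso edge by $\ell+1$ internally disjoint length-two copy paths) matches the one described in the paper and in \cite{MarxOR13}, and your arguments for properties (1), (2), and (4) are essentially sound sketches: the copy-path gadget forces any small separator back into $C\cup T$, and the torso faithfully represents connectivity through the exterior.

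The genuine gap is property (3), which is the entire content of the theorem and the only reason it is cited. Your plan rests on the assertion that the recursion depth is controlled because the minimal $s$-$t$ separators of size at most $\ell$ form a ``structured (essentially laminar-after-quotienting) family'' of bounded complexity. This is the step that needs a proof, and as stated it is not a usable fact: already on a path $s,p_1,\ldots,p_n,t$ there are $n$ distinct minimal $s$-$t$ separators of size $1$, so no count of such separators is bounded by a function of $\ell$ alone, and whatever bounded substructure you intend (pairwise-crossing subfamilies, a quotient lattice, \dots) is neither exhibited nor bounded in your sketch. The mechanism that actually makes the argument work in \cite{MarxOR13} is different: an induction on $\ell$ combined with the bound of $4^{\ell}$ on the number of \emph{important} $s$-$t$ separators of size at most $\ell$, which is used to show that $C$ decomposes into boundedly many pieces sandwiched between nested separators. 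None of that machinery appears in your outline; moreover, your recursion adds $\ell$ new terminals per level without a termination argument, and your base case (``the graph is tightly connected around $T$'') is asserted, not proved. You candidly flag that ``most of the technical work'' lies in the missing potential function --- but that missing work is precisely the theorem, so the treewidth bound remains unproved in your proposal.
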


\iflong{}
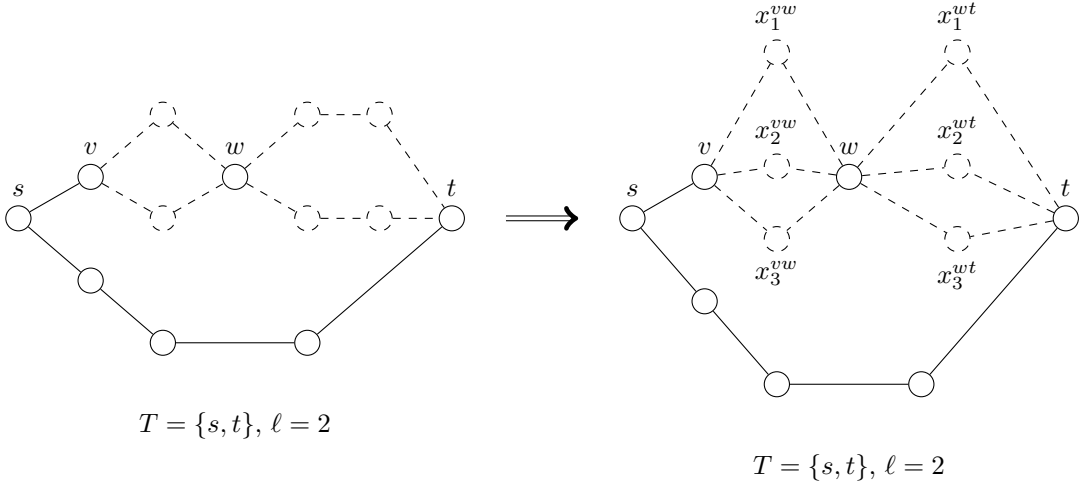
\begin{figure}[!t]
\begin{tikzpicture}[x=0.475cm, y=0.55cm]%

\node (s) at (-4,0) [shape=circle, label=90:{$s$}, draw]{};
\node (t) at (8,0) [shape=circle, label=90:{$t$}, draw]{};

\node (v1) at (-2,1) [shape=circle, label=90:{$v$}, draw]{};
\node (v2) at (-2,-1.5) [shape=circle, draw]{};
\node (v3) at (0,2.5) [shape=circle, dashed, draw]{};
\node (v4) at (0,0) [shape=circle, dashed, draw]{};
\node (v5) at (0,-3) [shape=circle, draw]{};
\node (v6) at (2,1) [shape=circle, label=90:{$w$}, draw]{};
\node (v7) at (4,2.5) [shape=circle, dashed, draw]{};
\node (v8) at (4,0) [shape=circle, dashed, draw]{};
\node (v9) at (4,-3) [shape=circle, draw]{};
\node (v10) at (6,2.5) [shape=circle, dashed, draw]{};
\node (v11) at (6,0) [shape=circle, dashed, draw]{};

\node[align=center] (Gstar) at (2,-5) {$T=\{s,t\}$,~$\ell=2$};

\draw (s) -- (v1);
\draw[dashed] (v1) -- (v3);
\draw[dashed] (v1) -- (v4);
\draw[dashed] (v3) -- (v6);
\draw[dashed] (v4) -- (v6);
\draw[dashed] (v6) -- (v7);
\draw[dashed] (v6) -- (v8);
\draw[dashed] (v7) -- (v10);
\draw[dashed] (v8) -- (v11);
\draw[dashed] (v10) -- (t);
\draw[dashed] (v11) -- (t);

\draw (s) -- (v2);
\draw (v2) -- (v5);
\draw (v5) -- (v9);
\draw (v9) -- (t);

\draw[double distance=1.4pt, ->] (9.5,0) to (11.5,0);

\node (s) at (-4+17,0) [shape=circle, label=90:{$s$}, draw]{};
\node (t) at (8+17,0) [shape=circle, label=90:{$t$}, draw]{};

\node (v1) at (-2+17,1) [shape=circle, label=90:{$v$}, draw]{};
\node (v2) at (-2+17,-2) [shape=circle, draw]{};
\node (x26_1) at (0+17,4) [shape=circle, dashed, label=90:{$x^{vw}_1$}, draw]{};
\node (x26_2) at (0+17,1.25) [shape=circle, dashed, label=90:{$x^{vw}_2$}, draw]{};
\node (x26_3) at (0+17,-0.5) [shape=circle, dashed, label=270:{$x^{vw}_3$}, draw]{};
\node (v5) at (0+17,-4) [shape=circle, draw]{};
\node (v6) at (2+17,1) [shape=circle, label=90:{$w$}, draw]{};
\node (x6t_1) at (5+17,4) [shape=circle, dashed, label=90:{$x^{wt}_1$}, draw]{};
\node (x6t_2) at (5+17,1.25) [shape=circle, dashed, label=90:{$x^{wt}_2$}, draw]{};
\node (x6t_3) at (5+17,-0.5) [shape=circle, dashed, label=270:{$x^{wt}_3$}, draw]{};
\node (v9) at (4+17,-4) [shape=circle, draw]{};

\node[align=center] (Gstar) at (2+17,-6) {$T=\{s,t\}$,~$\ell=2$};

\draw (s) -- (v1);
\draw[dashed] (v1) -- (x26_1);
\draw[dashed] (v1) -- (x26_2);
\draw[dashed] (v1) -- (x26_3);
\draw[dashed] (v6) -- (x26_1);
\draw[dashed] (v6) -- (x26_2);
\draw[dashed] (v6) -- (x26_3);
\draw[dashed] (v6) -- (x6t_1);
\draw[dashed] (v6) -- (x6t_2);
\draw[dashed] (v6) -- (x6t_3);
\draw[dashed] (t) -- (x6t_1);
\draw[dashed] (t) -- (x6t_2);
\draw[dashed] (t) -- (x6t_3);

\draw (s) -- (v2);
\draw (v2) -- (v5);
\draw (v5) -- (v9);
\draw (v9) -- (t);

\end{tikzpicture}
\caption{Example for the treewidth reduction technique.}
\label{fig:TWR}
\end{figure}
\fi{}

\iflong{}Figure~\ref{fig:TWR} shows an example for the application of the treewidth reduction technique. We use dashed edges and vertices to highlight the changes when applying the treewidth reduction technique with~$T=\{s,t\}$ and parameter~$\ell=2$. On the left-hand side, the original graph is shown. On the right-hand side, the resulting graph after applying the treewidth reduction technique with~$T=\{s,t\}$ and~$\ell=2$ on the left-hand side graph is shown.
\fi

For finding a $p$-routing we are interested in minimal~$s$-$t$ cuts of size smaller than~$p$ in~$G$. The treewidth reduction technique guarantees to preserve minimal~$s$-$t$~separators of a specific size, but does not guarantee to preserve minimal~$s$-$t$~cuts of a specific size. Thus, we need to modify our graph~$G$ in such a way that each minimal $s$-$t$~cut in $G$ corresponds to a minimal $s$-$t$~separator in the modified graph. We modify graph~$G$ in the following way. 

\begin{step}\label{step!one}
Subdivide each edge in~$E(G)$, that is, for each edge~$e=\{v,w\}$ in~$E(G)$ add a vertex~$x_e$ and replace edge~$e$ by edge~$\{v,x_e\}$ and edge~$\{x_e,w\}$. We say that vertex~$x_e$ as well as edges~$\{v,x_e\}$ and~$\{x_e,w\}$ \emph{correspond} to edge~$e$. Let $V_E:=\{x_e\mid e\in E\}$ and $E'$~be the edge set replacing the edges in~$E$. Then~$H:=(V\cup V_E,E')$ is the resulting graph.
\end{step}

Note that each edge in~$H$ is incident with exactly one vertex in~$V_E$ and one vertex in~$V$. Thus, no two vertices in~$V_E$ and no two vertices in~$V$ are neighbors. Moreover, note that each vertex in~$V_E$ has degree exactly two. It holds that~$|V\cup V_E|=|V|+|E|$ and~$|E'|=2\cdot|E|$. %

\iflong{}
\begin{lem}\label{lemma!modgisyes}
$(G,s,t,p,k)$ is a yes-instance of \MSE{} if and only if~$(H,s,t,p,2k)$ is a yes-instance of \MSE{}.
\end{lem}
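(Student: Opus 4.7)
The plan is to set up a natural bijection between $(p,s,t)$-routings in~$G$ and $(p,s,t)$-routings in~$H$, and then to show that under this bijection the number of shared edges exactly doubles, so the two instances are equivalent.

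First I would define the correspondence on paths. Given any $s$-$t$ path $P = v_0 v_1 \cdots v_r$ in~$G$ (with $v_0 = s$ and $v_r = t$), associate to it the $s$-$t$ path $\varphi(P)$ in~$H$ obtained by replacing each edge $e_i = \{v_{i-1},v_i\}$ with the length-two path $v_{i-1}, x_{e_i}, v_i$. Conversely, given an $s$-$t$ path $Q$ in~$H$, note that since $s,t \in V$ and every vertex in~$V_E$ has degree exactly~$2$ in~$H$, every occurrence of some $x_e \in V_E$ on~$Q$ is internal and so $Q$ must use both edges of~$H$ corresponding to~$e$; contracting all such pairs yields an $s$-$t$ path $\psi(Q)$ in~$G$. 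Clearly $\psi \circ \varphi$ is the identity on $s$-$t$ paths in~$G$ and $\varphi \circ \psi$ is the identity on $s$-$t$ paths in~$H$. Extending $\varphi$ and $\psi$ componentwise, we obtain a bijection between $(p,s,t)$-routings in~$G$ and $(p,s,t)$-routings in~$H$.

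Next I would analyze how sharing behaves under this bijection. The key observation, already used above, is that for each $x_e \in V_E$ with $e = \{v,w\}$ the two edges $\{v,x_e\}$ and $\{x_e,w\}$ of~$H$ are used by exactly the same set of routes of any routing in the image of~$\varphi$, because every route either passes through $x_e$ (and then uses both edges) or avoids $x_e$ (and then uses neither). Hence, in such a routing an edge corresponding to~$e$ is shared if and only if both edges corresponding to~$e$ are shared. Consequently, if $\mathcal{R}$ is a $(p,s,t)$-routing in~$G$ sharing a set $F \subseteq E(G)$ of edges, then $\varphi(\mathcal{R})$ is a $(p,s,t)$-routing in~$H$ sharing exactly the $2|F|$ edges of~$H$ corresponding to edges in~$F$; conversely, if $\mathcal{R}'$ is a $(p,s,t)$-routing in~$H$ whose set $F'$ of shared edges has size at most~$2k$, then $F'$ is a disjoint union of pairs of the form $\{\{v,x_e\},\{x_e,w\}\}$ and the corresponding set of edges in~$G$ has size $|F'|/2 \le k$ and is exactly the set of shared edges of $\psi(\mathcal{R}')$.

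Putting these two directions together yields the claimed equivalence: $(G,s,t,p,k)$ is a yes-instance if and only if $(H,s,t,p,2k)$ is. I do not expect any serious obstacle; the only thing that requires care is the parity argument in the backward direction, which relies crucially on the fact that $s,t \notin V_E$ and that each $x_e$ has degree exactly~$2$ in~$H$, so that every shared edge of~$H$ is matched with its partner at the same subdivision vertex.
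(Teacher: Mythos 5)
Your proposal is correct and follows essentially the same approach as the paper: both directions rest on the observation that each edge of $G$ corresponds to a pair of edges of $H$ meeting at a degree-two subdivision vertex $x_e\notin\{s,t\}$, so any route through $x_e$ uses both edges and the number of shared edges exactly doubles. Your bijection framing is in fact slightly cleaner than the paper's backward direction, which appeals to a minimal solution where your argument shows the pairing of shared edges holds for every routing in~$H$.
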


\begin{proof}
Intuitively, every edge in~$G$ corresponds to two edges in~$H$ and every two edges in~$H$ both incident with an vertex in~$V_E$ correspond to an edge in~$G$. 

``$\Rightarrow$'': Consider a solution for the yes-instance~$(G,s,t,p,k)$ of~\MSE{}. For each edge~$e=\{v,w\}\in E(G)$ that is shared in the solution, consider the corresponding two edges~$\{v,x_e\}$ and~$\{x_e,w\}$ in graph~$H$. Sharing these at most~$2k$ edges yields a solution for instance $(H,s,t,p,2k)$ of~\MSE{}.

``$\Leftarrow$'': Consider a minimal solution for the yes-instance~$(H,s,t,p,2k)$. Observe that in such a solution, a vertex in~$V_E$ is incident with either no or two shared edges. Each vertex in~$V_E$ that appears in at least two~$s$-$t$ routes is incident with two shared edges. Each vertex in~$V_E$ corresponds to one edge in~$G$. Let~$F\subseteq E(G)$ be the set of edges such that~$e=\{v,w\}\in F$ if the edges~$\{v,x_e\}$ and~$\{x_e,w\}$ in~$E(H)$ are shared in the solution for~$(H,s,t,p,2k)$. Note that $|F|\leq k$ since there are at most $2 k$ shared edges. Thus, $F$~is a solution for instance~$(G,s,t,p,k)$ of~\MSE{}.
\end{proof}
\fi{}

Recall that we are interested in~$s$-$t$ cuts in~$G$. By our modification from~\Cref{step!one} of~$G$ to~$H$, for each edge in~$G$ there is a corresponding vertex in~$V_E$ in~$H$.\iflong{} The following lemma gives a one-to-one correspondence between~$s$-$t$ cuts in~$G$ and those~$s$-$t$ separators in~$H$ that contain only vertices in~$V_E$. 

\begin{lem}\label{lemma!cutisseparator}
If~$C$ is an~$s$-$t$ cut in~$G$, then~$V_C:=\{w\in V_E\mid w$~corresponds~to~$e\in C \}$ is an~$s$-$t$ separator in~$H$. If~$W\subseteq V_E$ is an~$s$-$t$ separator in~$H$, then~$C_W:=\{e\in E\mid e$~corresponds~to~$w\in W\}$ is an~$s$-$t$ cut in~$G$.
\end{lem}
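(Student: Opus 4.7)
The plan is to prove each direction by contradiction, exploiting the fact that in $H$ every edge has exactly one endpoint in $V$ and one in $V_E$, and every vertex $x_e \in V_E$ has degree exactly two with neighbors being the endpoints of $e$ in $G$. This gives a very direct correspondence between $s$-$t$ walks in $G$ and $s$-$t$ walks in $H$ that pass through $V_E$-vertices only as degree-two transit points.

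For the first direction, I would assume that $C$ is an $s$-$t$ cut in $G$ but $V_C$ fails to be an $s$-$t$ separator in $H$. Then there exists an $s$-$t$ path $P$ in $H$ that avoids $V_C$. Since $s, t \in V$ and edges of $H$ go strictly between $V$ and $V_E$, the path $P$ has the form $(s = v_0, x_{e_1}, v_1, x_{e_2}, v_2, \ldots, x_{e_\ell}, v_\ell = t)$ with $v_i \in V$ and $x_{e_i} \in V_E \setminus V_C$. By definition of $V_C$, every $e_i$ lies in $E \setminus C$, so the sequence $(v_0, v_1, \ldots, v_\ell)$ is an $s$-$t$ walk (and hence contains an $s$-$t$ path) in $G \setminus C$, contradicting that $C$ is an $s$-$t$ cut.

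For the second direction, I would assume that $W \subseteq V_E$ is an $s$-$t$ separator in $H$ but $C_W$ is not an $s$-$t$ cut in $G$. Then there exists an $s$-$t$ path $Q = (s = v_0, v_1, \ldots, v_\ell = t)$ in $G \setminus C_W$. For each edge $e_i = \{v_{i-1}, v_i\}$ of $Q$, since $e_i \notin C_W$ the corresponding subdivision vertex $x_{e_i}$ lies outside $W$. Thus the subdivided sequence $(s = v_0, x_{e_1}, v_1, x_{e_2}, \ldots, x_{e_\ell}, v_\ell = t)$ is an $s$-$t$ path in $H$ that avoids $W$, contradicting that $W$ is an $s$-$t$ separator.

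I do not anticipate a genuine obstacle: both directions reduce to mechanically translating a path in one graph to a path in the other through the subdivision correspondence. The only mild care needed is to observe that $s, t \notin V_E$ (so they are never blocked by vertices of $V_E$ in either graph), and that when translating from $H$ back to $G$ one obtains a walk rather than a path in general, but this suffices since any $s$-$t$ walk in $G \setminus C$ contains an $s$-$t$ path in $G \setminus C$.
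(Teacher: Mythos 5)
Your proof is correct and follows essentially the same route as the paper's: both directions argue by contradiction, translate an offending $s$-$t$ path between $G$ and $H$ via the subdivision correspondence, and use the fact that paths in $H$ alternate between $V$ and $V_E$. (Your extra remark about walks versus paths is harmless but not needed, since the $V$-vertices of a path in $H$ are already distinct.)
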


\begin{proof}
Let~$C$ be an~$s$-$t$ cut in~$G$. Suppose that the set~$V_C:=\{w\in V_E\mid w$~corresponds~to~$e\in C \}$ is not an~$s$-$t$~separator in~$H$. Then there exists a path~$P'$ avoiding~$V_C$ in~$H$ connecting~$s$~and~$t$. Since no two vertices in~$V_E$ are neighbors and no two vertices in~$V$ are neighbors, the vertices in path~$P'$ alternate in~$V$ and~$V_E$. Since we know that the vertices in~$V_E$ correspond to edges in~$G$,~$P:=P'\cap V$ describes a path in~$G$ connecting~$s$ and~$t$ avoiding all edges in~$C$. This is a contradiction to the fact that $C$~is an $s$-$t$~cut in~$G$, and hence set~$V_C$ is an $s$-$t$~separator in~$H$.

Let~$W\subseteq V_E$ be an~$s$-$t$ separator in~$H$. Suppose that the set~$C_W:=\{e\in E\mid e$~corresponds~to~$w\in W\}$ is not an~$s$-$t$ cut in~$G$. Then there exists a path~$P$ avoiding~$C_W$ in~$G$ connecting~$s$ and~$t$. Let~$V_P\subseteq V(H)$ be the set of vertices in~$H$ such that each vertex in~$V_P$ either corresponds to an edge in~$P$ or is an endpoint of an edge in~$P$. We remark that~$W\cap V_P=\emptyset$. Moreover, set~$V_P$ is the set of vertices of an~$s$-$t$~route in~$H$. This is a contradiction to the fact that $W$~is an~$s$-$t$ separator in~$H$, and hence set~$C_W$ is an~$s$-$t$~cut in~$G$.
\end{proof}

In the following lemma, we show that \Cref{lemma!cutisseparator} holds also for minimal~$s$-$t$ cuts and minimal~$s$-$t$~separators. This is important, since we will use a combination of the treewidth reduction technique and~\Cref{lemma!yesinstancethenedgesetminimalcut} later on. %
\fi{}\ifshort{} One can show that there is a one-to-one correspondence between~$s$-$t$ cuts in~$G$ and those~$s$-$t$ separators in~$H$ that contain only vertices in~$V_E$. Moreover, the following lemma holds.\fi{}

\begin{lem}\label{lemma!minimalcutcorrespondstominimalsep}
Every minimal~$s$-$t$~cut in~$G$ corresponds to a minimal~$s$-$t$~separator in~$H$.
\end{lem}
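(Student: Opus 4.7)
The plan is to exploit the bijection established in the preceding lemma (that $s$-$t$ cuts $C$ in $G$ correspond to $s$-$t$ separators $V_C := \{x_e \in V_E \mid e \in C\}$ in $H$) and upgrade it to minimality by an elementwise argument: remove one vertex from $V_C$ and show it is no longer separating, using the fact that $C$ minus one edge is no longer cutting.

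More precisely, I would let $C$ be a minimal $s$-$t$ cut in $G$, set $V_C := \{x_e : e \in C\}$, and note by the previous lemma that $V_C$ is an $s$-$t$ separator in $H$ (and $V_C \cap \{s,t\} = \emptyset$ since $V_C \subseteq V_E$). To establish minimality, I would fix an arbitrary $x_e \in V_C$ and show that $V_C \setminus \{x_e\}$ fails to separate $s$ from $t$ in $H$. By minimality of $C$ in $G$, the edge set $C \setminus \{e\}$ is not an $s$-$t$ cut, so there exists an $s$-$t$ path $P$ in $G$ that avoids every edge of $C \setminus \{e\}$. Since $C$ itself is an $s$-$t$ cut, $P$ must use the edge $e = \{v,w\}$.

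Next I would lift $P$ to $H$: for each edge $e' = \{u,u'\}$ traversed by $P$, replace $e'$ by the two edges $\{u, x_{e'}\}, \{x_{e'}, u'\}$ obtained from the subdivision. This yields an $s$-$t$ walk $P'$ in $H$ (in fact a path, since $P$ was a path and the subdivision vertices are fresh). The subdivision vertices on $P'$ are precisely $\{x_{e'} : e' \in E(P)\}$. Because $E(P) \cap C = \{e\}$, we have $\{x_{e'} : e' \in E(P)\} \cap V_C = \{x_e\}$, so $P'$ meets $V_C$ only at $x_e$ and therefore avoids $V_C \setminus \{x_e\}$. Hence $V_C \setminus \{x_e\}$ is not an $s$-$t$ separator in $H$, which gives minimality of $V_C$.

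There is no substantive obstacle here; the only care required is bookkeeping the correspondence between edges of $G$ and the subdivision vertices of $V_E$ (in particular, that $x_{e'} \in V_C$ iff $e' \in C$), and verifying that the lifted object $P'$ is indeed an $s$-$t$ path in $H$ whose intersection with $V_C$ we can compute exactly.
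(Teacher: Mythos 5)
Your proof is correct and follows essentially the same route as the paper: transfer minimality elementwise through the cut/separator correspondence of the preceding lemma. The only cosmetic difference is that the paper argues by contradiction, applying the separator-to-cut direction of that lemma to $V_C\setminus\{x_e\}$, whereas you unpack that direction inline by explicitly lifting the $s$-$t$ path of $G$ avoiding $C\setminus\{e\}$ to a path in $H$ avoiding $V_C\setminus\{x_e\}$ --- the same argument in contrapositive form.
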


\iflong{}
\begin{proof}
Let~$C$ be a minimal~$s$-$t$ cut in~$G$. By \Cref{lemma!cutisseparator}, we know that~$V_C:=\{w\in V_E\mid w$~corresponds~to~$e\in C \}$ is an~$s$-$t$ separator in~$H$. If~$V_C$ is a minimal~$s$-$t$~separator in $H$, then we are done. Thus, suppose that $V_C$~is an $s$-$t$~separator in~$H$, but $V_C$ is not a minimal~$s$-$t$~separator in~$H$. Then there exists a vertex~$w\in V_C$ such that~$V_C\backslash \{w\}$ is an~$s$-$t$ separator in~$H$. Let~$e\in C$ be the edge in~$G$ corresponding to vertex~$w$. Since~$V_C\backslash \{w\}\subseteq V_E$, again by \Cref{lemma!cutisseparator} we know that~$C\backslash \{e\}$ is an~$s$-$t$~cut in~$G$. This is a contradiction to the fact that~$C$ is a minimal~$s$-$t$ cut in~$G$, and hence, $V_C$ is a minimal~$s$-$t$~separator in $H$.
\end{proof}

We know that each minimal $s$-$t$~cut in~$G$ corresponds to a minimal $s$-$t$~separator in~$H$.\fi{} Next, we show that every vertex in the neighborhood of each minimal~$s$-$t$~separator containing only vertices in~$V_E$ belongs to a minimal~$s$-$t$~separator. \todo[inline]{rn: I found this sentence unclear and ambiguous.}\iflong{} Recall that for~$W\subseteq V$ we denote by~$N_G(W)$ the open neighborhood of the vertex set~$W$ in~$G$ and by~$N_G[W]:=W \cup N_G(W)$ the closed neighborhood of the vertex set~$W$ in~$G$.\fi{}

\begin{lem}\label{lemma!eachofneighborsarepartofspearators}
Let~$W\subseteq V_E\subseteq V(H)$ be the set of vertices corresponding to a minimal~$s$\nobreakdash-$t$~cut of size at most~$\ell\in\mathbb{N}$ in~$G$. Then, each vertex in~$N_{H}[W]$ is part of a minimal~{$s$-$t$}~separator of size at most~$\ell$ in~$H$.
\end{lem}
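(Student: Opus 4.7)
My plan is to split on whether the given vertex $v \in N_H[W]$ lies in $W$ or in $N_H(W) \setminus W$. If $v \in W$, the conclusion is immediate: by \Cref{lemma!minimalcutcorrespondstominimalsep}, $W$ itself is a minimal $s$-$t$ separator of $H$ of size at most $\ell$, so $v$ belongs to such a separator.

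For $v \in N_H(W) \setminus W$, I first observe that $v$ must lie in $V \subseteq V(H)$, since $W \subseteq V_E$ and $V_E$ is an independent set in $H$ (every edge of $H$ runs between $V$ and $V_E$ by \Cref{step!one}). I pick $w \in W$ adjacent to $v$. By the construction in \Cref{step!one}, $w$ has degree exactly two in $H$ and its two neighbors are precisely the endpoints of the edge of $G$ corresponding to $w$, so I may write $N_H(w) = \{v, u\}$ for some $u \in V$. I then consider the swap $W' := (W \setminus \{w\}) \cup \{v\}$, which has cardinality $|W| \le \ell$, and aim to show that $W'$ is an $s$-$t$ separator in $H$ from which a minimal sub-separator containing $v$ can be extracted.

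The crux of the argument is to show that $W'$ separates $s$ and $t$ in $H$. Suppose for contradiction that there is an $s$-$t$ path $P$ in $H \setminus W'$. Since $H$ is bipartite with parts $V$ and $V_E$, any internal visit of $P$ to the $V_E$-vertex $w$ would force both of its neighbors $v$ and $u$ to lie on $P$; but $v \notin V(P)$ because $v \in W'$. Moreover, $w$ cannot be an endpoint of $P$, since $s, t \in V$ while $w \in V_E$. Hence $P$ avoids $w$ altogether, and since $P$ already avoids $W \setminus \{w\}$, it avoids all of $W$, contradicting the fact (given by \Cref{lemma!minimalcutcorrespondstominimalsep}) that $W$ is an $s$-$t$ separator in $H$.

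Finally, I extract a minimal $s$-$t$ separator $W'' \subseteq W'$ by removing redundant vertices one at a time. If $v$ were not in $W''$, then $W'' \subseteq W \setminus \{w\}$; since any superset of an $s$-$t$ separator remains an $s$-$t$ separator, this would force $W \setminus \{w\}$ to be a separator, contradicting the minimality of $W$ given by \Cref{lemma!minimalcutcorrespondstominimalsep}. Hence $v \in W''$ and $|W''| \le |W'| \le \ell$, as required. The main obstacle I expect is the bipartite-alternation argument in the third paragraph: establishing that swapping the subdivision vertex $w$ for one of its original-graph endpoints $v$ preserves the separating property hinges on the degree-two property of vertices in $V_E$ together with the fact that no vertex of $V_E$ can serve as an endpoint of an $s$-$t$ path. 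The minimality extraction is then a standard step.
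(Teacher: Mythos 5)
Your proof is correct and follows essentially the same swap-and-extract strategy as the paper: the paper forms $W' := (W\setminus N_H(x))\cup\{x\}$ (removing \emph{all} of $x$'s neighbours in $W$ rather than a single $w$), shows it separates via the same degree-two/bipartite-alternation observation, and then argues exactly as you do that any minimal sub-separator of $W'$ must retain $x$, since otherwise a proper subset of $W$ would already separate $s$ from $t$. The only cosmetic differences are that you treat the case $v\in W$ explicitly and swap out just one neighbour, neither of which changes the substance of the argument.
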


\iflong{}
\begin{proof}
Let~$W\subseteq V_E\subseteq V(H)$ be given such that~$W$ corresponds to a minimal~$s$-$t$ cut in~$G$ of size at most~$\ell$. Note that by~\Cref{lemma!minimalcutcorrespondstominimalsep}, $W$~is a minimal~$s$-$t$~separator in $H$. Let~$x$ be an arbitrary vertex in~$N_H(W)$. First, we show that~$W':=(W\backslash N_H(x))\cup \{x\}$ is an~$s$-$t$~separator in~$H$. 
Suppose that $W'$~is not an $s$-$t$~separator in~$H$. Then there exists an $s$-$t$~path~$P$ in $H-W'$. Note that each vertex in~$W\cap N_H(x)$ is incident with vertex~$x$ and exactly one other vertex in $V(H)$. Thus, no vertex in~$W\cap N_H(x)$ appears in path~$P$. Hence, $P$~is an $s$-$t$~path in~$H-W$. This is a contradiction to the fact that $W$~is an~$s$-$t$~separator in~$H$, and hence, $W'$~is an $s$-$t$~separator in~$H$. 
Next, we show that if $W'$~is not a minimal~$s$-$t$~separator in~$H$, then there exists a set~$U\subseteq W'\backslash \{x\}$ such that $W'\backslash U$ is a minimal~$s$-$t$~separator in~$H$. 
Let $W'$ be an~$s$\nobreakdash-$t$~separator in~$H$, but not a minimal~$s$-$t$~separator in~$H$. Suppose that for all~$U\subseteq W'\backslash \{x\}$ it holds that $W'\backslash U$ is not a minimal~$s$-$t$~separator. Then there exists a set~$X\subseteq W'$ with $x\in X$ such that $W'\backslash X$ is a minimal~$s$-$t$~separator in~$H$. Since $W'\backslash X = W\backslash(N_H(x)\cap W)\backslash X\subseteq W$, this contradicts the fact that $W$~is a minimal~$s$\nobreakdash-$t$~separator in~$H$. Hence, there exists a set~$U\subseteq W'\backslash \{x\}$ such that $W'\backslash U$ is a minimal~$s$-$t$~separator in~$H$. 
Let $U\subseteq W'\backslash \{x\}$ be a set such that $W'':=W'\backslash U$ is a minimal~$s$-$t$~separator. Since $x\in W''$ and $|W''|\leq |W'|\leq |W|$, vertex~$x$ appears in a minimal~$s$-$t$~separator in~$H$ of size at most~$\ell$. Since vertex~$x$ was chosen arbitrarily in~$N_H(W)$, each vertex in~$N_{H}[W]$ is part of a minimal~$s$-$t$~separator of size at most~$\ell$ in~$H$.  
\end{proof}
\fi{}

We obtained graph~$H$ from graph~$G$ by applying \Cref{step!one}. By \Cref{lemma!minimalcutcorrespondstominimalsep}, we know that each minimal~$s$-$t$ cut in~$G$ corresponds to a minimal~$s$-$t$~separator in~$H$. Moreover, by \Cref{lemma!eachofneighborsarepartofspearators}, if we consider a minimal~$s$-$t$~cut of size smaller than~$p$ in~$G$, then, for each neighbor of the vertex set in~$H$ corresponding to the minimal~$s$-$t$ cut in~$G$, there exists a minimal~$s$-$t$ separator of size smaller than~$p$ in~$H$ that contains that neighbor. As the next step (cf. \Cref{fig:pftpoverview}) we apply the treewidth reduction technique~\cite{MarxOR13} on graph~$H$.

\begin{step}\label{step!two} %
Apply the treewidth reduction (\cref{theorem!twrt}) to graph~$H$ with~$T=\{s,t\}$ and~$p-1$ as upper bound for the size of the minimal~$s$-$t$ separators. Denote the resulting graph by~$H^*$. 
\end{step}

Let~$V_E^*:=\{v\in V(H^*)\mid v\in V_E\}$. Graph~$H^*$ contains all minimal~$s$-$t$~separators of size at most~$p-1$ in~$H$. By \Cref{lemma!minimalcutcorrespondstominimalsep}, every minimal~$s$-$t$ cut of size at most~$p-1$ in~$G$ corresponds to a minimal~$s$-$t$ separator of size at most~$p-1$ in~$H$ and thus, by \Cref{theorem!twrt}, to a minimal~$s$-$t$~separator of size at most~$p-1$ in~$H^*$. By \Cref{lemma!eachofneighborsarepartofspearators}, the neighborhood of each vertex in~$H$ corresponding to a vertex in~$V_E^*$ is contained in the vertex set~$V(H^*)$. As a consequence, we can reconstruct each edge in graph~$G$ that appears in a minimal~$s$-$t$~cut of size at most~$p-1$ in~$G$ as an edge in the graph~$H^*$. As our next step (cf. \Cref{fig:pftpoverview}), we contract for each vertex in~$V_E^*$ an incident edge in graph~$H^*$. We remark that if~$x^{vw}$ is a vertex in~$V_E^*$, then the only edges incident with vertex~$x^{vw}$ are~$\{v,x^{vw}\}$ and~$\{x^{vw},w\}$. In addition, the vertices~$v$ and~$w$ are the only neighbors of~$x^{vw}$ in graph~$H$ and in graph~$H^*$.

\begin{step}\label{step!three}
Contract for each vertex in~$V_E^*$ exactly one incident edge in~$H^*$ to obtain the graph~$G^*$. In other words, undo the subdivision applied on~$G$ to obtain~$H$.
\end{step}

We remark that~$\tw(G^*)\leq \tw(H^*)$, since edge contraction does not increase the treewidth of a graph \cite{RobertsonS86}. 
\iflong{}
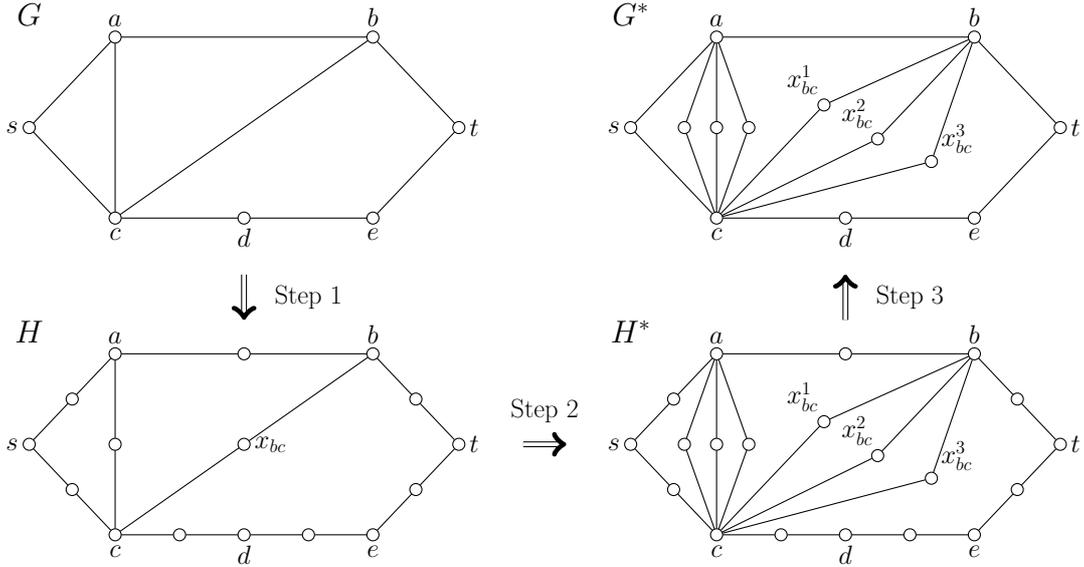
\begin{figure}[tb]
\centering

\begin{tikzpicture}[x=0.565cm, y=0.6cm]%

\tikzstyle{every node}=[scale=0.4798255,font={\huge}];

\node[align=center,scale=1.2] (whyfA) at (0,2.5) {$G$};

\node (s) at (0,0)[shape=circle, label=180:{$s$}, draw]{};
\node (a) at (2,2)[shape=circle, label=90:{$a$}, draw]{};
\node (b) at (8,2)[shape=circle, label=90:{$b$}, draw]{};
\node (c) at (2,-2)[shape=circle, label=270:{$c$}, draw]{};
\node (d) at (5,-2)[shape=circle, label=270:{$d$}, draw]{};
\node (e) at (8,-2)[shape=circle, label=270:{$e$}, draw]{};
\node (t) at (10,0)[shape=circle, label=0:{$t$}, draw]{};

\draw (s) -- (a);
\draw (s) -- (c);
\draw (a) -- (c);
\draw (a) -- (b);
\draw (c) -- (b);
\draw (c) -- (d);
\draw (d) -- (e);
\draw (e) -- (t);
\draw (b) -- (t);

\draw[double distance=1.4pt, ->] (5,-3.25) to (5,-4.25);
\node[align=center] (whyfA) at (6.5,-3.75) {\Cref{step!one}};

\node[align=center,scale=1.2] (whyfA) at (0,2.5-7) {$H$};

\node (s) at (0,0-7)[shape=circle, label=180:{$s$}, draw]{};
\node (sa) at (1,1-7)[shape=circle, draw]{};
\node (a) at (2,2-7)[shape=circle, label=90:{$a$}, draw]{};
\node (ac) at (2,0-7)[shape=circle, draw]{};
\node (sc) at (1,-1-7)[shape=circle, draw]{};
\node (b) at (8,2-7)[shape=circle, label=90:{$b$}, draw]{};
\node (ab) at (5,2-7)[shape=circle, draw]{};
\node (bc) at (5,0-7)[shape=circle, label=0:{$x_{bc}$}, draw]{};
\node (c) at (2,-2-7)[shape=circle, label=270:{$c$}, draw]{};
\node (cd) at (3.5,-2-7)[shape=circle, draw]{};
\node (d) at (5,-2-7)[shape=circle, label=270:{$d$}, draw]{};
\node (de) at (6.5,-2-7)[shape=circle, draw]{};
\node (e) at (8,-2-7)[shape=circle, label=270:{$e$}, draw]{};
\node (bt) at (9,1-7)[shape=circle, draw]{};
\node (et) at (9,-1-7)[shape=circle, draw]{};
\node (t) at (10,0-7)[shape=circle, label=0:{$t$}, draw]{};

\draw (s) -- (sa);
\draw (sa) -- (a);
\draw (s) -- (sc);
\draw (sc) -- (c);
\draw (a) -- (ac);
\draw (ac) -- (c);
\draw (a) -- (ab);
\draw (ab) -- (b);
\draw (c) -- (bc);
\draw (bc) -- (b);
\draw (c) -- (cd);
\draw (cd) -- (d);
\draw (d) -- (de);
\draw (de) -- (e);
\draw (e) -- (et);
\draw (et) -- (t);
\draw (b) -- (bt);
\draw (bt) -- (t);

\draw[double distance=1.4pt, ->] (5+6.5,-7) to (5+7.5,-7);
\node[align=right] (whyfA) at (5+7,-6.25) {\Cref{step!two}};

\node[align=center,scale=1.2] (whyfA) at (0+14,2.5-7) {$H^*$};

\node (s) at (0+14,0-7)[shape=circle, label=180:{$s$}, draw]{};
\node (sa) at (1+14,1-7)[shape=circle, draw]{};
\node (a) at (2+14,2-7)[shape=circle, label=90:{$a$}, draw]{};
\node (ac1) at (1.25+14,0-7)[shape=circle, draw]{};
\node (ac2) at (2+14,0-7)[shape=circle, draw]{};
\node (ac3) at (2.75+14,0-7)[shape=circle, draw]{};
\node (sc) at (1+14,-1-7)[shape=circle, draw]{};
\node (b) at (8+14,2-7)[shape=circle, label=90:{$b$}, draw]{};
\node (ab) at (5+14,2-7)[shape=circle, draw]{};
\node (bc1) at (4.5+14,0.5-7)[shape=circle, label=105:{$x_{bc}^1$}, draw]{};
\node (bc2) at (5.75+14,-0.25-7)[shape=circle, label=95:{$x_{bc}^2$}, draw]{};
\node (bc3) at (7+14,-0.75-7)[shape=circle, label=35:{$x_{bc}^3$}, draw]{};
\node (c) at (2+14,-2-7)[shape=circle, label=270:{$c$}, draw]{};
\node (cd) at (3.5+14,-2-7)[shape=circle, draw]{};
\node (d) at (5+14,-2-7)[shape=circle, label=270:{$d$}, draw]{};
\node (de) at (6.5+14,-2-7)[shape=circle, draw]{};
\node (e) at (8+14,-2-7)[shape=circle, label=270:{$e$}, draw]{};
\node (bt) at (9+14,1-7)[shape=circle, draw]{};
\node (et) at (9+14,-1-7)[shape=circle, draw]{};
\node (t) at (10+14,0-7)[shape=circle, label=0:{$t$}, draw]{};

\draw (s) -- (sa);
\draw (sa) -- (a);
\draw (s) -- (sc);
\draw (sc) -- (c);
\draw (a) -- (ac1);
\draw (ac1) -- (c);
\draw (a) -- (ac2);
\draw (ac2) -- (c);
\draw (a) -- (ac3);
\draw (ac3) -- (c);
\draw (a) -- (ab);
\draw (ab) -- (b);
\draw (c) -- (bc1);
\draw (bc1) -- (b);
\draw (c) -- (bc2);
\draw (bc2) -- (b);
\draw (c) -- (bc3);
\draw (bc3) -- (b);
\draw (c) -- (cd);
\draw (cd) -- (d);
\draw (d) -- (de);
\draw (de) -- (e);
\draw (e) -- (et);
\draw (et) -- (t);
\draw (b) -- (bt);
\draw (bt) -- (t);

\draw[double distance=1.4pt, ->] (5+14,-4.25) to (5+14,-3.25);
\node[align=center] (whyfA) at (6.5+14,-3.75) {\Cref{step!three}};

\node[align=center,scale=1.2] (whyfA) at (0+14,2.5) {$G^*$};

\node (s) at (0+14,0)[shape=circle, label=180:{$s$}, draw]{};
\node (a) at (2+14,2)[shape=circle, label=90:{$a$}, draw]{};
\node (ac1) at (1.25+14,0)[shape=circle, draw]{};
\node (ac2) at (2+14,0)[shape=circle, draw]{};
\node (ac3) at (2.75+14,0)[shape=circle, draw]{};
\node (b) at (8+14,2)[shape=circle, label=90:{$b$}, draw]{};
\node (bc1) at (4.5+14,0.5)[shape=circle, label=105:{$x_{bc}^1$}, draw]{};
\node (bc2) at (5.75+14,-0.25)[shape=circle, label=95:{$x_{bc}^2$}, draw]{};
\node (bc3) at (7+14,-0.75)[shape=circle, label=35:{$x_{bc}^3$}, draw]{};
\node (c) at (2+14,-2)[shape=circle, label=270:{$c$}, draw]{};
\node (d) at (5+14,-2)[shape=circle, label=270:{$d$}, draw]{};
\node (e) at (8+14,-2)[shape=circle, label=270:{$e$}, draw]{};
\node (t) at (10+14,0)[shape=circle, label=0:{$t$}, draw]{};

\draw (s) -- (a);
\draw (s) -- (c);
\draw (a) -- (ac1);
\draw (ac1) -- (c);
\draw (a) -- (ac2);
\draw (ac2) -- (c);
\draw (a) -- (ac3);
\draw (ac3) -- (c);
\draw (a) -- (b);
\draw (c) -- (bc1);
\draw (bc1) -- (b);
\draw (c) -- (bc2);
\draw (bc2) -- (b);
\draw (c) -- (bc3);
\draw (bc3) -- (b);
\draw (c) -- (d);
\draw (d) -- (e);
\draw (e) -- (t);
\draw (b) -- (t);

\end{tikzpicture}

\caption{Example of~\Cref{step!one,step!two,step!three} on the example graph~$G$ (top-left) with~$T=\{s,t\}$ and~$p=3$.}
\label{fig:modificationong}
\end{figure}

In \Cref{fig:modificationong}, we illustrate \Cref{step!one,step!two,step!three} on an example graph~$G$ with~$T=\{s,t\}$ and~$p=3$. The top-left graph is the original graph~$G$. The bottom-left graph is graph~$H$, obtained from~$G$ by applying \Cref{step!one}. The bottom-right graph is graph~$H^*$, obtained from~$H$ by applying \Cref{step!two}. The top-right graph is the final graph~$G^*$, obtained from~$H^*$ by applying \Cref{step!three}.
\fi{}

Let~$e=\{v,w\}\in E(G)$ be an edge in~$G$ and~$x_e\in V_E\subseteq V(H)$ be the corresponding vertex in~$H$. Then~$\{v,x_e\}$ and~$\{x_e,w\}$ are the incident edges of~$x_e$ in~$H$. If~$x_e \in V(H^*)$, then one of the incident edges~$\{v,x_e\}$ and~$\{x_e,w\}$ with vertex~$x_e$ is contracted and yields edge~$\{v,w\}\in E(G^*)$. We say that the edges~$\{v,w\}\in E(G)$ and~$\{v,w\}\in E(G^*)$ correspond one-to-one, and, for example, we write~$\{v,w\}\in E(G)\cap E(G^*)$. 

\ifshort{}\looseness=-1 Considering the graphs~$G$ and~$G^*$, we remark that one can show that, given an~$s$-$t$~path in the one graph, one can find an~$s$-$t$~path in the other graph using a common set of edges in~$E(G)\cap E(G^*)$.\fi{}\iflong{}Considering the graphs~$G$ and~$G^*$, we show that, given an~$s$-$t$~path in the one graph, we can find an~$s$-$t$~path in the other graph using a common set of edges in~$E(G)\cap E(G^*)$. 

\begin{lem}\label{lemma!pathswithsameedges}
\begin{enumerate}[(i)]
\item If~$P$ is an~$s$-$t$ path in~$G$, then there exists an~$s$-$t$ path~$P^*$ in~$G^*$ that contains all edges in~$E(P)\cap E(G^*)$.
\item If~$P^*$ is an~$s$-$t$ path in~$G^*$, then there exists an~$s$-$t$ path~$P$ in~$G$ that contains all edges in~$E(P^*)\cap E(G)$.
\end{enumerate}
\end{lem}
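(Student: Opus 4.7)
The plan is to trace how $s$-$t$ paths transform under each of the three construction steps, handling the two directions separately.

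For direction (i), I would start with an $s$-$t$ path $P=(s=u_0,u_1,\ldots,u_k=t)$ in $G$ and its subdivided lift $P_H$ in $H$. The edges in $F := E(P)\cap E(G^*)$ correspond exactly to those subdivision vertices $x_{e_i}$ of $P_H$ that lie in $V_E^*$; by \Cref{lemma!eachofneighborsarepartofspearators}, whenever $x_{e_i}\in V_E^*$ both endpoints $u_{i-1},u_i$ also lie in $V(H^*)$, so the corresponding direct edge persists in $G^*$. Next, I would locate the vertices of $P_H$ lying in $V(H^*)=C\cup T$; between two consecutive such vertices $v,w$ on $P_H$, every internal vertex of the segment lies outside $C\cup T$, so by the torso construction either $\{v,w\}\in E(H[C\cup T])\subseteq E(H^*)$ or a copy path between $v$ and $w$ exists in $H^*$. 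Stitching these segments together and undoing the contractions of \Cref{step!three} gives an $s$-$t$ walk in $G^*$ containing every edge of $F$. The walk is already a simple path: the original vertices along $P$ are distinct, and each copy path introduces fresh internal vertices not reused elsewhere (they are indexed by the pair of endpoints, which differs across segments).

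For direction (ii), I would start with $P^*$ in $G^*$ and decompose it into maximal $F$-subpaths $\pi_1,\ldots,\pi_r$ (direct edges in $E(G)\cap E(G^*)$) alternating with \emph{bridges} that consist of copy-path traversals between original vertices. Each copy-path traversal between two original vertices $v,w$ corresponds to a torso edge in $H^*$, which in turn arose from a $v$-$w$ path $Q$ in $H$ whose internal vertices lie outside $C\cup T$; translating $Q$ back to $G$ (by identifying each subdivision vertex $x_e$ with the edge $e$) yields a $v$-$w$ path in $G$ whose internal original vertices lie in $V(G)\setminus V(G^*)$. Concatenating the $\pi_i$'s with these bridge realizations produces an $s$-$t$ walk $W$ in $G$ containing $F$.

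The main obstacle is reducing $W$ to a \emph{simple} $s$-$t$ path in $G$ while preserving every edge of $F$. The key observation underpinning the reduction is a vertex partition: every vertex of $V(G)\cap V(G^*)$ occurring on $W$ (namely, endpoints of direct edges and the waypoints between consecutive copy paths) is distinct, because they correspond to distinct original vertices of the simple path $P^*$; hence any repeated vertex of $W$ must lie in $V(G)\setminus V(G^*)$ and can only occur as an internal vertex of some bridge realization. By cutting out cycles at such repeated vertices, and using the freedom to choose the bridge realization path (via an alternative $v$-$w$ path through $V(H)\setminus(C\cup T)$, or by rerouting through a repeated inner vertex), one removes all repetitions without touching an edge of $F$, since $F$-edges are incident only with $V(G)\cap V(G^*)$ vertices. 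The resulting simple $s$-$t$ path in $G$ contains all edges of $E(P^*)\cap E(G)$, as required.
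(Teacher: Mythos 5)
Your overall strategy is the same as the paper's: decompose the given path into maximal subpaths that use only edges of $E(G)\cap E(G^*)$, alternating with ``detour'' subpaths whose interiors avoid $V(G)\cap V(G^*)$, and replace each detour by a copy path (direction (i)) respectively by a $G$-realization of the torso edge behind the copy path (direction (ii)). The paper phrases this directly on $G$ and $G^*$ while you narrate it through $H$ and $H^*$, but the content is identical, and your direction (i) is fine: the endpoints of the replaced segments are distinct vertices of the simple path $P$, and the copy-path interiors are fresh vertices indexed by their endpoint pair, so simplicity holds.

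The one place where you go beyond the paper is the simplicity repair in direction (ii), and that is where your argument has a genuine gap. Suppose two bridge realizations $Q_1$ (from $v_1$ to $w_1$) and $Q_2$ (from $v_2$ to $w_2$) both pass through the same inner vertex $u\in V(G)\setminus V(G^*)$, with the walk visiting them in the order $s\cdots v_1\,[Q_1]\,w_1\cdots v_2\,[Q_2]\,w_2\cdots t$. The closed subwalk between the two occurrences of $u$ consists of the tail of $Q_1$, \emph{everything on the walk between $w_1$ and $v_2$}, and the head of $Q_2$. Excising it therefore deletes all the direct-edge subpaths between $w_1$ and $v_2$, which may well contain edges of $F$. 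The fact that $u$ itself is incident only with edges outside $F$ does not protect the edges \emph{inside} the excised cycle, so ``cutting out cycles at repeated vertices'' does not preserve $F$. Your fallback---rerouting one of the bridges along an alternative realization avoiding $u$---is the right instinct, but you give no argument that such an alternative exists (a priori $u$ could lie on every $v_1$-$w_1$ path whose interior avoids $C\cup T$). To be fair, the paper's own proof of (ii) simply replaces each copy path by a realization and asserts the result is a path, so it does not close this gap either; but since you explicitly raised the issue, you need either a proof that distinct bridge realizations can always be chosen internally disjoint, or a different cleanup argument that demonstrably keeps all of $F$.
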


\begin{proof}
(i): Let~$P$ be an~$s$-$t$ path in~$G$. If~$P$ just contains edges in~$E(G)\cap E(G^*)$, then we set~$P^*=P$. If~$P$ contains edges in~$E(G)\backslash E(G^*)$, then~$P$ has a representation of consecutive subpaths~$P_i$,~$1\leq i \leq j$, and~$Q_i$,~$1\leq i\leq \ell$, where~$\{P_i\}_{1\leq i\leq j}$ is the set of subpaths of~$P$ that just contain edges in~$E(G)\cap E(G^*)$ and~$\{Q_i\}_{1\leq i\leq \ell}$ is the set of subpaths of~$P$ with endpoints in~$V(G)\cap V(G^*)$, inner vertices in~$V(G)\backslash V(G^*)$ and edges in~$E(G)\backslash E(G^*)$. Since for each~$1\leq i\leq \ell$, path~$Q_i$ is connecting two vertices~$v,w\in V(G)\cap V(G^*)$ in~$G$, there are~$p$ edge-disjoint paths of length 2 in~$G^*$ connecting~$v$ and~$w$ using the edges in~$E(G^*)\backslash E(G)$, that are the copy paths. For each~$i\in [\ell]$, let~$Q_i'$ be one of the copy paths connecting the endpoints of~$Q_i$. 
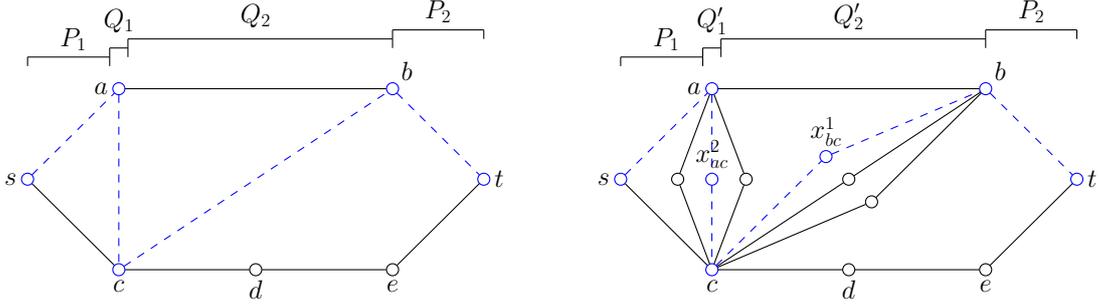
\begin{figure}[tb]
\begin{tikzpicture}[x=0.60cm,y=0.60cm]

\tikzstyle{every node}=[scale=0.4798255,font={\huge}];
\node (s) at (0,0)[shape=circle, color=blue, label=180:{$s$}, draw]{};
\node (a) at (2,2)[shape=circle, color=blue, label=180:{$a$}, draw]{};
\node (b) at (8,2)[shape=circle, color=blue, label=45:{$b$}, draw]{};
\node (c) at (2,-2)[shape=circle, color=blue, label=270:{$c$}, draw]{};
\node (d) at (5,-2)[shape=circle, label=270:{$d$}, draw]{};
\node (e) at (8,-2)[shape=circle, label=270:{$e$}, draw]{};
\node (t) at (10,0)[shape=circle, color=blue, label=0:{$t$}, draw]{};

\draw[color=blue, dashed] (s) -- (a);
\draw (s) -- (c);
\draw[color=blue, dashed] (a) -- (c);
\draw (a) -- (b);
\draw[color=blue, dashed] (c) -- (b);
\draw (c) -- (d);
\draw (d) -- (e);
\draw (e) -- (t);
\draw[color=blue, dashed] (b) -- (t);

\draw (0,2.5) -- (0,2.7);
\draw (1.8,2.5) -- (1.8,2.7);
\draw (0,2.7) -- (1.8,2.7);
\node[align=center] (P1) at (1,3.1) {$P_1$};

\draw (1.8,2.7) -- (1.8,2.9);
\draw (2.2,2.7) -- (2.2,2.9);
\draw (1.8,2.9) -- (2.2,2.9);
\node[align=center] (Q1) at (2,3.5) {$Q_1$};

\draw (2.2,2.9) -- (2.2,3.1);
\draw (8,2.9) -- (8,3.1);
\draw (2.2,3.1) -- (8,3.1);
\node[align=center] (Q2) at (5,3.6) {$Q_2$};

\draw (8,3.1) -- (8,3.3);
\draw (10,3.1) -- (10,3.3);
\draw (8,3.3) -- (10,3.3);
\node[align=center] (Q2) at (9,3.7) {$P_2$};

\node (s) at (0+13,0)[shape=circle, color=blue, label=180:{$s$}, draw]{};
\node (a) at (2+13,2)[shape=circle, color=blue, label=180:{$a$}, draw]{};
\node (ac1) at (1.25+13,0)[shape=circle, label=135:{}, draw]{};
\node (ac2) at (2+13,0)[shape=circle, color=blue, label=90:{$x_{ac}^2$}, draw]{};
\node (ac3) at (2.75+13,0)[shape=circle, label=45:{}, draw]{};
\node (b) at (8+13,2)[shape=circle, color=blue, label=45:{$b$}, draw]{};
\node (bc1) at (4.5+13,0.5)[shape=circle, color=blue, label=90:{$x_{bc}^1$}, draw]{};
\node (bc2) at (5+13,0)[shape=circle, label=0:{}, draw]{};
\node (bc3) at (5.5+13,-0.5)[shape=circle, label=0:{}, draw]{};
\node (c) at (2+13,-2)[shape=circle, color=blue, label=270:{$c$}, draw]{};
\node (d) at (5+13,-2)[shape=circle, label=270:{$d$}, draw]{};
\node (e) at (8+13,-2)[shape=circle, label=270:{$e$}, draw]{};
\node (t) at (10+13,0)[shape=circle, color=blue, label=0:{$t$}, draw]{};

\draw[color=blue, dashed] (s) -- (a);
\draw (s) -- (c);
\draw (a) -- (ac1);
\draw (ac1) -- (c);
\draw[color=blue, dashed] (a) -- (ac2);
\draw[color=blue, dashed] (ac2) -- (c);
\draw (a) -- (ac3);
\draw (ac3) -- (c);
\draw (a) -- (b);
\draw[color=blue, dashed] (c) -- (bc1);
\draw[color=blue, dashed] (bc1) -- (b);
\draw (c) -- (bc2);
\draw (bc2) -- (b);
\draw (c) -- (bc3);
\draw (bc3) -- (b);
\draw (c) -- (d);
\draw (d) -- (e);
\draw (e) -- (t);
\draw[color=blue, dashed] (b) -- (t);

\draw (0+13,2.5) -- (0+13,2.7);
\draw (1.8+13,2.5) -- (1.8+13,2.7);
\draw (0+13,2.7) -- (1.8+13,2.7);
\node[align=center] (P1) at (1+13,3.1) {$P_1$};

\draw (1.8+13,2.7) -- (1.8+13,2.9);
\draw (2.2+13,2.7) -- (2.2+13,2.9);
\draw (1.8+13,2.9) -- (2.2+13,2.9);
\node[align=center] (Q1) at (2+13,3.5) {$Q_1'$};

\draw (2.2+13,2.9) -- (2.2+13,3.1);
\draw (8+13,2.9) -- (8+13,3.1);
\draw (2.2+13,3.1) -- (8+13,3.1);
\node[align=center] (Q2) at (5+13,3.6) {$Q_2'$};

\draw (8+13,3.1) -- (8+13,3.3);
\draw (10+13,3.1) -- (10+13,3.3);
\draw (8+13,3.3) -- (10+13,3.3);
\node[align=center] (Q2) at (9+13,3.7) {$P_2$};
\end{tikzpicture}
\caption{The graphs~$G$ (left-hand side) and~$G^*$ (right-hand side) from Figure~\ref{fig:modificationong}. The dashed edges belong to an~$s$-$t$~path in~$G$ and~$G^*$ respectively. The upper braces show the range of the consecutive subpaths~$P_1$,~$Q_1$,$Q_2$,~$P_2$ for~$G$ and~$P_1$,~$Q_1'$,~$Q_2'$,~$P_2$ for~$G^*$.}
\label{fig:PisQis}
\end{figure}
Figure~\ref{fig:PisQis} illustrates this correspondence on an example graph. Replacing each~$Q_i$ by such a path~$Q_i'$ in~$G^*$ yields a path~$P'$ with consecutive subpaths~$P_i$,~$1\leq i \leq j$, and~$Q_i'$,~$1\leq i\leq \ell$, in~$G^*$ connecting~$s$ and~$t$ that contains all edges in~$E(P)\cap E(G^*)$.

(ii): Let~$P^*$ be an~$s$-$t$ path in~$G^*$. If~$P^*$ just contains edges in~$E(G)\cap E(G^*)$, then we set~$P=P^*$. If~$P^*$ contains edges in~$E(G^*)\backslash E(G)$, then~$P^*$ has a representation of consecutive subpaths~$P_i'$,~$1\leq i \leq j$, and~$Q_i'$,~$1\leq i\leq \ell$, where~$\{P_i'\}_{1\leq i\leq j}$ is the set of subpaths of~$P^*$ that just contain edges in~$E(G^*)\cap E(G)$ and~$\{Q_i'\}_{1\leq i\leq \ell}$ is the set of subpaths of~$P^*$ with endpoints in~$V(G^*)\cap V(G)$, inner vertices in~$V(G^*)\backslash V(G)$, and edges in~$E(G^*)\backslash E(G)$. We remark that each~$Q_i'$ is one of the copy paths in~$G^*$. By construction of~$G^*$, each~$Q_i'$ connects two vertices in~$V(G^*)\cap V(G)$ that are connected by a path in~$G$ with no inner vertices in~$V(G^*)\cap V(G)$. Therefore, for each $i\in[\ell]$, we can replace path~$Q_i'$ by such a path~$Q_i$ in~$G$. This yields an~$s$-$t$~path~$P$ in~$G$ with consecutive subpaths~$P_i'$,~$1\leq i \leq j$, and~$Q_i$,~$1\leq i\leq \ell$, that contains all edges in~$E(P^*)\cap E(G)$.
\end{proof}

We modified graph~$G$ to graph~$G^*$ by applying \Cref{step!one,step!two,step!three}. By \Cref{lemma!pathswithsameedges}, we can construct $s$-$t$~routes in~$G$ and~$G^*$ that use edges in the common set of edges~$E(G)\cap E(G^*)$.\fi{} The next lemma states that each minimal~$s$-$t$~cut of size smaller than~$p$ in one of the graphs~$G$ and~$G^*$ is also a minimal~$s$-$t$~cut of size smaller than~$p$ in the other graph.

\begin{lem}\label{lemma!cutingiffcutingstar}
Let~$C\subseteq E(G)\cap E(G^*)$. Edge set~$C$ is a minimal~$s$-$t$ cut in~$G$ of size smaller than~$p$ if and only if~$C$ is a minimal~$s$-$t$ cut in~$G^*$ of size smaller than~$p$.
\end{lem}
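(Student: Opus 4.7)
The plan is to chain the claim through the intermediate graphs $H$ and $H^*$. Writing $V_C := \{x_e : e \in C\}$ for any $C \subseteq E(G)$, the previous lemmas already supply two links in the chain: by \cref{lemma!cutisseparator} and \cref{lemma!minimalcutcorrespondstominimalsep}, $C$ is a minimal $s$-$t$ cut of size $<p$ in $G$ if and only if $V_C$ is a minimal $s$-$t$ separator of size $<p$ in $H$; and by \cref{theorem!twrt}(2), the latter is equivalent to $V_C \subseteq V(H^*)$ together with $V_C$ being a minimal $s$-$t$ separator of size $<p$ in $H^*$. Note that under the hypothesis $C \subseteq E(G) \cap E(G^*)$, each $x_e$ with $e \in C$ lies in $V_E^*$: this is the combinatorial trace of the contraction performed in \cref{step!three}, since the edges in $E(G) \cap E(G^*)$ are precisely the ones obtained from collapsing the length-2 paths $v - x_e - w$ in $H^*$ with $x_e \in V_E^*$.

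The new ingredient I would need to supply is the equivalence between ``$V_C$ is a minimal $s$-$t$ separator of size $<p$ in $H^*$'' and ``$C$ is a minimal $s$-$t$ cut of size $<p$ in $G^*$''. This I would prove by walk substitution in both directions. For the ``only if'' direction, given an $s$-$t$ path $P^*$ in $G^* \setminus C$, I would build an $s$-$t$ walk in $H^*$ by substituting, for each edge $\{v,w\} \in E(G) \cap E(G^*)$ used by $P^*$, the length-2 walk $v, x_{\{v,w\}}, w$ in $H^*$, and leaving each copy-path edge of $P^*$ untouched (such edges occur identically in $H^*$, since \cref{step!three} only contracts edges incident to $V_E^*$). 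Because $P^*$ avoids $C$, the resulting walk avoids $V_C$, contradicting $V_C$ being a separator; hence $C$ is an $s$-$t$ cut in $G^*$. Minimality is the same argument applied to $C \setminus \{e\}$ and $V_C \setminus \{x_e\}$. For the ``if'' direction one reads the construction in reverse: each subpath $v, x_e, w$ of an $s$-$t$ path in $H^* \setminus V_C$ collapses to the edge $\{v,w\} \in E(G) \cap E(G^*) \setminus C$ in $G^*$, while copy-path edges are inherited as-is.

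Concatenating the three equivalences proves the lemma. The main obstacle, as I see it, is purely bookkeeping: one must verify that the vertices of $V_E^* \subseteq V(H^*)$ correspond one-to-one to edges of $E(G) \cap E(G^*)$ under \cref{step!three}, and that the copy-path vertices and edges coincide between $H^*$ and $G^*$ (since \cref{step!three} only touches $V_E^*$). Once these two observations are in place, the walk-substitution argument runs mechanically in both directions, and the size bound $|V_C| = |C| < p$ is exactly what allows \cref{theorem!twrt}(2) to be invoked.
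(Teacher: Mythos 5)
Your proof is correct, but it takes a genuinely different route from the paper's in one of the two directions. For ``$\Rightarrow$'' the arguments coincide: the paper also passes from $C$ to $V_C$ through $H$ via \cref{lemma!minimalcutcorrespondstominimalsep}, invokes \cref{theorem!twrt}(2) to land in $H^*$, and then observes that since every vertex of $V_C$ has degree exactly two in $H^*$, undoing the subdivision turns the separator back into the cut $C$ in $G^*$ --- your walk-substitution spells out what the paper compresses into that one sentence. For ``$\Leftarrow$'', however, the paper does not return through $H^*$ and $H$ at all: it uses \cref{lemma!pathswithsameedges}, a direct path correspondence between $G$ and $G^*$ built from the copy paths, and argues by contradiction in two cases ($C$ is not a cut in $G$, resp.\ is a non-minimal cut in $G$). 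You instead apply \cref{theorem!twrt}(2) a second time in the reverse direction together with the $H^*$/$G^*$ substitution, which turns the whole proof into a symmetric chain of three equivalences and makes \cref{lemma!pathswithsameedges} unnecessary for this statement (the paper still needs that lemma for \cref{lemma!gstaryesiffgyes}, so nothing is saved globally). Two pieces of the bookkeeping you flagged do need to be carried out explicitly: the converse of \cref{lemma!minimalcutcorrespondstominimalsep} (a minimal $s$-$t$ separator of $H$ contained in $V_E$ yields a minimal $s$-$t$ cut of $G$) is not stated anywhere in the paper, though it follows in one line from \cref{lemma!cutisseparator} applied to $C$ and to $C\setminus\{e\}$; and the claim that every $x_e\in V_E^*$ retains both of its $H$-neighbours in $H^*$ and acquires no torso edges, so that $E(G^*)$ is exactly $(E(G)\cap E(G^*))$ together with the copy-path edges and your case analysis of the edges of $P^*$ is exhaustive --- this rests on \cref{lemma!eachofneighborsarepartofspearators} and on the fact that the new subdivision vertices of the torso are disjoint from $V_E$.
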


\iflong{}
\begin{proof}
We make use of \Cref{lemma!pathswithsameedges} in the following proof. We remark that no edge in~$E(G^*)\backslash E(G)$ is in any minimal~$s$-$t$ cut of size smaller than~$p$ in~$G^*$ since, by the treewidth reduction technique, for each of these edges there are~$p-1$ copies in~$G^*$.

``$\Rightarrow$'': Let~$C$ be a minimal~$s$-$t$~cut of size smaller than~$p$ in~$G$. By \Cref{lemma!minimalcutcorrespondstominimalsep},~$C$ has a corresponding minimal~$s$-$t$ separator~$S_C$ of size smaller than~$p$ in~$H$. By the treewidth reduction technique,~$S_C$ is a minimal~$s$-$t$ separator in~$H^*$. By \Cref{lemma!eachofneighborsarepartofspearators}, every neighbor of~$S_C$ is contained in~$H^*$. By our contraction of edges of~$H^*$ to~$G^*$, for each vertex of~$S_C$ an incident edge is contracted and yields the edge set~$C$ again. Since $S_C$~is a minimal~$s$-$t$~separator in~$H^*$ of size smaller than~$p$ and each vertex in~$S_C$ has degree exactly two, set~$C$ is a minimal~$s$-$t$~cut in~$G^*$ of size smaller than~$p$. 

``$\Leftarrow$'': Let~$C$ be a minimal~$s$-$t$~cut in~$G^*$ of size smaller than~$p$. Suppose~$C$ is not a minimal~$s$-$t$ cut in~$G$ of size smaller than~$p$. We distinguish two cases.

\emph{Case 1:}~$C$ is not an~$s$-$t$~cut in~$G$. Then there exists a path~$P$ in~$G$ connecting~$s$ and~$t$ avoiding the edges in~$C$. By \Cref{lemma!pathswithsameedges}, there exists an~$s$-$t$~path~$P^*$ in~$G^*$ that contains all edges in~$E(P)\cap E(G^*)$. Since no edge in~$E(G^*)\backslash E(G)$ is in any minimal~$s$-$t$ cut of size at most~$p-1$ of~$G^*$,~$P^*$ avoids the edges in~$C$. This is a contradiction to the fact that~$C$ is a minimal~$s$-$t$~cut in~$G^*$.

\emph{Case 2:}~$C$ is an~$s$-$t$~cut in~$G$, but~$C$ is not a minimal~$s$-$t$~cut in $G$. Then there exists~$e\in C$ such that~$C':=C\backslash\{e\}$ is an~$s$-$t$ cut in~$G$. Since~$C$ is a minimal~$s$-$t$~cut in~$G^*$, the set~$C'$ is not an $s$-$t$~cut in $G^*$. Thus, there exists an~$s$-$t$~path~$P^*$ in $G^*$ that avoids the edges in~$C'$. By \Cref{lemma!pathswithsameedges}, there exists an~$s$-$t$~path~$P$ in~$G$ that contains all the edges in~$E(P^*)\cap E(G)$. Since no edge in~$E(G)\backslash E(G^*)$ is in any minimal~$s$-$t$~cut of size at most~$p-1$ in~$G$, path~$P$ avoids the edges in~$C'$. Therefore, set~$C'$ is not an~$s$-$t$~cut in~$G$, and thus,~$C$ is a minimal~$s$-$t$~cut in~$G$.
\end{proof}
\fi{}

\looseness=-1 Recalling \Cref{lemma!yesinstancethenedgesetminimalcut}, we know that if an instance of \MSE{} is a yes-instance, then we can find $k$~edges such that the $k$~edges form a solution for the instance and each of the $k$~edges is part of a minimal~$s$-$t$~cut of size smaller than~$p$ in~$G$. By \Cref{lemma!cutingiffcutingstar}, the graphs~$G$ and~$G^*$ have the same set of minimal~$s$-$t$~cuts of size smaller than~$p$ in common. Combining \Cref{lemma!yesinstancethenedgesetminimalcut} and \Cref{lemma!cutingiffcutingstar} leads to the following lemma.

\begin{lem}\label{lemma!gstaryesiffgyes}
$(G^*,s,t,p,k)$ is a yes-instance of \MSE{} if and only if~$(G,s,t,p,k)$ is a yes-instance of \MSE{}.
\end{lem}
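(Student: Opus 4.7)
The plan is to pick a canonical solution whose shared edges all lie in small minimal $s$-$t$ cuts via \Cref{lemma!yesinstancethenedgesetminimalcut}, and then to transfer both the edge set and an associated routing between $G$ and $G^*$ using the cut correspondence of \Cref{lemma!minimalcutcorrespondstominimalsep,lemma!cutingiffcutingstar} together with the path correspondence of \Cref{lemma!pathswithsameedges}.

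First, I would dispose of the trivial case. If $G$ has no minimal $s$-$t$ cut of size smaller than $p$, then Menger's theorem gives $p$ edge-disjoint $s$-$t$ paths in $G$, so $F=\emptyset$ already solves $(G,s,t,p,k)$. Combining the remark in the proof of \Cref{lemma!cutingiffcutingstar}---no edge of $E(G^*)\setminus E(G)$ can lie in a minimal $s$-$t$ cut of $G^*$ of size less than $p$---with \Cref{lemma!cutingiffcutingstar} itself shows that $G^*$ has no small minimal $s$-$t$ cut either, so $(G^*,s,t,p,k)$ is likewise a yes-instance. The symmetric trivial case starting from $G^*$ is handled identically.

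For the main case, I assume both graphs admit a minimal $s$-$t$ cut of size less than $p$. For the forward direction, I would start from a yes-instance $(G,s,t,p,k)$, invoke \Cref{lemma!yesinstancethenedgesetminimalcut} to obtain a solution $F\subseteq E(G)$ with $|F|\le k$ such that every $e\in F$ lies in a minimal $s$-$t$ cut of $G$ of size less than~$p$, and conclude by \Cref{lemma!minimalcutcorrespondstominimalsep} together with the construction of $G^*$ in \Cref{step!one,step!two,step!three} that $F\subseteq E(G)\cap E(G^*)$. I would then take $p$ routes $P_1,\dots,P_p$ in $G$ sharing only edges of $F$ and lift each $P_i$ to a route $P_i^*$ in $G^*$ via \Cref{lemma!pathswithsameedges}(i), assigning distinct copy paths in $G^*$ to distinct $P_i$'s whenever several of them traverse the same torso bridge. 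This assignment is feasible because at most $p$ routes can traverse any single bridge while the treewidth reduction introduces exactly $p$ parallel copy paths per bridge, so the resulting routes $P_1^*,\dots,P_p^*$ share only edges of $F$ and witness that $(G^*,s,t,p,k)$ is a yes-instance. The reverse direction is symmetric, using \Cref{lemma!pathswithsameedges}(ii) to collapse copy paths.

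The main obstacle will be verifying that the path transfers introduce no shared edges outside the chosen $F$. In the forward direction this is taken care of by the copy-path assignment above; in the reverse direction one has to argue that distinct copy paths used by the routes in $G^*$ can be collapsed to edge-disjoint $v$-$w$ segments in the bridge subgraph of $G$. The crucial observation is that the treewidth reduction introduces a torso bridge between $v$ and $w$ only when a $v$-$w$ path in $G$ avoids every small minimal $s$-$t$ separator, a condition that forces the $G$-side bridge subgraph to supply enough edge-disjoint $v$-$w$ paths to accommodate each of the routes that uses a distinct copy in $G^*$.
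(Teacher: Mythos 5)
Your overall strategy (canonicalize the solution via \Cref{lemma!yesinstancethenedgesetminimalcut} so that $F\subseteq E(G)\cap E(G^*)$, then transfer) matches the paper's starting point, but from there you argue at the level of explicit routings, and that is where a genuine gap appears. In the reverse direction you must collapse, for each torso bridge $\{v,w\}$, up to $p$ routes that travel over $p$ pairwise edge-disjoint copy paths in $G^*$ into that many pairwise edge-disjoint $v$-$w$ connections inside the corresponding removed region of $G$. You flag this yourself as ``the main obstacle'' and then dispose of it with the assertion that avoidance of all small minimal $s$-$t$ separators ``forces the $G$-side bridge subgraph to supply enough edge-disjoint $v$-$w$ paths.'' That implication is not proved and does not follow from anything you cite: the treewidth reduction creates a bridge (and all $p$ of its copies) as soon as a \emph{single} $v$-$w$ path through the removed region exists, and \Cref{lemma!pathswithsameedges} only guarantees, per route, \emph{some} replacement path in $G$, with no edge-disjointness across different routes. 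Turning your one-sentence observation into a proof would require a Menger-type argument showing that a small $v$-$w$ edge bottleneck inside a removed region would force some of its subdivision vertices into a minimal $s$-$t$ separator of size at most $p-1$ (contradicting their removal) --- which is essentially re-deriving the cut-preservation machinery, not a consequence of it. (The analogous step in your forward direction is fine, since there you really do have $p$ explicitly parallel copy paths to distribute at most $p$ routes over, though even that goes slightly beyond the statement of \Cref{lemma!pathswithsameedges}.)

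The paper avoids this entirely by never collapsing a routing. It works with the contraction reformulation of \MSE{}: $F$ is a solution iff $G/F$ with unit capacities has an $s$-$t$ flow of value at least $p$, equivalently (max-flow min-cut) iff $G/F$ has no $s$-$t$ cut of size smaller than $p$, equivalently iff $F$ meets every minimal $s$-$t$ cut of size smaller than $p$. Since \Cref{lemma!cutingiffcutingstar} already identifies the minimal $s$-$t$ cuts of size smaller than $p$ in $G$ and $G^*$ (and no edge outside $E(G)\cap E(G^*)$ lies in such a cut), the same $F$ hits all of them in one graph iff it does in the other, and the equivalence follows without ever touching the interiors of the removed regions. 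I would recommend restructuring your reverse direction along these lines, or else supplying a full proof of your disjoint-paths claim; as written, the argument is incomplete at its decisive step.
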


\iflong
\begin{proof}
We make use of the contraction equivalent of \MSE{}.

``$\Rightarrow$'': Let~$(G^*,s,t,p,k)$ be a yes-instance of \MSE{}. By \Cref{lemma!yesinstancethenedgesetminimalcut}, we find a solution~$F\subseteq E(G^*)$ such that each edge in~$F$ is part of a minimal~$s$-$t$ cut in~$G^*$ of size smaller than~$p$. It follows that~$F\subseteq E(G)\cap E(G^*)$, since by our construction no edge in~$(E(G^*)\backslash E(G))$ is part of a minimal~$s$-$t$ cut of size smaller than~$p$ in~$G^*$. Let~$G_F:=G/F$ be the graph~$G$ with all edges in $F$ contracted. Suppose that~$G_F$ with unit edge capacities allows a maximum~$s$-$t$~flow of value smaller than~$p$. Then there exists a minimal~$s$-$t$ cut~$C$ of size smaller than~$p$ in~$G_F$. By \Cref{lemma!cutingiffcutingstar},~$C$ is also a minimal~$s$-$t$~cut of size smaller than~$p$ in~$G^*_F:=G^*/F$. This is a contradiction to the fact that the value of any maximum~$s$-$t$~flow in~$G^*_F$ with unit edge capacities is at least~$p$, and hence, set~$F$ is a solution for instance~$(G,s,t,p,k)$.

``$\Leftarrow$'': Let~$(G,s,t,p,k)$ be a yes-instance of \MSE{}. By \Cref{lemma!yesinstancethenedgesetminimalcut}, we find a solution~$F\subseteq E(G)$ such that each edge in~$F$ is part of a minimal~$s$-$t$~cut in~$G$ of size smaller than~$p$. It follows that~$F\subseteq E(G)\cap E(G^*)$. Suppose that~$G^*_F:=G^*/F$ with unit edge capacities allows a maximum~$s$-$t$~flow of value smaller than~$p$. Then there exists a minimal~$s$-$t$ cut~$C$ of size smaller than~$p$ in~$G^*_F$. By \Cref{lemma!cutingiffcutingstar}, $C$~is a minimal~$s$-$t$~cut of size smaller than~$p$ in~$G_F:=G/F$. This is a contradiction to the fact that the value of any maximum~$s$-$t$~flow in~$G_F$ with unit edge capacities is at least~$p$, and hence, set~$F$ is a solution for instance~$(G^*,s,t,p,k)$.
\end{proof}
\fi

By \Cref{lemma!gstaryesiffgyes}, we know that the instances~$(G^*,s,t,p,k)$ and~$(G,s,t,p,k)$ are equivalent for \MSE{}. By our construction, we know that the treewidth of $G^*$ is upper-bounded by a function only depending on the number $p$ of routes. In addition, we know that \msetsc{} is fixed-parameter tractable with respect to the number~$p$ of routes and an upper bound on the treewidth of the input graph. Thus, we are ready to prove \cref{theorem!fptwrtp}.

\begin{proof}[Proof of \Cref{theorem!fptwrtp}]
First we modify our graph~$G=(V,E)$ by applying \Cref{step!one,step!two,step!three}. Let~$H$,~$H^*$, and~$G^*$ be the according graphs. By \Cref{theorem!twrt}, the treewidth of~$H^*$ is upper-bounded by~$h(p)$ for some function~$h$. Since edge contractions do not increase the treewidth of a graph \cite{RobertsonS86}, it follows that~$\tw(G^*)\leq \tw(H^*)$. By \Cref{lemma!gstaryesiffgyes}, the instances~$(G^*,s,t,p,k)$ and~$(G,s,t,p,k)$ are equivalent for \MSE{}.

We know from~\Cref{thm:twdp} that \acrmse{$p,\omega$} is fixed-parameter tractable when parameterized by the number $p$ of routes and by an upper bound~$\omega$ on the treewidth of the input graph. Since function~$h$ only depends on~$p$ and $h(p)$ is upper-bounding the treewidth of graph~$G^*$, we can solve instance~$(G^*,s,t,p,k)$ in~$f(p)\cdot O(|V(G^*)|)$~time, where~$f$ is a computable function only depending on parameter~$p$. Since~$|V(G^*)|\leq |V(G)|+p\cdot |E(G)|\leq p\cdot |G|$ and the instances~$(G^*,s,t,p,k)$ and~$(G,s,t,p,k)$ are equivalent for \MSE{}, we can decide instance~$(G,s,t,p,k)$ in~$f(p)\cdot p\cdot O(|G|)$~time, that is, in FPT-time.
\end{proof}

Using the dynamic program from \cref{sec:dp} the running time of the above algorithm amounts to $O(p^2 \cdot (h(p)+4)^{3\cdot p\cdot (h(p)+3) +3}\cdot |G|)$. Using the bound $h(p) \leq 2^{O(p^2)}$~\cite{MarxOR13}, we obtain a running time of $2^{p^3\cdot2^{O(p^2)}}\cdot (n+m)$.

%


\section{No Polynomial Problem Kernel for the Parameter Number of Routes}\label{sec:nopoly}

In the previous section, we showed that \MSEl is fixed-parameter tractable with respect to the number $p$~of routes. It is well known that a problem is fixed-parameter tractable if and only if it admits a problem kernel. Of particular interest is the minimal possible size of a problem kernel. Accordingly, in this section we prove the following lower bound.

\begin{theorem}\label{thm:nopolyker}
\MSEl does not admit a polynomial-size problem kernel with respect to the number $p$ of routes, unless $\text{NP}\subseteq\text{coNP/poly}$.
\end{theorem}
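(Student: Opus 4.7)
The plan is to rule out a polynomial kernel via an OR-cross-composition in the sense of Bodlaender, Jansen, and Kratsch. I would compose $t$ instances of \MSE{}---itself NP-hard by \Cref{theorem!kw2hard}---which all share the same values of $p$ and $k$ (yielding a natural polynomial equivalence relation) into a single \MSE{} instance $(G^*, s^*, t^*, p, k^*)$ with $k^* = k + O(1)$ and the \emph{same} $p$, such that $G^*$ admits a $(p, s^*, t^*)$-routing sharing at most $k^*$ edges if and only if at least one input instance is a yes-instance. Because the output parameter $p$ is polynomial in the size of a single input instance and independent of $t$, such a composition together with the NP-hardness of \MSE{} implies the theorem through the standard framework.

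The construction I envisage introduces new terminals $s^*, t^*$ and attaches the $t$ input graphs in parallel between them through a selector gadget. A natural starting point is to connect $s^*$ to each $s_i$ (and each $t_i$ to $t^*$) through single bottleneck edges $e_i^s, e_i^t$ that every route entering $G_i$ must traverse. If all $p$ routes are routed through one fixed $G_i$, then $e_i^s$ and $e_i^t$ each become shared (cost $2$), while at most $k$ additional shared edges lie inside $G_i$, so $k^* = k + O(1)$ suffices for the easy direction. The hard direction requires any feasible routing in $G^*$ to concentrate in a single $G_i$; otherwise one cannot extract a $(p, s_i, t_i)$-routing from any single input graph.

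The main obstacle is precisely this concentration requirement: nothing in the naive construction prevents spreading the $p$ routes as single paths across many distinct $G_i$'s, since single routes share no edges. To address this I would pre-process each $G_i$ so that its $s_i$-$t_i$ minimum cut equals exactly $p$ and insert a gate edge near $s_i$ used by every $s_i$-$t_i$ path in $G_i$, ensuring that any $G_i$ receiving at least two routes contributes at least one shared gate edge; additionally, I would make the selector charge a shared edge per $G_i$ that is entered at all, for instance by funneling all $t$ source-side attachment edges through a chain traversed by every route in $G^*$. Calibrating $k^*$ to be $k$ plus a small constant then forces every feasible routing to concentrate in a single $G_i$, so the forward direction routes through a yes $G_i$ at cost $k^*$ and the backward direction extracts a $(p, s_i, t_i)$-routing with at most $k$ shared edges in the unique $G_i$ used, completing the cross-composition.
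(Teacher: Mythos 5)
Your framework (OR-cross-composition, boundedness via a shared value of $p$, NP-hardness of the source problem) is the right one, but the construction has a genuine gap exactly at the point you flag as "the main obstacle": the parallel selector cannot force the $p$ routes to concentrate in a single input graph, and the two devices you propose do not repair this. A gate edge near $s_i$ is shared only when \emph{at least two} routes enter $G_i$, so distributing the $p$ routes one per graph over $p$ distinct inputs (always possible, since $p$ is polynomial in a single instance plus $\log \ell$ while the number $\ell$ of composed instances is unbounded) evades every gate and yields a routing with zero shared edges; the composed instance is then a yes-instance regardless of the inputs, breaking the backward direction of the OR. The second device, "charge a shared edge per $G_i$ that is entered at all" via a common chain, is not realizable under the sharing semantics: an edge on a backbone chain is shared according to whether at least two routes traverse \emph{it}, so the chain contributes a count depending on which positions carry two routes, not one unit per entered graph, and no single-edge gadget can detect that a graph is entered by exactly one route. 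Even if concentration could be enforced only approximately (most routes in one $G_i$, a few strays elsewhere), the extraction of a full $(p,s_i,t_i)$-routing for the backward direction would still fail.

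The paper sidesteps all of this by composing in \emph{series} rather than in parallel: the graphs are glued on sinks and sources ($t_{i-1}=s_i$), so every one of the $p$ routes necessarily traverses every $G_i$ and the minimum shared-edge counts simply add up over the edge-disjoint pieces. The price is that a no-instance piece could then contribute arbitrarily many shared edges, which is why the source problem is changed from \MSE{} to \AMSEl: the promise that each instance admits a routing with at most $k+1$ shared edges (NP-hardness of which follows by appending a $(k+1)$-chain from $s$ to $t$) caps each piece's contribution, and the budget $k^*=\ell(k+1)-1$ is then met if and only if at least one piece achieves $k$. If you want to salvage your write-up, the missing idea is precisely this combination of serial gluing with a promise problem bounding the per-instance cost; the parallel selector by itself cannot be made to work.
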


\iflong{}
\begin{figure}
\centering
\begin{tikzpicture}

\def\usc{0.75}

\draw (0.5,0) ellipse (0.5 and 0.25);
\node (s1) at (0,0)[label=90:{$s_1$}, fill, scale=\usc, circle, draw]{};
\node (t1) at (1,0)[label=90:{$t_1$}, fill, scale=\usc, circle, draw]{};

\draw (0.5+2.5,0) ellipse (0.5 and 0.25);
\node (s1) at (0+2.5,0)[label=90:{$s_2$}, fill, scale=\usc, circle, draw]{};
\node (t1) at (1+2.5,0)[label=90:{$t_2$}, fill, scale=\usc, circle, draw]{};

\node (ld) at (6.25-1,0)[]{$\ldots$};
\node (ld) at (6.25+1,0)[]{$\ldots$};

\draw (0.5+9,0) ellipse (0.5 and 0.25);
\node (s1) at (0+9,0)[label=90:{$s_\ell$}, fill, scale=\usc, circle, draw]{};
\node (t1) at (1+9,0)[label=90:{$t_\ell$}, fill, scale=\usc, circle, draw]{};

\def\py{-0.5}

\draw[->] (0.5,-1-\py) to (0.75+0.5,-2-\py);
\draw[->] (0.5+2.5,-1-\py) to (0.75+2.25,-2-\py);
\draw[->] (0.5+6,-1-\py) to (0.75+4.5+1,-2-\py);
\draw[->] (0.5+9,-1-\py) to (0.75+6.25+1.75,-2-\py);

\draw (1.5,-3-\py) ellipse (0.75 and 0.2);
\node (s1) at (0.75,-3-\py)[label=180:{$s_1$}, fill, scale=\usc, circle, draw]{};
\node (t1s2) at (2.25,-3-\py)[label=270:{$t_1=s_2$}, fill, scale=\usc, circle, draw]{};

\draw (3,-3-\py) ellipse (0.75 and 0.2);
\node (t2s3) at (3.75,-3-\py)[label=90:{$t_2=s_3$}, fill, scale=\usc, circle, draw]{};

\draw (4.5,-3-\py) ellipse (0.75 and 0.2);
\node (ld) at (5.5+0.75-0.5,-3-\py)[]{$\ldots$};
\node (ld) at (5.5+0.75+0.5,-3-\py)[]{$\ldots$};

\draw (6.5+2.00,-3-\py) ellipse (0.75 and 0.2);
\node (t2s3) at (5.75+2.00,-3-\py)[label=270:{$t_{\ell-1}=s_\ell$}, fill, scale=\usc, circle, draw]{};
\node (tll) at (7.25+2.00,-3-\py)[label=0:{$t_\ell$}, fill, scale=\usc, circle, draw]{};

\end{tikzpicture}
\caption{OR-cross-composition of $\ell$ instances of \AMSE into one instance of \MSE.}
\label{fig:ORCC}
\end{figure}
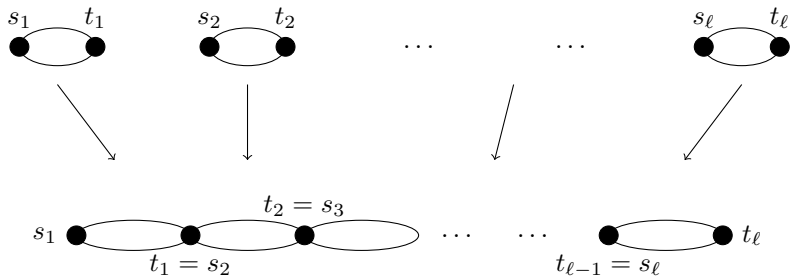
\fi{}
\noindent We prove \cref{thm:nopolyker} via an \emph{OR-cross-composition} \cite{BodlaenderJK14}, that is, given $\ell$ instances of an NP-hard problem~$Q$, all contained in one equivalence class of a polynomial-time computable relation~$\mathcal{R}$ of our choosing, we compute in polynomial-time an instance $(G, s, t, p, k)$ of \MSE such that%
\begin{inparaenum}[(i)]
\item $p$ is bounded by a polynomial function of the size of the largest input instance plus $\log(\ell)$ (\emph{boundedness}), and
\item $(G, s, t, p, k)$ is a yes-instance if and only if one of the input instances is a yes-instance (\emph{correctness}).
\end{inparaenum}
If this is possible, then \MSE{} does not admit a polynomial-size problem kernel with respect to~$p$ unless $\text{NP}\subseteq\text{coNP/poly}$~\cite{BodlaenderJK14}.

It is tempting to use \MSE itself as the problem $Q$, to assume that each of the instances asks for the same number of routes and same number of shared edges by virtue of $\mathcal{R}$, and to OR-cross-compose by simply gluing the graphs in a chain-like fashion on sinks and sources\iflong{} (see \cref{fig:ORCC})\fi{}. This fulfills the boundedness constraint, but not necessarily the correctness constraint, since the instances can share shared edges between them. 
That is, the shared edges in any instance can be as large as the number of edges of the graph in the instance. 
Hence, we use the following problem as the problem~$Q$ instead.
\decprob{\AMSEl (\AMSE)}{An undirected graph $G$, two distinct vertices $s,t\in V(G)$, and two integers $p,k\in \mathbb{N}$ such that $G$ has a $(p,s,t)$-routing with at most $k + 1$ shared edges.
}{Is there a $(p, s, t)$-routing in $G$ with at most $k$ shared edges?}

\begin{proposition}\label{lemma:amsenphard}
\AMSEl is NP-hard.
\end{proposition}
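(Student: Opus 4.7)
The plan is to reduce from a promise variant of \sctsc{} tailored to match the \AMSE promise. Define $\text{SC}^+$ as the promise problem whose inputs are \sctsc{} instances $(X, \mathcal{C}, \ell)$ satisfying the promise that $(X, \mathcal{C})$ admits a cover of size at most $\ell + 1$, and whose question is still whether there is a cover of size at most $\ell$. The claim will be that $\text{SC}^+$ is NP-hard, and that \Cref{theorem!kw2hard}'s construction, applied to an $\text{SC}^+$ instance, produces a valid \AMSE instance.

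First I would show NP-hardness of $\text{SC}^+$ via a polynomial-time Turing reduction from \sctsc{}. Given an \sctsc{} instance $(X, \mathcal{C}, \ell)$ (assume WLOG that $\mathcal{C}$ covers $X$, else the instance is trivially a no-instance), initialize $\ell' := |\mathcal{C}| - 1$; the trivial cover $\mathcal{C}$ of size $|\mathcal{C}| = \ell' + 1$ witnesses the $\text{SC}^+$ promise for $(X, \mathcal{C}, \ell')$. Now repeatedly query an $\text{SC}^+$ oracle on $(X, \mathcal{C}, \ell')$: on a yes answer we learn that the minimum cover has size at most $\ell'$, so the promise is preserved for $\ell' - 1$ and we decrement; on a no answer at some $\ell' \geq \ell$ we conclude that the minimum cover has size exactly $\ell' + 1 > \ell$, so the original instance is a no-instance. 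If all $O(|\mathcal{C}|)$ queries down to $\ell' = \ell$ return yes, the minimum cover has size at most $\ell$ and the original instance is a yes-instance. Hence $\text{SC}^+$ is NP-hard.

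Next I would compose with the reduction from \Cref{theorem!kw2hard}. Given an $\text{SC}^+$ instance $(X, \mathcal{C}, \ell)$, apply that reduction to obtain an \MSE instance $(G, s, t, p, k)$ with $k = \ell$; yes/no preservation is immediate from \Cref{theorem!kw2hard}. For the \AMSE promise, pick any cover $\mathcal{C}'$ of size at most $\ell + 1$ (which exists by the $\text{SC}^+$ promise) and apply the ``$\Rightarrow$'' construction in the proof of \Cref{theorem!kw2hard} to $\mathcal{C}'$: this produces a $(p, s, t)$-routing in $G$ whose shared edges are precisely the edges $\{w_j, t\}$ for $C_j \in \mathcal{C}'$, giving exactly $|\mathcal{C}'| \leq \ell + 1 = k + 1$ shared edges. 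The main subtle point is that the ``$\Rightarrow$'' construction is applied with $|\mathcal{C}'|$ possibly equal to $\ell + 1$ rather than bounded by $\ell$; inspection of the proof confirms that the bound $|\mathcal{C}'| \leq \ell$ enters only in the final count of shared edges, while the structural construction of the $p$ routes from any cover $\mathcal{C}'$ of $X$ goes through unchanged.
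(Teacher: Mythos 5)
Your argument is correct in substance but takes a genuinely different, and more roundabout, route than the paper. The paper reduces directly from \MSE{} to \AMSE{} in a single many-one step: given $(G,s,t,p,k)$, attach one extra $s$-$t$ path of length $k+1$ and ask for $p+1$ routes. Routing all $p+1$ routes over the new path witnesses the promise (exactly $k+1$ shared edges), while that path is too expensive to be shared within budget $k$, so the \AMSE{} question collapses back to the original \MSE{} question. Your route instead goes through a promise version of \sctsc{} and re-uses the construction of \Cref{theorem!kw2hard}; your check that the ``$\Rightarrow$'' direction of that proof tolerates a cover of size $\ell+1$ (the size bound only enters the final count of shared edges) is exactly the right thing to verify, and it does go through, so the composed map is a valid many-one reduction from $\text{SC}^+$ to \AMSE{}. (Minor nit: the shared edges are those $\{w_j,t\}$ with $C_j\in\mathcal{C}'$ actually assigned to some element, so ``at most'' rather than ``exactly'' $|\mathcal{C}'|$ of them are shared; only the upper bound matters.)

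The one substantive caveat is that your hardness of $\text{SC}^+$ is established only by a polynomial-time \emph{Turing} reduction (the descending sequence of oracle queries), so what you obtain is NP-hardness of \AMSE{} under Cook reductions, not under Karp reductions. Read in isolation the proposition may tolerate this, depending on which definition of NP-hardness one adopts, but it matters in context: \Cref{lemma:amsenphard} serves as the source problem of the OR-cross-composition in \Cref{thm:nopolyker}, and that framework requires the source language to be NP-hard under many-one reductions. The paper's direct reduction from \MSE{} avoids the issue entirely, which is the main reason to prefer it; to salvage your route you would need a Karp reduction establishing hardness of $\text{SC}^+$ (or of \AMSE{} directly), which the iterative oracle argument does not provide.
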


\cref{lemma:amsenphard} can be proven via a reduction from \MSE to \AMSE that introduces an additional path of length~$k + 1$ connecting $s$ and $t$.\iflong{} As a technical remark, since any instance of MSE with~$k=0$ is solvable in polynomial time, we assume here and in the following that~$k>0$.\fi{}

\iflong{}
\begin{proof}[Proof of \Cref{lemma:amsenphard}]

Let $(G,s,t,p,k)$ be an instance of \MSE. We describe the construction of an instance of \AMSE given $(G,s,t,p,k)$ and prove their equivalence.

\emph{Construction.}
We add a $P_{k + 2}$ with endpoints~$s$ and~$t$ to $G$ to obtain the graph~$G'$. Note that we can route any number $\geq 2$ of routes over the $P_{k+2}$ from $s$ to $t$ while sharing exactly $(k+1)$ edges. Thus, $(G',s,t,p+1,k+1)$ is a yes-instance of \MSE, and $(G',s,t,p+1,k)$ is an instance of \AMSE. We show that $(G,s,t,p,k)$ is a yes-instance of \MSE if and only if $(G',s,t,p+1,k)$ is a yes-instance of \AMSE.

\emph{Correctness.}
Let $(G,s,t, p ,k)$ be a yes-instance of \MSE. Let $\mathcal{P}$ be a $(p,s,t)$-routing in $G$ sharing at most $k$ edges. Since $G'[V(G)]=G$, $\mathcal{P}$ is also a $(p,s,t)$-routing in $G'$ sharing at most $k$ edges. Since the additional $P_{k+2}$ in $G'$ is not contained in an $s$-$t$~route in $\mathcal{P}$, we can construct an additional $s$-$t$~route $P$ in $G'$ using only this $(k+1)$-chain without sharing an additional edge. Thus, $\mathcal{P}\cup \{P\}$ is a $(p+1,s,t)$-routing in $G'$ sharing at most $k$ edges. That is, $(G',s,t,p+1,k)$ is a yes-instance of \AMSE.

Conversely, let $(G',s,t,p+1,k)$ be a yes-instance of \AMSE. Observe that at most one $s$-$t$~route appears on the $(k+1)$-chain in $G'$. Thus, at least $p$ $s$-$t$~routes share at most $k$ edges in $G'[V(G)]=G$. It follows that $(G,s,t,p,k)$ is a yes-instance of \MSE.
\end{proof}
\fi{}

If we OR-cross-compose $\ell$ instances of \AMSE instead, we know that if the resulting instance has a routing with $\ell(k + 1) - 1$ shared edges, then without loss of generality each of the original instances contributes at most~$k + 1$ shared edges. This means that at least one of the original instances is a yes-instance, giving the correctness of the OR-cross-composition.

\iflong{}
\begin{proof}[Proof of \Cref{thm:nopolyker}]
We describe an OR-cross-composition\iflong{} as sketched in \Cref{fig:ORCC}\fi{}. We cross-compose $\ell$~instances of \AMSE into one instance of \textsc{MSE($p$)}. We define the relation~$\mathcal{R}$ as follows: $(G,s,t,p,k)\equiv_\mathcal{R} (G',s',t',p',k')$ if $p=p'$, and $k=k'$. Obviously, to check whether two instances are equivalent with respect to~$\mathcal{R}$ can be done in constant time. Moreover, in any finite set of instances, the number of equivalence classes with respect to~$\mathcal{R}$ is upper-bounded by the product of the largest $p$ value and largest $k$ value over all instances in the finite set. Thus, $\mathcal{R}$ is a polynomial equivalence relation. 
We cross-compose $\ell$ $\mathcal{R}$-equivalent instances~$(G_i,s_i,t_i,p,k)_{i=1,\ldots,\ell}$ of \AMSE to an instance of \textsc{MSE($p$)} as follows. 
In the following, let~$I_j := (G_j,s_j,t_j,p,k)$ for all $j\in[\ell]$.

\emph{Construction.}
We join the graphs $G_1,\ldots,G_\ell$ in a chain-like fashion, that is, we identify $t_{i-1}$ and $s_i$ for each $i=2,\ldots, \ell$. %
Let $G^*$ be the obtained graph. Let $I^*:=(G^*, s_1,t_\ell,p^*,k^*)$ with $p^*=p$ and $k^*=\ell\cdot (k +1)-1$ be the instance of \MSE. Recall that $p^*$ is the parameter. We claim that $I^*$ is a yes-instance of \MSE if and only if at least one instance $I_i$ is a yes-instance of \AMSE.

\emph{Correctness.}
Let $I^*$ be a yes-instance of \MSE. Observe that each $t_{i-1}=s_i$ for $i=2,\ldots,\ell$ is a 1-separator in~$G^*$. Let $G^*_i$ be the graph induced by all vertices that are ``between'' $s_i$ and $t_i$: these are all vertices that are reachable in $G$ from $s_i$ and $t_i$ without touching the other. Let $S_i\subseteq G^*-\{t_i\}$ with $s_i\in V(S_i)$, that is $S_i$ is the connected component of $G-\{t_i\}$ that contains $s_i$. Analogously, let $T_i\subseteq G^*-\{s_i\}$ with $t_i\in V(T_i)$. We define $G^*_i:= G^*[\{s_i,t_i\}\cup (V(S_i)\cap V(T_i))]$. Observe that $G^*_i$ is isomorphic to $G_i$ for all $i=1,\ldots,\ell$, and $G^*_i$ and $G^*_j$ are edge-disjoint for all $i,j\in [\ell]$ with $i\neq j$. Moreover, $V(G^*)=\bigcup_{1\leq i\leq \ell} V(G^*_i)$ and $E(G^*)=\bigcup_{1\leq i\leq \ell} E(G^*_i)$. 

Since $I^*$ is a yes-instance of \MSE, there are at most $\ell\cdot (k+1)-1$ shared edges in any solution to $I^*$. Suppose one can find at least $(k+1)$ shared edges in each $G^*_i$. Since $G^*_i$ and $G^*_j$ are edge-disjoint for all $i,j\in [\ell]$ with $i\neq j$, it follows that there are at least $\ell\cdot (k+1)>\ell\cdot (k+1)-1 = k^*$ shared edges, contradicting the fact that $I^*$ is a yes-instance. Thus, there exists an index~$j\in[\ell]$ such that there are at most $k$ shared edges in $G^*_j$, or equivalently, $I_j$ is a yes-instance of \AMSE. 

Conversely, let $j\in[\ell]$ such that $I_j$ is a yes-instance of \AMSE. By construction, sharing exactly the same $k$ edges of a solution to $I_j$ allows $p$ $s_j$-$t_j$~routes in $G^*_j$. We know that for each $i\in[\ell]\backslash\{j\}$, instance~$I_i$ of \AMSE allows $p$ $s_i$-$t_i$~routes sharing at most $k+1$ edges. Thus, we can route $p$ $s_1$-$t_\ell$~routes through $G^*$ sharing at most 
\[ k + (\ell-1)\cdot (k+1) = \ell\cdot k + \ell -1 = \ell\cdot (k+1) -1 = k^* \]
edges. Thus, $I^*$ is a yes-instance of \MSE.
\end{proof}
\fi{}

\section{W[1]-hardness with Respect to Treewidth}\label{sec:whard-tw}

In this section, we present the following result.

\begin{theorem}\label{thm:whard-tw}
  \MSEl is W[1]-hard when parameterized by treewidth and the number~$k$ of shared edges combined.
\end{theorem}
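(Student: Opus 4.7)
I would prove W[1]-hardness by a parameterized reduction from \MCCl, which is W[1]-hard parameterized by the number of colors~$k'$. Given a multicolored clique instance on color classes $V_1,\ldots,V_{k'}$ and edge set~$E$, we construct an \MSE instance $(G',s,t,p,k)$ such that both $\tw(G')$ and $k$ are bounded by a function of~$k'$ (while $p$ and $|V(G')|$, not being parameters, may be polynomial in the input size).

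\textbf{Gadget architecture.} The graph $G'$ would be assembled from $k'$ \emph{selection gadgets}, one per color class, and $\binom{k'}{2}$ \emph{edge-verification gadgets}, one per pair $(i,j)$, arranged in a near-chain between~$s$ and~$t$. The selection gadget for $V_i$ offers $|V_i|$ alternative ways of splitting the routes, one per possible choice of a vertex $v_i\in V_i$; each choice ``broadcasts'' $v_i$ by forcing a characteristic flow pattern through a bottleneck. The verification gadget for $(i,j)$ reads the two adjacent broadcasts and admits a routing with the budgeted number of shared edges if and only if $\{v_i,v_j\}\in E$, and otherwise incurs at least one extra shared edge.

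\textbf{Encoding via Sidon sets.} To make the broadcasts ``collision-free'', assign to each $v\in V(G)$ a distinct integer $a_v$ from a \Sids so that all pairwise sums $a_u+a_v$ are distinct. The selection gadget for $V_i$ forces a specific number of routes (a function of $a_{v_i}$) through its bottleneck, and the edge-verification gadget for $(i,j)$ sums the two signals it sees; by the Sidon property, the pair $\{v_i,v_j\}$ is uniquely recoverable from this sum, so the gadget can accept exactly the sums corresponding to edges of $G$. Crucially, the \emph{number} of shared edges per gadget is $O(1)$ -- the information is carried by \emph{which} edges are shared and by flow multiplicities, not by the count of shared edges -- so the budget $k$ stays bounded by $O(k'^2)$. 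Since Sidon sets of $n$ elements sit inside $\{1,\ldots,O(n^2)\}$, the number of routes~$p$ remains polynomial in~$|V(G)|$.

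\textbf{Parameter bounds and main obstacle.} Each gadget will be designed to attach to the rest of $G'$ through only $O(1)$ interface vertices, so the series arrangement yields $\tw(G')=O(k'^2)$, bounded by a function of~$k'$. The delicate part is constructing gadgets that satisfy three simultaneous requirements: (a) \emph{unavoidability} -- no route may bypass the bottleneck that produces the broadcast; (b) \emph{integrity} -- the Sidon sum cannot be forged by splitting and recombining routes (i.e.\ a non-clique selection cannot ``pretend'' to be a clique); and (c) \emph{bounded interface}, needed for the treewidth bound. Making (a)--(c) compatible -- so that the charging argument that matches shared edges to a clique is airtight -- will be the crux of the construction, with the Sidon-set collision avoidance serving as the key combinatorial lever that ties shared-edge counts to real edges of~$G$.
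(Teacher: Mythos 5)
Your high-level architecture coincides with the paper's: a parameterized reduction from \MCCl{} with one selection gadget per color class, one validation gadget per pair of classes, vertex IDs drawn from a \Sids, and the selected vertex ``broadcast'' as a \emph{number of routes} forced through a bottleneck, so that both the budget~$k$ and the treewidth are bounded by functions of the number of colors. So the approach is the right one. However, your write-up stops exactly where the proof becomes nontrivial, and the one idea needed to make your requirement (b) (``integrity'') work is absent. Long parallel bundles (chains of length exceeding the budget) only enforce an \emph{upper} bound on how many routes can pass a gadget without extra sharing; nothing in your sketch prevents a cheating routing from sending \emph{fewer} than $a_{v_i}+a_{v_j}$ routes into the verification gadget for $(i,j)$ and thereby fitting through a bundle sized for a smaller sum that corresponds to a different (possibly non-adjacent) pair. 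The paper closes this hole by additionally broadcasting the \emph{complement} ID $\overline{g(v)}:=M-g(v)$ of each selected vertex into a twin hub $\overline{c_ic_j}$ of the validation gadget: since $g(v)+\overline{g(v)}=M$ is fixed and every route must reach $t$, an upper bound on the complement flow yields the missing lower bound on the direct flow, and only when both bounds are simultaneously tight at a single edge-vertex $x^i_yx^j_z$ does the routing stay within budget.

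A second, smaller omission: the bare Sidon property is not quite sufficient, because $O(n^2)$ ``stray'' routes (from the auxiliary $(|E_{i,j}|-1)$-bundle and from the $n_i+n_j-2$ unselected vertices that each still carry one route) perturb the counts arriving at the validation gadget. The paper therefore rescales the \Sids{} so that distinct IDs and distinct pairwise sums differ by at least $n^3$, which absorbs this slack while keeping $p$ polynomial. Until you specify a lower-bound mechanism and a separation margin of this kind, the charging argument that matches the shared-edge budget to an actual clique cannot be completed, so as it stands the proposal is a correct plan with a genuine gap at its declared crux rather than a proof.
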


To prove~\Cref{thm:whard-tw}, we give a parameterized reduction from the following problem. Herein, $\dot\cup$ denotes the disjoint union of sets.

\decprob{\MCCl (\MCC)}{An undirected, $k$-partite graph $G=(V=V_1\dot\cup\ldots\dot\cup V_k,E)$ with $k\in \mathbb{N}$.}{Is there a set $C\subseteq V$ of vertices such that $G[C]$ is a $k$-clique in $G$?}

\noindent \MCC is W[1]-complete when parameterized by~$k$~\cite{FellowsHRV09}. In the remainder of the section $(G, k)$ is an arbitrary but fixed instance of \MCC. We denote $|V_i|=:n_i$ and $V_i=:\{v^i_1,\ldots,v^i_{n_i}\}$ for all $i\in[k]$. We also say that $G$ has the \emph{color classes} $1,\ldots,k$, where each color class~$i$ is represented by the vertices in $V_i$. We write $E_{i,j}:=\{\{v,w\}\in E\mid v\in V_i, w\in V_j\}$ for the edges connecting vertices in $V_i$ and $V_j$, $i,j\in[k]$.

\looseness=-1 
The reduction is based on the following idea. The routes we are to allocate will be split evenly into contingents of routes for each color class by a simple gadget. For each of the color classes, we introduce a selection gadget, that contains vertices (\emph{outputs}) that correspond to the vertices in the \MCC instance. Each selection gadget will route almost all the routes in its contingent to exactly one of its outputs. The outputs will then disperse $(k-1)$-times a number of routes corresponding to the ID of the vertex that this output represents. In this way, the selection gadgets represent a choice of vertices, one for each color class. In order to verify that the choice represents a clique, we introduce validation gadgets, corresponding to the pairs of color classes. They will receive the routes from the outputs of the selection gadgets, that is, the ``input'' of the validation gadgets is a sum of two IDs. They induce a small number of shared edges only if the vertices according to the number of routes are connected. In order to achieve this, we ensure that the sum of two IDs uniquely identifies the vertices. We achieve this by using \Sidss{}.

\subparagraph*{Vertex IDs based on~\Sidss.}

A \emph{\Sids} is a set $S \subseteq \mathbb{N}$ that fulfills that for each $i,j,k,\ell\in S$ holds that if $i+k=j+\ell$ then $\{i, k\} = \{j, \ell\}$. That is, the sum of any two distinct elements in $S$ is unique. A \Sids~$S$ with $\max_{i\in S} i \in O(|S|^3)$ can be constructed on $O(|S|)$~time~\cite[page~42]{Dim02}. As mentioned, we use a \Sids to distinguish numbers of routes corresponding to vertices. For this purpose, we fix a \Sids~$S$ with $|S| = |V|$ and assign to each vertex~$v \in V$ an \emph{ID}~$g(v) \in S$ where $g$ is a bijection. For technical reasons, we need the following additional properties of $g$ (and~$S$):
\begin{compactenum}[(i)]
 \item $g(v)\geq n^3$ for all $v\in V$,
 \item $|g(v)-g(w)|\geq n^3$ for all $v,w\in V$, $v\neq w$, and 
 \item $|(g(v)+g(w))-(g(x)+g(y))|\geq n^3$ for all $v,w,x,y\in V$, $v\neq w$, $y\not\in \{v,w,x\}$.\label[prop]{ss:sumdiff}
\end{compactenum}
Clearly, by adding one to each integer in the \Sids~$S$ and then multiplying each integer by~$n^3$ we obtain a \Sids and a mapping $g$ that fulfill all of the above properties simultaneously. 

To enforce that only adjacent vertices are chosen in the selection gadgets, a part in a validation gadget that represents an edge must have the property that, if many routes are routed through it, then the number of routes corresponds to precisely the sum of IDs of the endpoints of the edge that is represented by this part. To do this, we have to enforce both upper and lower bounds on the sum of IDs. Upper bounds will be enforced by long parallel paths; for lower bounds, we use the notion of ``complement'' of an ID. For this, we define $\overline{g(v)}:=M-g(v)$ for all $v\in V$, where $M:=n^3+\max_{v\in V}g(v)$. Note that $g(v)+g(w)<g(x)+g(y)$ if and only if $\overline{g(v)}+\overline{g(w)}>\overline{g(x)}+\overline{g(y)}$ for $v,w,x,y\in V$.

\paragraph*{Construction.}

In the following, we describe the construction of the instance~$(G',s,t,p,k')$ of \MSE, given instance~$(G,k)$ of \MCC. Initially, $G'$ consists only of the two vertices~$s$ and~$t$, the source and the sink vertex, respectively. We describe the gadgets we use and their interconnections, which will fully describe the construction of $G'$. As mentioned, our gadgetry consists of two gadget types, selection gadgets on the one hand and validation gadgets on the other hand.

Before we proceed, we fix the following notation. An \emph{$m$-chain} is a $P_{m+1}$, i.e.\ a path of length $m$. A set of $\ell$ $m$-chains with common endpoints we call an \emph{$(\ell,m)$-bundle}. An  \emph{$(q,\ell,m)$-feather} is obtained by identifying one endpoint of an $(\ell,m)$-bundle with one endpoint of a $q$-chain.\iflong{} We also call the $q$-chain of the feather the \emph{$q$-shaft} and we call the $m$-chains the \emph{barbs} of the feather. Refer to \Cref{fig:subgraphs} for an illustration.\fi{} In the following, by attaching a chain, bundle, or feather~$H$ to a vertex~$v$, we mean to identify~$v$ with an endpoint of~$H$.
\iflong{}
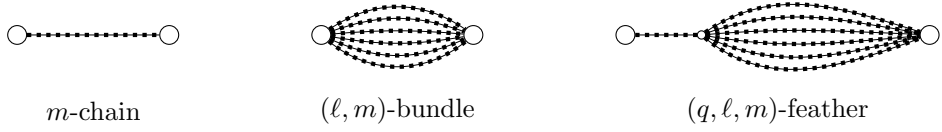
\begin{figure}[t]
\centering
  \begin{tikzpicture}

  \node (a) at (0,0)[scale=0.75,circle,draw]{};
  \node (b) at (2,0)[scale=0.75,circle,draw]{};
  \draw[very thin] (a) to (b);
  \draw[dotted, ultra thick] (a) to (b);

  \node (tx1) at (1,-1)[]{$m$-chain};

  \node (a) at (4,0)[scale=0.75,circle,draw]{};
  \node (b) at (6,0)[scale=0.75,circle,draw]{};

  \foreach \x in {-40,-25,...,40}{
  \draw[very thin] (a) to [out=\x,in=180-\x](b);
  \draw[dotted, ultra thick] (a) to [out=\x,in=180-\x](b);
  }

  \node (tx1) at (5,-1)[]{$(\ell,m)$-bundle};

  \node (s) at(8,0)[scale=0.75,circle, draw]{};
  \node (a) at (9,0)[scale=0.33,circle, draw]{};
  \node (b) at (12,0)[scale=0.75,circle,draw]{};

  \draw[very thin] (s)  to (a);
  \draw[dotted, ultra thick] (s)  to (a);

  \foreach \x in {-40,-25,...,40}{
  \draw[very thin] (a) to [out=\x,in=180-0.5*\x](b);
  \draw[dotted, ultra thick] (a) to [out=\x,in=180-0.5*\x](b);
  }

  \node (tx1) at (10,-1)[]{$(q,\ell,m)$-feather};

  \end{tikzpicture}
\caption{Illustration of a chain, bundle and feather.}\label{fig:subgraphs}
\end{figure}
\fi

\medskip We set the number of paths \ifshort{}$p=\left(|E|-\binom{k}{2}\right) + k\cdot \left((k-1)\cdot M+1\right) + n$\fi{}\iflong{}\[p=\left(|E|-\binom{k}{2}\right) + k\cdot \left((k-1)\cdot M+1\right) + n\]\fi{} and the number of shared edges\iflong{} (in the following also denoted by \emph{the budget}) \[k'=k\cdot k^{10}+ k\cdot(k+2(k-1))\cdot k^5 + \binom{k}{2}\cdot 3k.\]\fi{}\ifshort{} $k'=k\cdot k^{10}+ k\cdot(k+2(k-1))\cdot k^5 + \binom{k}{2}\cdot 3k$ .\fi{}

\subparagraph*{Selection gadgets.}
For each color class $i\in[k]$ in the instance $(G,k)$, we construct a selection gadget~\fbox{$i$} that selects exactly one vertex of~$V_i$ as follows.%
\iflong{} In \Cref{fig:choosegadget}, we illustrate an example of a selection gadget~\fbox{$i$} for color class~$i$. 
\begin{figure}[t]
\centering
\begin{tikzpicture}[scale=.95,x=0.8cm, y=0.95cm]

\node (C1) at (-1,0)[circle, label=90:{$c_i$}, draw]{};

\node (x11) at (2,2)[circle, label=90:{$x_1^i$}, draw]{};
\node (x12) at (2,-0.5)[circle, label=90:{$x_2^i$}, draw]{};
\node (x1m) at (2,-3)[circle, label=90:{$x_{n_i}^i$}, draw]{};

\node (x1i) at (2,-1.5)[]{$\vdots$};

\node (x1mi) at (3,-3)[]{$\ldots$};
\node (x1mi) at (4,-3)[]{$\ldots$};

\draw (C1) to node [above, color=red] {$k^{10}$} (x11);
\draw[dotted, very thick] (C1) to (x11);

\draw (C1) to (x12);
\draw[dotted, very thick] (C1) to (x12);

\draw (C1) to (x1m);
\draw[dotted, very thick] (C1) to (x1m);

\node (x111) at (4,2)[circle, label=90:{$x_{1,1}^i$}, draw]{};
\node (x112) at (6,2)[circle, draw]{};
\node (x11i) at (7.5,2)[]{$\ldots$};
\node (x11k) at (9,2)[circle, label=90:{$x_{1,k}^i$}, draw]{};

\draw (x11) to node [above, color=red] {$k^5$} (x111) ;
\draw (x111) to node [above, color=red] {$k^5$}(x112);
\draw (x112) to node [above, color=red] {$k^5$}(x11i);
\draw (x11i) to node [above, color=red] {$k^5$}(x11k);

\draw[dotted, very thick] (x11) to (x111);
\draw[dotted, very thick] (x111) to (x112);
\draw[dotted, very thick] (x112) to (x11i);
\draw[dotted, very thick] (x11i) to (x11k);

\node (x111+1) at (6,4)[circle, scale=1/2, draw]{};
\draw (x111) to node [above, color=red] {$k^5$} (x111+1);
\draw[dotted, very thick] (x111) to (x111+1);

\node (C1C2) at (12,3)[draw]{$ i,1$};

\node (C1C3) at (14,1)[color=gray, draw]{$i,2$};

\draw (x111+1) to [out=90, in=110](C1C2);
\draw (x111+1) to [out=70, in=115](C1C2);
\draw (x111+1) to [out=50, in=120](C1C2);

\draw[dotted, very thick] (x111+1) to [out=90, in=110](C1C2);
\draw[dotted, very thick] (x111+1) to [out=70, in=115](C1C2);
\draw[dotted, very thick] (x111+1) to [out=50, in=120](C1C2);

\node (x111-1) at (6,1)[circle, scale=1/2, draw]{};
\draw (x111) to node [above, color=red] {$k^5$}(x111-1);
\draw[dotted, very thick] (x111) to (x111-1);

\draw (x111-1) to [out=20, in=-90](C1C2);
\draw (x111-1) to [out=-0, in=-85](C1C2);
\draw (x111-1) to [out=-20, in=-80](C1C2);

\draw[dotted, very thick] (x111-1) to [out=20, in=-90](C1C2);
\draw[dotted, very thick] (x111-1) to [out=0, in=-85](C1C2);
\draw[dotted, very thick] (x111-1) to [out=-20, in=-80](C1C2);

\node (t) at (14,-2)[circle, label={$t$}, draw]{};
\draw (x11k) to (t);
\draw[dotted, very thick] (x11k) to (t);
\draw (x11k) to [out=0](t);
\draw[dotted, very thick] (x11k) to [out=0](t);

\node (s) at (-3,0)[circle, label={$s$}, draw]{};

\foreach \x in {-40,-35,...,40} {
\draw (s) to [out=\x, in=180-\x](C1);
\draw[dotted, very thick] (s) to [out=\x, in=180-\x](C1);
}

\node (x121) at (4,-0.5)[circle, draw]{};
\node (x122) at (6,-0.5)[circle, draw]{};
\node (x12i) at (7.5,-0.5)[]{$\ldots$};
\node (x12k) at (9,-0.5)[circle, draw]{};

\draw (x12) to (x121);
\draw (x121) to (x122);
\draw (x122) to (x12i);
\draw (x12i) to (x12k);
\draw[dotted, very thick] (x12) to (x121) ;
\draw[dotted, very thick] (x121) to (x122);
\draw[dotted, very thick] (x122) to (x12i);
\draw[dotted, very thick] (x12i) to (x12k);

\node (x121+) at (4.5,0.5)[scale=0.5,circle,color=lightgray,draw]{};\
\draw[color=lightgray, very thin] (x121) to (x121+);
\draw[dotted, thick, color=lightgray] (x121) to (x121+);
\foreach \x in {-20,-10,...,20} {
\draw[color=lightgray, very thin] (x121+) to [out=60+\x, in=210+\x](C1C2);
\draw[dotted, thick, color=lightgray] (x121+) to [out=60+\x, in=210+\x](C1C2);
}

\node (x121-) at (4.5,-1.5)[scale=0.5,circle,color=lightgray,draw]{};\
\draw[color=lightgray, very thin] (x121) to (x121-);
\draw[dotted, thick, color=lightgray] (x121) to (x121-);
\foreach \x in {-20,-10,...,10} {
\draw[color=lightgray, very thin] (x121-) to [out=290+\x, in=300+\x](C1C2);
\draw[dotted, thick, color=lightgray] (x121-) to [out=290+\x, in=300+\x](C1C2);
}

\node (x112+) at (6.5,3)[scale=0.5,circle,color=lightgray,draw]{};\
\draw[color=lightgray, very thin] (x112) to (x112+);
\draw[dotted, thick, color=lightgray] (x112) to (x112+);
\foreach \x in {-20,-10,...,20} {
\draw[color=lightgray, very thin] (x112+) to [out=60+\x, in=150+\x](C1C3);
\draw[dotted, thick, color=lightgray] (x112+) to [out=60+\x, in=150+\x](C1C3);
}

\node (x112-) at (6.5,1.5)[scale=0.5,circle,color=lightgray,draw]{};\
\draw[color=lightgray, very thin] (x112) to (x112-);
\draw[dotted, thick, color=lightgray] (x112) to (x112-);
\foreach \x in {-20,-10,...,10} {
\draw[color=lightgray, very thin] (x112-) to [out=290+\x, in=300+\x](C1C3);
\draw[dotted, thick, color=lightgray] (x112-) to [out=290+\x, in=300+\x](C1C3);
}

\node[text width = 4cm] (tx1) at (6,6){$g(x_1^i)$ paths};
\draw[->] (tx1) -- (6.2,5.2);

\node[text width = 4cm] (tx1) at (9,-3){$\overline{g(x_1^i)}$ paths};
\draw[->] (tx1) -- (10,0.5);

\node[text width = 3.2cm] (tx3) at (-1,3){$(k-1)\cdot M+n_i+1$ paths};
\draw[->] (tx3) -- (-2,0.75);
\end{tikzpicture}
\caption[]{Example for a selection gadget~\fbox{$i$} ($i>2$).}\label{fig:choosegadget}
\end{figure}
\fi{}%
We introduce vertex~$c_i$ corresponding to color class~$i$ in~$(G,k)$. We connect~$s$ with~$c_i$ via a~$((k-1)\cdot M+ n_i+1,k'+1)$-bundle. Each of the chains in the bundle will be in exactly one route later. We introduce the vertices~$x^i_1,\ldots, x^i_{n_i}$ in~$G'$, corresponding to the vertices~$v^i_1,\ldots, v^i_{n_i}\in V_i$, and we connect~$c_i$ to each of them by a $k^{10}$-chain. These vertices serve as hubs for the routes later; only one of them will carry almost all routes in any solution, representing the choice of a vertex into the clique.

In order to relay this choice to all the validation gadgets, we do the following. First, we attach a $k$-chain to each vertex $x^i_j$, $1\leq j \leq n_i$. Let $x^i_{j,1},\ldots,x^i_{j,k}$ denote the vertices on the chain attached to $x^i_j$, indexed by the distance on the chain to vertex~$x^i_j$; each vertex except~$x^i_{j,k}$ will make its own connection to the validation gadgets. We connect each~$x^i_{j,\ell}$, $\ell\in[k-1]$, with the vertex~$c_{i}c_{\ell'}$ in the validation gadget~\fbox{$i,\ell'$} (introduced below), where $\ell' = \ell$ if $\ell < i$ and $\ell' = \ell + 1$ otherwise. The connection is made by attaching a $(k^5,g(v^i_j),k'+1)$-feather to $x^i_{j,\ell}$ and $c_ic_{\ell'}$. 
Furthermore, to relay also the complement IDs, we connect each~$x^i_{j,\ell}$, $\ell\in[k-1]$, with the vertex $\overline{c_{i}c_{\ell'}}$ in the validation gadget~\fbox{$i,\ell'$} by attaching a $(k^5,\overline{g(v^i_j)},k'+1)$-feather to them. We $k^5$-subdivide each edge on the $k$-chain we attached to $x^i_j$, that is, we replace each edge by a $k^5$-chain. 
We apply this to all paths attached to~$x^i_1,\ldots,x^i_{n_i}$. 
This will ensure that in each color class, only the ``ID relay vertices''~$x^i_{j,\ell}$ corresponding to one ID will carry more than one route. Note that the only differences between the ID relay vertices are the second entries of the feathers, which depend on the corresponding values of the \Sids. Finally, we connect vertex $x^i_{j,k}$ with~$t$ via a $(2,k'+1)$-bundle; this vertex ensures that each $k^5$-chain between two vertices $x^i_{j,\ell}$ corresponding to the chosen ID is shared.

\subparagraph*{Validation gadgets.}
We need to check that the chosen vertices are adjacent using only their IDs. For this we encode the sums of IDs corresponding to two adjacent vertices into a bundle which has to be passed by the routes relayed from the selection gadgets. The budget will not allow to share any of the paths in this bundle. In this way, any sum of IDs has to be below a certain threshold. To get a lower bound, we also introduce bundles for sums of complement IDs of adjacent vertices. Finally, we ensure that an ``ID'' bundle and its ``complement ID'' bundle can be used simultaneously, only if they correspond to the same pair of vertices.

We now describe the construction of a validation gadget~\fbox{$i,j$}, $i,j\in[k]$, $i < j$\ifshort{}. \else{}, illustrated in \Cref{fig:valgadget}. %
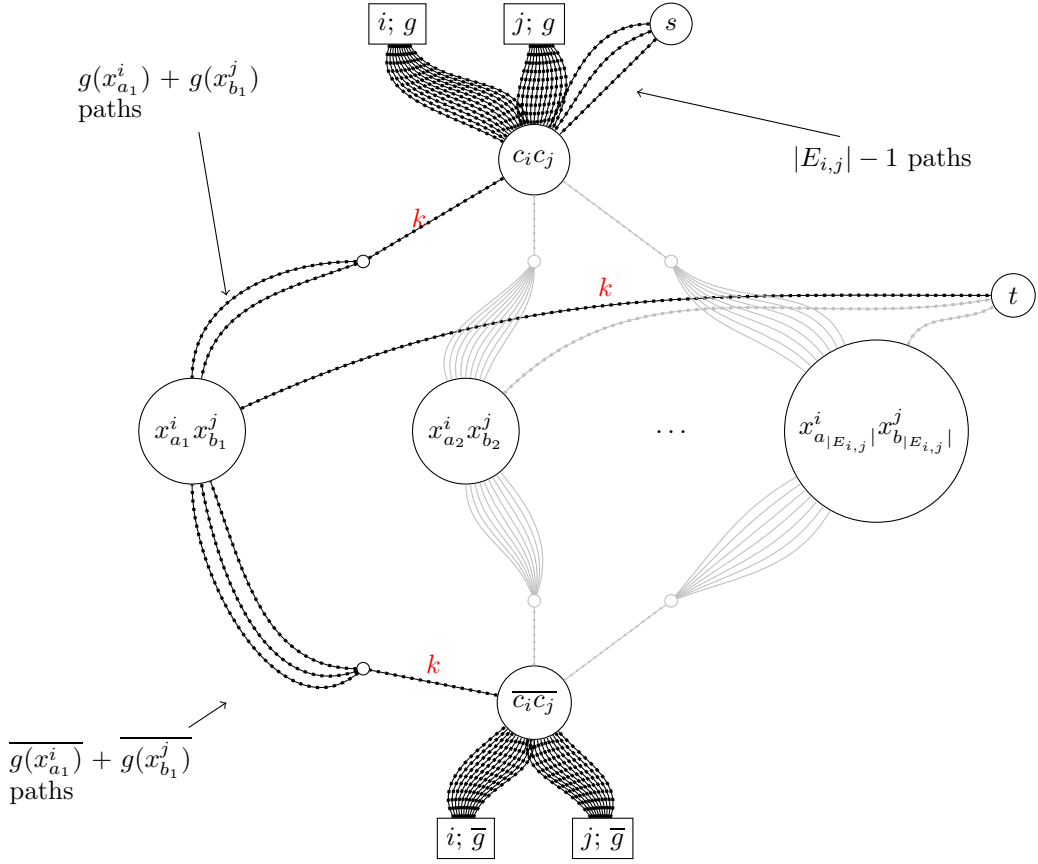
\begin{figure}[t]
\centering
\begin{tikzpicture}[scale=.9]

\node (c1c2) at (0,0)[circle,draw]{$c_ic_j$};

\node (c1c2q) at (0,-8)[circle,draw]{$\overline{c_ic_j}$};

\node (C1) at (-2,2)[draw]{$i$; $g$};
\node (C2) at (-0,2)[draw]{$j$; $g$};
\foreach \i in {60,65,...,110}{
\draw (C1) to [out=190+\i, in=210-\i](c1c2);
\draw[dotted, very thick] (C1) to [out=190+\i, in=210-\i](c1c2);
}
\foreach \i in {70,75,...,120}{
\draw (C2) to [out=190+\i, in=180-\i](c1c2);
\draw[dotted, very thick] (C2) to [out=190+\i, in=180-\i](c1c2);
}

\node (C1) at (-1,-8-2)[draw]{$i$; $\overline{g}$};
\node (C2) at (1,-8-2)[draw]{$j$; $\overline{g}$};
\foreach \i in {70,75,...,120}{
\draw (C1) to [out=\i, in=270+70-\i](c1c2q);
\draw[dotted, very thick] (C1) to [out=\i, in=270+70-\i](c1c2q);
}
\foreach \i in {70,75,...,120}{
\draw (C2) to [out=\i, in=360+20-\i](c1c2q);
\draw[dotted, very thick] (C2) to [out=\i, in=360+20-\i](c1c2q);
}

\node (x11x21) at (-5,-4)[circle, draw]{$x_{a_1}^i x_{b_1}^j$};
\node (x11x22) at (-1,-4)[circle, draw]{$x_{a_2}^i x_{b_2}^j$};
\node (xxdots) at (2,-4)[]{$\ldots$};
\node (x1ix2j) at (5,-4)[circle, draw]{$x_{a_{|E_{i,j}}|}^ix_{b_{|E_{i,j}}|}^j$};

\node (c1c2x11x21_2) at (-2.5,-1.5)[circle, scale=1/2, draw]{};
\draw (c1c2) to node [left, color=red] {$k$}(c1c2x11x21_2);
\draw[dotted, very thick] (c1c2) to (c1c2x11x21_2);

\draw (c1c2x11x21_2) to [out=180, in=90](x11x21);
\draw[dotted, very thick] (c1c2x11x21_2) to [out=180, in=90](x11x21);
\draw (c1c2x11x21_2) to [out=200, in=80](x11x21);
\draw[dotted, very thick] (c1c2x11x21_2) to [out=200, in=80](x11x21);

\node (c1c2qx11x21_2) at (-2.5,-10+2.5)[circle, scale=1/2, draw]{};
\draw (c1c2q) to node [above, color=red] {$k$}(c1c2qx11x21_2);
\draw[dotted, very thick] (c1c2q) to (c1c2qx11x21_2);

\draw (c1c2qx11x21_2) to [out=220, in=-90](x11x21);
\draw[dotted, very thick] (c1c2qx11x21_2) to [out=220, in=-90](x11x21);
\draw (c1c2qx11x21_2) to [out=200, in=-80](x11x21);
\draw[dotted, very thick] (c1c2qx11x21_2) to [out=200, in=-80](x11x21);
\draw (c1c2qx11x21_2) to [out=180, in=-70](x11x21);
\draw[dotted, very thick] (c1c2qx11x21_2) to [out=180, in=-70](x11x21);

\node (t) at (7,-2)[circle, draw]{$t$};

\draw (x11x21) to [out=25,in=180]node [above, color=red]{$k$}(t);
\draw[dotted, very thick] (x11x21) to [out=25,in=180](t);

\node (s) at (2,2)[circle, draw]{$s$};

\draw (s) to (c1c2);
\draw[dotted, very thick] (s) to (c1c2);
\draw (s) to [out=180, in=65](c1c2);
\draw[dotted, very thick] (s) to [out=180, in=65](c1c2);
\draw (s) to [out=200, in=55](c1c2);
\draw[dotted, very thick] (s) to [out=200, in=55](c1c2);

\node (x11x22+) at (0,-1.5)[scale=0.5,circle,color=lightgray, draw]{};
\draw[color=lightgray] (c1c2) to (x11x22+);
\draw[color=lightgray, dotted, thick] (c1c2) to (x11x22+);
\node (x11x22-) at (0,-6.5)[scale=0.5,circle,color=lightgray,draw]{};
\draw[color=lightgray] (c1c2q) to (x11x22-);
\draw[color=lightgray, dotted, thick] (c1c2q) to (x11x22-);
\foreach \x in {70,75,...,110} {
\draw[color=lightgray] (x11x22+) to [out=160+\x, in=180-\x](x11x22);
\draw[color=lightgray] (x11x22-) to [in=200+\x, out=180-\x](x11x22);
}

\node (x1ix2j+) at (2,-1.5)[scale=0.5,circle,color=lightgray,draw]{};
\draw[color=lightgray] (c1c2) to (x1ix2j+);
\draw[color=lightgray, dotted, thick] (c1c2) to (x1ix2j+);
\node (x1ix2j-) at (2,-6.5)[scale=0.5,circle,color=lightgray,draw]{};
\draw[color=lightgray] (c1c2q) to (x1ix2j-);
\draw[color=lightgray, dotted, thick] (c1c2q) to (x1ix2j-);
\foreach \x in {20,25,...,50} {
\draw[color=lightgray] (x1ix2j+) to [out=270+\x, in=160-\x](x1ix2j);
\draw[color=lightgray] (x1ix2j-) to [in=260-\x, out=\x](x1ix2j);
}

\draw[color=lightgray] (x11x22) to  [out=45, in=190](t);
\draw[color=lightgray, dotted, very thick] (x11x22) to [out=45, in=190](t);

\draw[color=lightgray] (x1ix2j) to  [out=70, in=210](t);
\draw[color=lightgray, dotted, very thick] (x1ix2j) to [out=70, in=210](t);

\node[text width=3cm] (t1) at (-5,1) {$g(x_{a_1}^i)+g(x_{b_1}^j)$ paths};
\draw[->] (t1) -- (-4.5,-2);

\node[text width=3cm] (tx2) at (-6,-9) {$\overline{g(x_{a_1}^i)}+\overline{g(x_{b_1}^j)}$ paths};
\draw[->] (tx2) -- (-4.5,-8);

\node[text width=4cm] (tx3) at (6,0) {$|E_{i,j}|-1$ paths};
\draw[->] (tx3) -- (1.5,1);

\end{tikzpicture}
\caption[]{Example for a validation gadget~\fbox{$i,j$}.}\label{fig:valgadget}
\end{figure}%
\fi{}%
We introduce exactly two vertices $c_{i}c_j$ and $\overline{c_{i}c_j}$ (recall that these vertices already appeared in the description of the selection gadgets). We introduce a vertex for each edge between $V_i$ and $V_j$, that is, if $\{v^i_y,v^j_z\}\in E_{i,j}$, then we introduce the vertex $x^i_yx^j_z$ in $G'$. We connect each~$x^i_yx^j_z$ to $c_ic_j$ by attaching a $(k,g(v^i_y)+g(v^j_z),k'+1)$-feather, we connect $x^i_yx^j_z$ to $\overline{c_ic_j}$ by attaching a $(k,\overline{g(v^i_y)}+\overline{g(v^j_z)},k'+1)$-feather, and we connect $x^i_yx^j_z$ to the sink vertex~$t$ by attaching a $k$-chain. Only one of the connections to the sink will carry more than one route; hence, it will be possible to use only one pair of complementary bundles (corresponding to a pair of adjacent vertices). 

For technical reasons, we need that each pair of bundles carries at least one route; this is achieved by also connecting $s$ with $c_ic_j$ via an $(|E_{i,j}|-1,k'+1)$-bundle.

\iflong{}
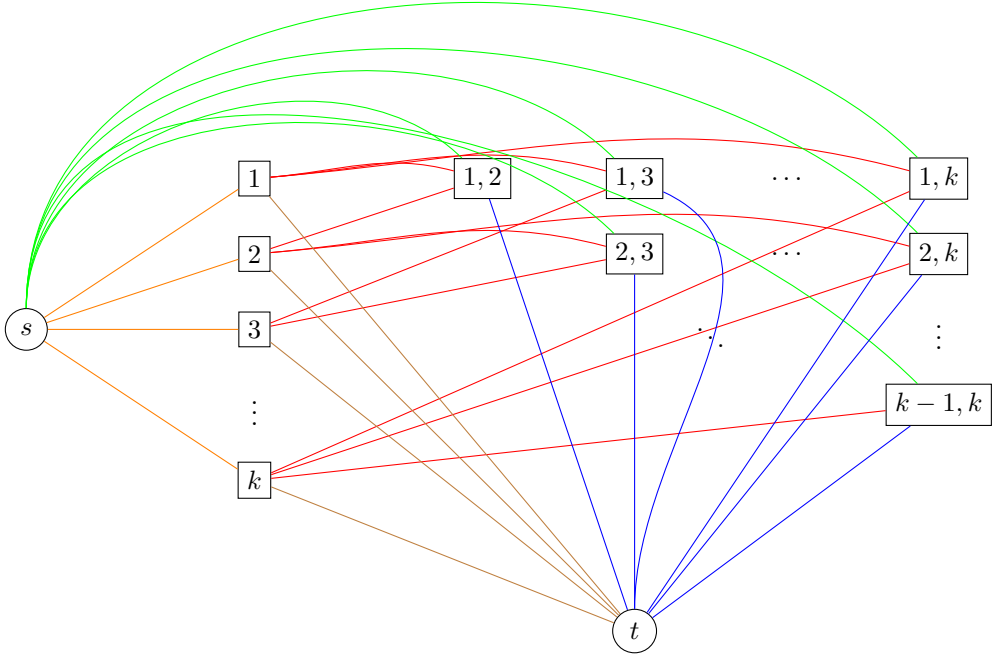
\begin{figure}[t]
\centering
\begin{tikzpicture}

\node (s) at (-1,0)[circle, draw]{$s$};

\node (t) at (7,-4)[circle, draw]{$t$};

\node (c1) at (2,2)[draw]{$1$};
\node (c2) at (2,1)[draw]{$2$};
\node (c3) at (2,0)[draw]{$3$};
\node (ci) at (2,-1)[]{$\vdots$};
\node (ck) at (2,-2)[draw]{$k$};

\node (c1c2) at (5,2)[draw]{$1,2$};
\node (c1c3) at (7,2)[draw]{$1,3$};
\node (c1ci) at (9,2)[]{$\ldots$};
\node (c1ck) at (11,2)[draw]{$1,k$};

\node (c2c3) at (7,1)[draw]{$2,3$};
\node (c2ci) at (9,1)[]{$\ldots$};
\node (c2ck) at (11,1)[draw]{$2,k$};

\node (cicip1) at (8,0)[]{$\ddots$};
\node (cick) at (11,0)[]{$\vdots$};

\node (ckm1ck) at (11,-1)[draw]{$k-1,k$};

\draw[color=red] (c1) to [out=5, in=165](c1c2);
\draw[color=red] (c1) to [out=5, in=165](c1c3);
\draw[color=red] (c1) to [out=5, in=165](c1ck);

\draw[color=red] (c2) to (c1c2);
\draw[color=red] (c3) to (c1c3);
\draw[color=red] (ck) to (c1ck);

\draw[color=red] (c2) to [out=5, in=165](c2c3);
\draw[color=red] (c2) to [out=5, in=165](c2ck);

\draw[color=red] (c3) to (c2c3);
\draw[color=red] (ck) to (c2ck);

\draw[color=red] (ck) to (ckm1ck);

\draw[color=orange] (s) to (c1);
\draw[color=orange] (s) to (c2);
\draw[color=orange] (s) to (c3);
\draw[color=orange] (s) to (ck);

\draw[color=green] (s) to [out=90](c1c2);
\draw[color=green] (s) to [out=90](c1c3);
\draw[color=green] (s) to [out=90](c1ck);
\draw[color=green] (s) to [out=90](c2c3);
\draw[color=green] (s) to [out=90](c2ck);
\draw[color=green] (s) to [out=90](ckm1ck);

\draw[color=brown] (c1) to (t);
\draw[color=brown] (c2) to (t);
\draw[color=brown] (c3) to (t);
\draw[color=brown] (ck) to (t);

\draw[color=blue] (c1c2) to (t);
\draw[color=blue] (c1c3) to [out=-25, in=90](t);
\draw[color=blue] (c1ck) to (t);
\draw[color=blue] (c2c3) to (t);
\draw[color=blue] (c2ck) to (t);
\draw[color=blue] (ckm1ck) to (t);

\end{tikzpicture}
\caption{High-level construction.}\label{fig:highlevel}
\end{figure}

In \Cref{fig:highlevel}, we give an overview of the interlinkage between the selection gadgets, the validations gadgets and $s$ and $t$.\fi{} 

\ifshort{}
\medskip
The correctness proof is deferred to the full version.
\else{}
\paragraph*{Correctness.}
\subparagraph*{\MCC \yes $\Rightarrow$ \MSE \yes.} Suppose that~$(G,k)$ is a yes-instance of \MCC, that is, $G$ contains a $k$-vertex clique with vertex set~$W$. We show that we can construct a $p$-routing in $G'$ such that at most~$k'$~edges are shared.

We construct the routes in parallel, where each one starts from $s$, traverses the selection gadgets and then the validation gadgets. In each step, we only give a partial description of the routes up to this point. The description is then successively completed.

First, we route one route over each $(k'+1)$-chain incident with $s$. In this way, each $c_i$ is incident with exactly $(k - 1) \cdot M + n_i + 1$ routes, and each $c_ic_j$ is incident with exactly $|E_{i, j}| - 1$ routes.

Next, we describe the routes within each selection gadget~\fbox{$i$}. From $c_i$ we route one route to each $x^i_\ell$. All the remaining $(k - 1) \cdot M + 1$ routes incident with~$c_i$ are routed to the vertex $x^i_{\ell'}$ corresponding to the vertex in $V_i \cap W$. So far, this induces $k \cdot k^{10}$ shared edges overall. The routes incident with $x^i_{\ell'}$ continue as follows. Note that the barbs in the feathers incident with the $x^i_{\ell', j}$, $j \in [k - 1]$, and the bundle incident with $x^i_{\ell', k}$ count exactly $2 + \sum_{j = 1}^{k - 1}g(v^i_{\ell'}) + \overline{g(v^i_{\ell'})} = 2 + (k - 1) \cdot M$ chains. This is also the number of routes incident with $x^i_{\ell'}$. We route one route incident with $x^i_{\ell'}$ over each of these chains. This induces $(k + 2(k - 1)) \cdot k^5$ further shared edges in each selection gadget. For each $x^i_\ell$, $\ell \neq \ell'$, we route the single route incident with this vertex to $t$ via $x^i_{\ell, k}$; no edge is shared in this way. Overall, we have used $k \cdot k^{10} + k \cdot (k + 2(k - 1)) \cdot k^5$ shared edges so far.

Finally, we describe the routes within each validation gadget~\fbox{$i, j$}. Note that, in the way we have defined the routes so far, each $c_ic_j$ is incident with $|E_{i, j}| - 1 + g(w_i) + g(w_j)$ routes, where $w_i$ is the vertex in $W \cap V_i$ and $w_j$ the vertex in $W \cap V_j$. Since $w_i$ and $w_j$ are adjacent, there is a $(k, g(w_i) + g(w_j), k' + 1)$-feather connecting $c_ic_j$ with some vertex~$x^i_yx^j_z$ that represents the edge $\{w_i, w_j\}$. We route $g(w_i) + g(w_j)$ routes from $c_ic_j$ to $x^i_yx^j_z$ and then to $t$, introducing $2k$ further shared edges. The remaining $|E_{i, j}| - 1$ routes incident with $c_ic_j$ are routed via the remaining, unused feathers and then to $t$, without introducing more shared edges. Similarly, $\overline{g(w_i)} + \overline{g(w_j)}$ routes are incident with $\overline{c_ic_j}$ and there is a $(k, \overline{g(w_i)} + \overline{g(w_j)}, k' + 1)$-feather connecting $\overline{c_ic_j}$ to $x^i_yx^j_z$. We route each of the routes incident with $\overline{c_ic_j}$ over this feather to $x^i_yx^j_z$ and then to~$t$, introducing $k$ more shared edges (only the shaft of the feather between $\overline{c_ic_j}$ and $x^i_yx^j_z$ is additionally shared). Doing this for all validation gadgets, we introduce $\binom{k}{2} \cdot 3k$ further shared edges. 

All $p$ routes connect~$s$ and~$t$, and exactly $k \cdot k^{10} + k \cdot (k + 2(k - 1)) \cdot k^5 + \binom{k}{2} \cdot 3k = k'$~edges are shared. Hence, we have demonstrated that a suitable $p$-routing exists.

\subparagraph*{\MSE \yes $\Rightarrow$ \MCC \yes.} Suppose that there is a $p$-routing \P\ with $k'$ shared edges in $G'$. We consider the routes in \P\ as directed from $s$ to~$t$. Observe that $s$ is incident with exactly $p$ $(k'+1)$-chains and thus, each $(k'+1)$-chain is in exactly one route. This leads to the following \cref{inv:1}.

\begin{observation}\label{inv:1}
  Each $(k'+1)$-chain incident with $s$ is in exactly one route in \P. Each vertex $c_i$, $i\in [k]$, appears in at least $(k-1)\cdot M + n_i + 1$ routes and each vertex $c_ic_j$, $1\leq i<j\leq k$, appears in at least $|E_{i,j}|-1$ routes.
\end{observation}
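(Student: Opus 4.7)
The plan is a straightforward combination of a budget argument and a counting argument, and I do not anticipate a real obstacle here, only bookkeeping.

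First I would argue that no $(k'+1)$-chain incident with~$s$ can carry two or more routes: if some chain were used by at least two routes, then all $k'+1$ of its edges would be shared, already exceeding the total budget of~$k'$. Hence every such chain is traversed by at most one route.

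Second, I would count the $(k'+1)$-chains incident with~$s$ in the construction. Inspecting the description of the selection and validation gadgets, every edge incident with~$s$ belongs either to one of the $((k-1)M + n_i + 1, k'+1)$-bundles joining~$s$ to some~$c_i$ or to one of the $(|E_{i,j}|-1, k'+1)$-bundles joining~$s$ to some~$c_ic_j$. Summing the number of chains,
\[
\sum_{i=1}^{k}\bigl((k-1)M + n_i + 1\bigr) + \sum_{1\leq i<j\leq k}\bigl(|E_{i,j}|-1\bigr) = k(k-1)M + k + n + |E| - \binom{k}{2} = p.
\]
Each route in~$\mathcal{P}$ begins at~$s$ and must leave~$s$ via an edge of some $(k'+1)$-chain; this induces a map from the $p$ routes to the $(k'+1)$-chains incident with~$s$. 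By the previous paragraph this map is injective, and by the count above its codomain also has size~$p$, so it is a bijection. Consequently, each $(k'+1)$-chain incident with~$s$ is in exactly one route, proving the first assertion.

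The remaining two assertions are immediate corollaries. Each of the $(k-1)M + n_i + 1$ chains joining~$s$ to~$c_i$ carries one (distinct) route, and every such route must pass through~$c_i$ on its way to~$t$; hence at least $(k-1)M + n_i + 1$ routes contain~$c_i$. An identical argument applied to the bundle from~$s$ to~$c_ic_j$ shows that at least $|E_{i,j}|-1$ routes contain~$c_ic_j$, completing the proof.
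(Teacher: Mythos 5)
Your proof is correct and follows essentially the same route as the paper, which simply observes that $s$ is incident with exactly $p$ chains of length $k'+1$ and concludes that each carries exactly one route; you merely make explicit the budget argument (a doubly-used chain would contribute $k'+1>k'$ shared edges) and the arithmetic verifying that the chain count equals $p$. Nothing further is needed.
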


Each $c_i$ is connected to $n_i$ vertices corresponding to the vertices in $V_i$ via $k^{10}$-chains. By \cref{inv:1}, at each $c_i$, $(k-1)\cdot M + n_i + 1$ routes in~\P\ are distributed over $n_i$~vertices. Thus, at least one of the $k^{10}$-chains is shared by at least two routes in for each selection gadget~\fbox{$i$}. Since our budget~$k'$ does not allow for $(k + 1) \cdot k^{10}$ shared edges, also exactly one $k^{10}$-chain is shared in each selection gadget. Hence, in each selection gadget, there is exactly one vertex~$x^i_{\ell_i}$ that is incident with at least two routes in \P. Denote by $W$ the set of vertices in $G$ that these~$x^i_{\ell_i}$ correspond to. Clearly, $W$ is of size~$k$. We claim furthermore that $W$ is a clique.

To show the claim, we use the following. 
\begin{observation}\label{obs2}
  For each $i \in [k]$, vertex $x^i_{\ell_i}$ is incident with exactly $(k-1)\cdot M + 2$ routes in \P. Moreover, each of these routes traverses first $x^i_{\ell_i}$ and then exactly one barb in a feather incident with any $x^i_{\ell_i, j}$, $j \in [k - 1]$.
\end{observation}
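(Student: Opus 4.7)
I would prove \cref{obs2} via a flow-conservation argument that exploits the tight budget established in the preceding paragraphs, matching in-capacity at $c_i$ against the out-capacity of everything downstream from $x^i_{\ell_i}$.

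For the route count at $x^i_{\ell_i}$ I would match inflow and outflow at $c_i$. By \cref{inv:1}, each of the $(k-1)M+n_i+1$ $(k'+1)$-chains from $s$ to $c_i$ carries exactly one route, so the inflow at $c_i$ is $(k-1)M+n_i+1$. The preceding budget argument shows that exactly one of the $k^{10}$-chains incident with $c_i$ is shared (a second such sharing would add another $k^{10}$ shared edges, exceeding the selection-gadget portion $k\cdot k^{10}+k(k+2(k-1))k^5$ of the budget), namely the one to $x^i_{\ell_i}$; the other $n_i-1$ chains are unshared and hence carry at most one route each. On the downstream side, the only exit from $x^i_{\ell_i}$ into the rest of $G'$ is the $k^5$-chain to $x^i_{\ell_i,1}$, from where a route can leave only via one of the feather-barbs (each a $(k'+1)$-chain participating in at most one route, and totalling $g(v^i_{\ell_i})+\overline{g(v^i_{\ell_i})}=M$ per position $x^i_{\ell_i,j}$, $j\in[k-1]$) or through the $(2,k'+1)$-bundle at $x^i_{\ell_i,k}$. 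Hence at most $(k-1)M+2$ routes can traverse $x^i_{\ell_i}$. Flow conservation at $c_i$ then forces equality throughout: each of the $n_i-1$ unshared chains carries exactly one route and the shared chain carries exactly $(k-1)M+n_i+1-(n_i-1)=(k-1)M+2$ routes, all passing through $x^i_{\ell_i}$.

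For the structural claim, I would trace these routes downstream through the subdivided $k$-chain and apply a telescoping argument. Let $f_j$ denote the number of routes on the $k^5$-chain between $x^i_{\ell_i,j}$ and $x^i_{\ell_i,j+1}$, with $x^i_{\ell_i,0}:=x^i_{\ell_i}$. Since each route is a simple $s$-$t$ path, at each $x^i_{\ell_i,j}$ with $j\in[k-1]$ a route either continues along the chain or departs into exactly one barb of one of the two attached feathers. Flow conservation yields $f_0=(k-1)M+2$, $f_{j-1}-f_j\leq M$ for every $j\in[k-1]$, and $f_{k-1}\leq 2$. Telescoping gives $f_0-f_{k-1}\leq (k-1)M$, which combined with $f_0=(k-1)M+2$ and $f_{k-1}\leq 2$ forces all inequalities to be tight: $f_{k-1}=2$ and $f_{j-1}-f_j=M$ for every $j$. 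Consequently every barb is used and each route departing from the chain traverses exactly one barb in a feather incident with some $x^i_{\ell_i,j}$, $j\in[k-1]$, as claimed.

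The main obstacle is the saturation step, i.e., lifting the chain of capacity inequalities to exact equalities. This hinges on a clean flow-like accounting: each $(k'+1)$-chain carries at most one route, since otherwise all $k'+1>k'$ of its edges would be shared, exceeding the global budget. With this in hand, the in- and out-capacities at $x^i_{\ell_i}$ coincide exactly at the value $(k-1)M+2$ designed into the construction, and all intermediate counts are pinned down.
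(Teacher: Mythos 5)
Your proof is correct and follows essentially the same approach as the paper: it matches the forced inflow of $(k-1)\cdot M+2$ routes through the shared $k^{10}$-chain (inflow at $c_i$ minus the at most $n_i-1$ routes on unshared chains) against the downstream exit capacity of $(k-1)\cdot M+2$ many $(k'+1)$-chains (the barbs at the $x^i_{\ell_i,j}$ plus the two bundle chains at $x^i_{\ell_i,k}$), each of which can carry at most one route. Your telescoping argument along the subdivided $k$-chain is just a more granular bookkeeping of the paper's single count over the set $\mathcal{C}_i$ of all these chains, and yields the same tightness conclusion.
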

To see this, denote by $\mathcal{C}_i$ the set containing the barbs in the feathers incident with the $x^i_{\ell_i, j}$, $j \in [k - 1]$, and the chains in the bundle incident with $x^i_{\ell_i, k}$. Note that
\[|\mathcal{C}_i| = 2 + \sum_{j = 1}^{k - 1}g(v^i_{\ell_i}) + \overline{g(v^i_{\ell_i})} = 2 + (k - 1) \cdot M.\]
Any route that contains~$x^i_{\ell_i}$ either first traverses $x^i_{\ell_i}$ and then a chain in $\mathcal{C}_i$ or vice versa, because the selection gadgets are trees except for the feathers they contain. Since each of these chains in $\mathcal{C}_i$ has length~$k' + 1$, no chain can carry two routes. Thus, $x^i_{\ell_i}$ is incident with at most $2 + (k - 1) \cdot M$~routes. Since each selection gadget gets at least $(k-1)\cdot M + n_i + 1$ routes via $c_i$, of which at most $n_i - 1$ can avoid $x^i_{\ell_i}$, each $x^i_{\ell_i}$ gets also exactly $2 + (k-1)\cdot M$ routes. Thus, indeed \cref{obs2} holds.

\cref{obs2} implies that each shaft of a feather incident with some $x^i_{\ell_i, j}$, $j \in [k - 1]$ is shared, and each $k^5$-chain connecting two $x^i_{\ell_i, j}$, $j \in [k - 1]$, is also shared. The shared edges within the selection gadgets thus amount to at least $k\cdot k^{10} + k \cdot (k + 2(k - 1)) \cdot k^5$, leaving a budget of at most $\binom{k}{2} \cdot 3k$.

Let $w_i$ be the vertex in $W$ corresponding to $x^i_{\ell_i}$, $i \in [k]$ (remember that $x^i_{\ell_i}$ is the vertex in selection gadget~\fbox{$i$} that carries at least two routes of \P). By \cref{obs2} each $c_ic_j$ appears in at least $g(w_i) + g(w_j)$ routes in \P\ and by \cref{inv:1}, $c_ic_j$ appears in $|E_{i,j}| - 1$ further routes in \P. We claim that out of these $g(w_i) + g(w_j) + |E_{i,j}| - 1$ routes, at most $n_i + n_j - 2$ routes traverse $c_ic_j$ and then, later on, traverse some vertex in a selection gadget. Call such routes \emph{unbehaved}. To see the claim, observe that each unbehaved route has to traverse a feather incident with some $x^i_{\ell,j}$ or $x^j_{\ell,i}$. However, it cannot traverse the feathers incident with $x^i_{\ell_i,j}$ or $x^j_{\ell_j,i}$ because that would mean that one of their barbs were shared by \cref{obs2}. Furthermore, at most $n_i + n_j - 2$ unbehaved routes can traverse a feather incident with some $x^i_{\ell, o}$, $\ell \neq \ell_i$, as otherwise a shaft of one of these feathers would be shared. This contradicts our remaining budget of $\binom{k}{2} \cdot 3k$. Thus, each $c_ic_j$ has at least $g(w_i) + g(w_j) + |E_{i,j}| - 1 - (n_i + n_j - 2)$ behaved routes. Denote their number by $r_{i,j}$ and observe that each behaved route containing $c_ic_j$ traverses only vertices of the validation gadget~\fbox{$i, j$}. 

By the same arguments as above, the number~$\overline{r_{i,j}}$ of routes that contain $\overline{c_ic_j}$ and then traverse only vertices of the validation gadget~\fbox{$i,j$} is at least $\overline{g(w_i)} + \overline{g(w_j)} - (n_i + n_j - 2)$. Note that, hence, at least $r_{i, j} + \overline{r_{i,j}}$ routes go from $c_ic_j$ and $\overline{c_ic_j}$ to $t$ \emph{within} the validation gadget~\fbox{$i, j$}.

By the properties of the mapping $g$, it holds that $r_{i, j}, \overline{r_{i,j}}, r_{i, j} + \overline{r_{i,j}}>n^2$. Since there are strictly less than $n^2$ edges in $E_{i,j}$, at least two $k$-shafts and at least one $k$-chain is thus shared in the validation gadget~\fbox{$i,j$}. Together with the remaining budget of $\binom{k}{2} \cdot 3k$, this observation implies that there are exactly two $k$-shafts and exactly one $k$-chain shared in the validation gadget~\fbox{$i,j$}. Hence, there is exactly one vertex corresponding to an edge in $E_{i,j}$ that appears in at least two routes. Denote this vertex by~$x^i_{y_i}x^j_{z_j}$.

Denote by $r'_{i,j}$ the number of routes that traverse $c_ic_j$ and then directly use the feather incident with~$x^i_{y_i}x^j_{z_j}$ and denote by $\overline{r'_{i,j}}$ the number of routes that traverse $\overline{c_ic_j}$ and then directly use the feather incident with~$x^i_{y_i}x^j_{z_j}$. Since $x^i_{y_i}x^j_{z_j}$ is incident with the only feathers in validation gadget \fbox{$i,j$} whose shafts are shared, $r'_{i, j} \geq r_{i ,j} - |E_{i, j}| + 1$. Similarly, $\overline{r'_{i,j}} \geq \overline{r_{i,j}} - |E_{i, j}| + 1$.

Since none of the barbs in the validation gadgets are shared, we have $r'_{i, j} \leq g(w'_i) + g(w'_j)$ and $\overline{r'_{i, j}} \leq \overline{g(w'_i)} + \overline{g(w'_j)}$ for two adjacent vertices $w_i'\in V_i$ and $w_j'\in V_j$. That is,
\begin{align*}
  g(w_i)+g(w_j) - (n_i+n_j-2) &\ \leq\  r'_{i,j} \ \leq\ g(w_i')+g(w_j'), \text{ and} \\
 \overline{g(w_i)}+\overline{g(w_j)}  - ( (n_i+n_j-2) + |E_{i,j}|-1) &\ \leq\ \overline{r'_{i,j}} \ \leq\ \overline{g(w_i')}+\overline{g(w_j')}, \text{ implying that}\\
 g(w'_i)+g(w'_j)  - ((n_i+n_j-2) + |E_{i,j}|-1) &\ \leq\  g(w_i)+g(w_j).
\end{align*}
We have $((n_i + n_j - 2) + |E_{i,j}| - 1) \leq n^2 + 2n < n^3$ for each $n > 2$ and hence, \[|(g(w_i') + g(w_j')) - (g(w_i) + g(w_j))| < n^3.\] Thus, by \cref{ss:sumdiff} of $g$ we have $\{w_i, w_j\} = \{w_i', w_j'\}$ implying that $\{w_i,w_j\}\in E_{i,j}$. It follows that there is an edge in~$G$ between any pair of vertices in $W$, and $W$ contains $k$ vertices of each color class. Thus, $G[W]$ is a $k$-vertex clique.
\fi{}

\paragraph*{Upper-Bound on the Treewidth.}
To construct a tree decomposition of small width, we start out with a single bag~$A$, where $
 A := \{s\} \cup \{t\} \cup \{c_i\mid i\in [k]\} \cup \{c_ic_j\mid 1\leq i<j\leq k\} \cup \{\overline{c_ic_j}\mid 1\leq i<j\leq k\}.$ Note that $|A|=2 + k + 2\binom{k}{2}$. Since all gadgets are interconnected via only vertices from the set~$A$, in order to construct a tree decomposition for $G'$, we can build a tree decomposition~$\mathbb{T}'$ of each gadget separately, then add $A$ to each of its bags, and then attach $\mathbb{T}'$ to the bag~$A$ we started with. Observe that each chain, bundle, and feather is a series-parallel graph. Since each gadget allows a tree-like structure\iflong{} (cf.\ \Cref{fig:treelikestrucs})\fi{} where each edge corresponds to a series-parallel graph and each leaf is contained in~$A$, we can find a tree decomposition of width at most $4$ for each gadget. Hence, the treewidth of the graph~$G'$ as constructed above is upper-bounded by $2\binom{k}{2}+ k+2+4$.
\iflong{}
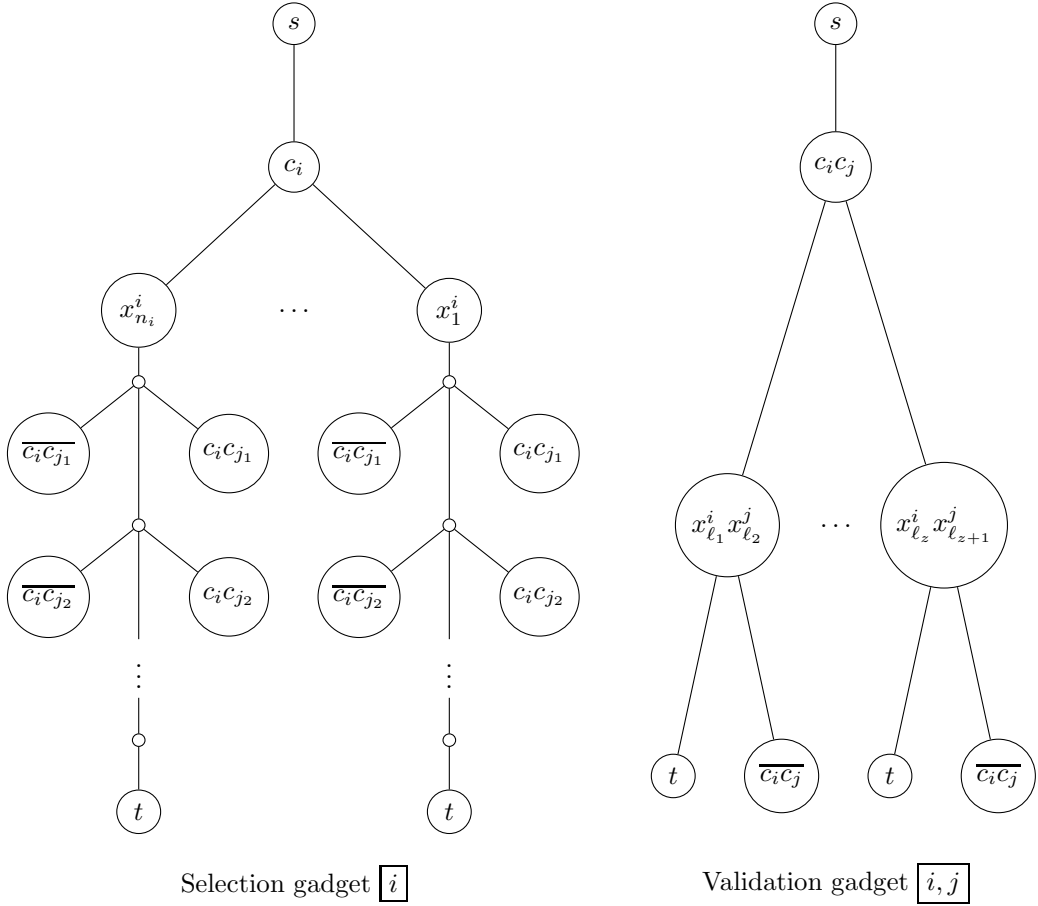
\begin{figure}[!t]
\centering
\begin{tikzpicture}[scale=.95]

\def\xwto{0.15}

\node (s) at (0,2)[circle,draw]{$s$};
\node (ci) at (0,0)[circle,draw]{$c_i$};
\draw (s) to (ci);
\node (xi1) at (2+\xwto,-2)[circle,draw]{$x^i_1$};
\node (xij) at (0,-2)[]{$\ldots$};
\node (xink) at (-2-\xwto,-2)[circle,draw]{$x^i_{n_i}$};

\draw (ci) to (xi1);
\draw (ci) to (xink);

\node  (a1) at (2+\xwto,-3)[scale=0.5, circle,draw]{};
\draw (xi1) to (a1);
\node  (a2) at (2+\xwto,-5)[scale=0.5, circle,draw]{};
\draw (a1) to (a2);
\node  (ai) at (2+\xwto,-7)[]{$\vdots$};
\draw (a2) to (ai);
\node  (a3) at (2+\xwto,-8)[scale=0.5, circle,draw]{};
\draw (ai) to (a3);
\node (t) at (2+\xwto,-9)[circle, draw]{$t$};
\draw (a3) to (t);

\node (cicj1) at (3.25+\xwto,-4)[circle,draw]{$c_ic_{j_1}$};
\draw (a1) to (cicj1);
\node (cicj1) at (0.75+\xwto,-4)[circle, draw]{$\overline{c_ic_{j_1}}$};
\draw (a1) to (cicj1);

\node (cicj2) at (3.25+\xwto,-6)[circle,draw]{$c_ic_{j_2}$};
\draw (a2) to (cicj2);
\node (cicj2) at (0.75+\xwto,-6)[circle, draw]{$\overline{c_ic_{j_2}}$};
\draw (a2) to (cicj2);

\node  (a1) at (-2-\xwto,-3)[scale=0.5, circle,draw]{};
\draw (xink) to (a1);
\node  (a2) at (-2-\xwto,-5)[scale=0.5, circle,draw]{};
\draw (a1) to (a2);
\node  (ai) at (-2-\xwto,-7)[]{$\vdots$};
\draw (a2) to (ai);
\node  (a3) at (-2-\xwto,-8)[scale=0.5, circle,draw]{};
\draw (ai) to (a3);
\node (t) at (-2-\xwto,-9)[circle, draw]{$t$};
\draw (a3) to (t);

\node (cicj1) at (-2-\xwto+1.25,-4)[circle,draw]{$c_ic_{j_1}$};
\draw (a1) to (cicj1);
\node (cicj1) at (-2-\xwto-1.25,-4)[circle, draw]{$\overline{c_ic_{j_1}}$};
\draw (a1) to (cicj1);

\node (cicj2) at (-2-\xwto+1.25,-6)[circle,draw]{$c_ic_{j_2}$};
\draw (a2) to (cicj2);
\node (cicj2) at (-2-\xwto-1.25,-6)[circle, draw]{$\overline{c_ic_{j_2}}$};
\draw (a2) to (cicj2);

\node (tx1) at(0,-10)[]{Selection gadget \fbox{$i$}};

\def\x{7.5};
\def\xwts{1.5};

\node (s) at (\x,2)[circle,draw]{$s$};
\node (cicj) at (0+\x,0)[circle,draw]{$c_ic_j$};
\draw (s) to (cicj);

\node (xi1xj2) at (0+\x-\xwts,-5)[circle,draw]{$x^i_{\ell_1}x^j_{\ell_2}$};
\draw (cicj) to (xi1xj2);
\node (cicjq) at  (0+\x-\xwts+0.75,-8.5)[circle,draw]{$\overline{c_ic_j}$};
\draw (xi1xj2) to (cicjq);
\node (t) at  (0+\x-\xwts-0.75,-8.5)[circle,draw]{$t$};
\draw (xi1xj2) to (t);

\node (xi3xj4) at (0+\x+\xwts,-5)[circle,draw]{$x^i_{\ell_z}x^j_{\ell_{z+1}}$};
\draw (cicj) to (xi3xj4);
\node (cicjq) at  (0+\x+\xwts+0.75,-8.5)[circle,draw]{$\overline{c_ic_j}$};
\draw (xi3xj4) to (cicjq);
\node (t) at  (0+\x+\xwts-0.75,-8.5)[circle,draw]{$t$};
\draw (xi3xj4) to (t);

\node (ldts) at (\x, -5)[]{$\ldots$};
\node (tx1) at(\x,-10)[]{Validation gadget \fbox{$i,j$}};

\end{tikzpicture}
\caption[]{Tree-like structures of the selection gadget~\fbox{$i$} and the validation gadget~\fbox{$i,j$}.}\label{fig:treelikestrucs}
\end{figure}
\fi{}
%
%


\section{Conclusion} \label{sec:concl}
\MSEl{} (\MSE) is a fundamental NP-hard network routing problem. 
We focused on exact solutions for the case of undirected, general graphs and 
provided several classification results concerning the parameterized
complexity of \MSE. 

It is fair to say that our fixed-parameter tractability results 
(based on tree decompositions and the treewidth reduction 
technique~\cite{MarxOR13}) are still far from practical relevance.
Our studies indicated, however, that 
\MSE is a natural candidate for performing a wider 
multivariate complexity analysis~\cite{FJR13,Nie10} as well 
as studying restrictions to special graph classes.
For instance, there is a simple search tree algorithm 
solving \MSE in $O((p-1)^k\cdot (m+n)^2)$ time which might 
be useful in some applications~\cite{Flu15}.
Moreover, it can be shown that on unbounded undirected grids (without holes),
due to combinatorial arguments, \MSE 
can be decided in constant time after reading the input~\cite{Flu15}.
In contrast, \MSE remains NP-hard when 
restricted to planar graphs of maximum degree four (which might be of particular relevance when 
studying street networks), and to directed planar graphs of maximum out- and indegree three~\cite{FluschnikSorge16}.
We consider it as interesting whether the running times of known FPT-algorithms (\Cref{sec:dp},\cite{AokiHHIKZ14,YeLLZ13}) for \MSEl
can be improved for the problem restricted to planar graphs.


In the known (pseudo) polynomial-time algorithms for graphs of bounded treewidth the exponents in the running time depend exponentially on the treewidth~\cite{AokiHHIKZ14,YeLLZ13}. It would be interesting to know whether a polynomial dependence is achievable.

\looseness=-1 A further line of future work is to study
closely related problems and natural variants of \MSE.
For instance, can the positive results
be transferred to the more general \textsc{Minimum Vulnerability} problem~\cite{AssadiENYZ12}
(see the introductory section)? 
There are also some preliminary investigations concerning the problem
\textsc{Short Minimum Shared Edges} (with  an additional 
upper bound on the maximum length of a route)~\cite{Flu15}. Finally,
it is natural to study ``time-sharing'' aspects for the shared edges,
yielding a further natural variant of \MSE.

Recently, Gutin et al.~\cite{GutinJW15} proved that the Mixed Chinese Postman Problem is W[1]-hard with respect to the treewidth of input graph, but fixed-parameter tractable with respect to the tree-depth of the input graph. 
Since we showed that \MSE is W[1]-hard with respect to the treewidth of the input graph, we consider tree-depth as an interesting parameter for a parameterized complexity analysis. 

%


\bibliographystyle{plain}
\bibliography{mse_long}


\end{document}